\definecolor{redish}{rgb}{0.9, 0.17, 0.31}
\definecolor{fuchs}{rgb}{0.57, 0.36, 0.51}
\theoremstyle{definition}
\newtheorem{theorem}{Theorem}
\newtheorem{corollary}{Corollary}
\newtheorem{proposition}{Proposition}
\newtheorem{conjecture}{Conjecture}
\newtheorem{definition}{Definition}
\newtheorem{example}{Example}
\newtheorem{notation}{Notation}
\newtheorem{remark}{Remark}
\newtheorem{lemma}{Lemma}
\newtheorem{prob}{Problem}
\newcommand{\F}{\mathbb{F}}
\newcommand{\E}{\mathbb{E}}
\newcommand{\mC}{\mathcal{C}}
\newcommand{\mS}{\mathcal{S}}
\newcommand{\mA}{\mathcal{A}}
\newcommand{\mB}{\mathcal{B}}
\newcommand{\mR}{\mathcal{R}}
\newcommand{\mG}{\mathcal{G}}
\newcommand{\mH}{\mathcal{H}}
\newcommand{\mP}{\mathcal{P}}
\newcommand{\supp}{\operatorname{supp}}
\newcommand{\tmax}{T_{\max}}
\newcommand\qbin[3]{\left[\begin{matrix} #1 \\ #2 \end{matrix}\right]_{#3}}
\newcommand\binomi[2]{\left(\begin{matrix} #1 \\ #2 \end{matrix} \right)}
\newtheorem{claim}{Claim}
\newcommand\rk{\textnormal{rk}}
\def\BibTeX{{\rm B\kern-.05em{\sc i\kern-.025em b}\kern-.08em
    T\kern-.1667em\lower.7ex\hbox{E}\kern-.125emX}}
\newcommand{\pcomment}[1]{{\color{purple}#1}}
\newcommand{\vcomment}[1]{{\color{violet}#1}}
\newcommand{\ey}[1]{{{\footnotesize [\pcomment{#1}\;\;\vcomment{--Eitan}]}}}
\title[A Combinatorial Perspective on Random Access Efficiency for DNA Storage]{\textbf{A Combinatorial Perspective on \\ Random Access Efficiency for DNA Storage}}
\author{Anina Gruica$^1$}
\address{$^1$Technical University of Denmark, Lyngby, Denmark.}
\thanks{$^1$A. G. is supported by the Dutch Research Council through grant OCENW.KLEIN.539 and by the Villum Fonden through grant VIL”52303”.}
\email{$^1$anigr@dtu.dk}
\author{Daniella Bar-Lev$^2$}
\address{$^2$Technion -- Israel Institute of Technology, Haifa, Israel.}
\thanks{$^2$The research of D. B. and E. Y. is funded by the European Union (ERC, DNAStorage, 101045114). Views and opinions expressed are however those of the authors only and do not necessarily reflect those of the European Union or the European Research Council Executive Agency. Neither the European Union nor the granting authority can be held responsible for them. The work of D. B. and E. Y. is also funded in part by NSF under Grant CCF2212437.}
\email{$^2$ \{daniellalev,yaakobi\}@cs.technion.ac.il}
\author{Alberto Ravagnani$^3$}
\address{$^3$Eindhoven University of Technology, Eindhoven, the Netherlands.}
\email{$^3$a.ravagnani@tue.nl}
\thanks{$^3$A. R. is supported by the Dutch Research Council through grants VI.Vidi.203.045 and OCENW.KLEIN.539.}
\author{Eitan Yaakobi$^2$}
\thanks{Part of this work was presented and published at the IEEE International Symposium on Information Theory (ISIT), Athens, Greece, 2024~\cite{gruica2024reducing}.}
\begin{document}
\maketitle

\begin{abstract}
%THIS PAPER IS ELIGIBLE FOR THE STUDENT PAPER AWARD. 
We investigate the fundamental limits of the recently proposed \emph{random access coverage depth problem} for DNA data storage. Under this paradigm, it is assumed that the user information consists of $k$ information strands, which are encoded into $n$ strands via a generator matrix $G$. During the sequencing process, the strands are read uniformly at random, as each strand is available in a large number of copies. In this context, the random access coverage depth problem refers to the expected number of reads (i.e., sequenced strands) required to decode a specific information strand requested by the user. This problem heavily depends on the generator matrix $G$, and besides computing the expectation for different choices of $G$, the goal is to construct matrices that minimize the maximum expectation over all possible requested information strands, denoted by $\tmax(G)$.

In this paper, we introduce new techniques to investigate the random access coverage depth problem, capturing its combinatorial nature and identifying the structural properties of generator matrices that are advantageous. We establish two general formulas to determine $\tmax(G)$ for arbitrary generator matrices. The first formula depends on the linear dependencies between columns of $G$, whereas the second formula takes into account recovery sets and their intersection structure. We also introduce the concept of \emph{recovery balanced codes} and provide three sufficient conditions for a code to be recovery balanced. These conditions can be used to compute $\tmax(G)$ for various families of codes, such as MDS, simplex, Hamming, and binary Reed-Muller codes. Additionally, we study the performance of modified systematic MDS and simplex matrices, showing that the best results for $\tmax(G)$ are achieved with a specific combination of encoded strands and replication of the information strands.
%with respect to the random access coverage depth problem. Our results show that the best performance is achieved with a specific mix of encoded strands and replication of the information strands. %

%We establish two general formulas for the maximum expectation time until an encoded strand is recovered. We introduce the concept of a recovery balanced code and combine all these results and notions to compute said expectation for MDS, simplex, and Hamming codes. We also study the performance of modified MDS matrices with respect to the random access problem. Our results show that the best performance is achieved with a specific mix of encoded strands and replication. The paper introduces new techniques to investigate the random access problem, which capture the combinatorial nature of the problem.
\end{abstract}

% \begin{IEEEkeywords}
% \end{IEEEkeywords}
% \renewcommand{\baselinestretch}{0.96}\normalsize
\section{Introduction}
With the demand for storage capacity consistently outpacing the capabilities of existing technologies to store it~\cite{rydning2022worldwide}, there is a critical need for alternative approaches. In response to this pressing challenge, DNA storage emerges as a promising solution for long-term data storage, offering exceptional density and durability \cite{alliance2021preserving, markowitz2023biology}. A typical DNA storage system is composed of three main components: DNA synthesis, storage containers, and DNA sequencing. Initially, synthetic DNA strands, or \emph{oligos}, are created to encode the user's information. These strands are then stored in an unordered manner in a storage container. Subsequently, DNA sequencing translates the stored strands into digital sequences, called \emph{reads}, that should be decoded back to the user's information. Due to current technology limitations, this process results in multiple noisy copies for each designed strand, that are obtained without order.

While several works demonstrated the potential of DNA as a storage medium \cite{anavy2019data,blawat2016forward,bornholt2016dna, organick2018random, yazdi2017portable, tabatabaei2015rewritable, bar2021deep}, the efficiency of DNA sequencers remains a bottleneck, with slow throughput and high costs compared to alternative storage technologies \cite{shomorony2022information, yazdi2015dna, alliance2021preserving}. This bottleneck is intricately tied to the concept of the \emph{coverage depth} \cite{heckel2019characterization}, defined as the ratio between the number of sequenced reads and the number of designed DNA strands. Reducing the coverage depth presents an opportunity for a  drastic improvement in latency and cost reduction~\cite{erlich2017dna, chandak2019improved, bar2023cover, preuss2024sequencing}. 

Our point of departure in this paper is the recent work~\cite{bar2023cover} which initiates the study of the so-called \emph{DNA coverage depth problem}. The aim of this problem is to reduce the cost and latency of DNA sequencing, by studying the expected number of reads that are required in order to retrieve the user's information. The authors of~\cite{bar2023cover} investigated both the non-random and random access scenarios. While in the former the goal is to retrieve \emph{all} the user's information, the latter considers the random access case of retrieving only a \emph{single} strand. Assume the~$k$ information strands (representing the data) are encoded into $n$ strands. By drawing a connection to the coupon collector's problem~\cite{erdHos1961classical, felleb1968introduction, flajolet1992birthday, newman1960double}, if the $k$ information strands are encoded by an MDS code, then the expected number of reads to decode all $k$ information strands is $H_n-H_{n-k}$, where $H_i$ is the $i$-th harmonic number.
This result is optimal for minimizing the expected number of reads. Using the same MDS code for the random access case, it was proven that the expected number of reads is $k$, which surprisingly can also be achieved by not applying any code to the information strands. Moreover, this result is far from optimality as in~\cite{bar2023cover} it was shown that codes with expectation $c \cdot k$, for $c < 1$, exist. 
%Several more codes also obtained the same result or even a larger expectation value than~$k$. 

A related concept was also explored in \cite{chandak2019improved}, where the authors investigated the trade-offs between the reading costs, which are directly associated with coverage depth, and the writing costs. The non-random access scenario of the DNA coverage depth problem was further extended in~\cite{cohen2024optimizing} to support the setup of composite DNA letters~\cite{anavy2019data}, and in~\cite{preuss2024sequencing, sokolovskii2024coding} for the setup of the combinatorial composite of DNA shortmers~\cite{preuss2021efficient}.
Another extension to the random access setup was studied in~\cite{AGY24};  however, the goal was  not to decode a single strand but rather a group of strands that constitute a single file.

Motivated by the results and observations in \cite{bar2023cover}, in this work, we focus on the random access coverage depth problem for linear codes. The rest of the paper is organized as follows. In Section~\ref{sec:def}, we formally define the random access coverage depth problem. Section~\ref{sec:general} presents several properties of the random access expectation and, more precisely, gives two general formulas for the expectation. 
In Section~\ref{sec:average}, we show an important observation about the average of the random access expectation, which is then used in Section~\ref{sec:rec_bal} to study in more detail codes that have a very balanced behavior in terms of the random access problem. We call these codes \emph{recovery balanced codes}, and we show that for these codes the expectation is always equal to $k$. Through applying three sufficient conditions, we demonstrate that certain families of codes (MDS, Hamming, simplex, binary Reed-Muller, binary Golay) have random access expectation~$k$. 
Furthermore, in Section~\ref{sec:rec_bal}, we discuss code operations that preserve, or do not preserve, the property of being recovery balanced. In particular, we show that if the \emph{permutation automorphism group} of a code~\cite{macwilliams1977theory} is transitive, then the property of being recovery balanced is preserved under duality, and we conjecture that, without assuming any conditions, this property always holds true. 
From the results in Section~\ref{sec:rec_bal}, it is evident that codes that are recovery balanced are not good candidates for the random access problem, since they have the same random access expectation as the uncoded case (i.e., the case where the information strands and the encoded strands are the same). Motivated by the latter, in Section~\ref{sec:break}, we demonstrate that ``breaking'' the balance of recovery balanced codes can reduce the random access expectation strictly below~$k$. More specifically, this presents a method to derive generator matrices for which the random access expectation is smaller than~$k$. An analysis of this method, as well as several observations and experimental results, are given in Section~\ref{sec:break}. Finally, Section~\ref{sec:conclusion} concludes the paper and proposes some open questions for future research.

%\newpage

\section{Problem Statement}\label{sec:def}

Throughout this paper, $k$ and $n$ are positive integers with $2 \le k \le n$, 
$q$ denotes a prime power, and $\F_q$ is the finite field with $q$ elements.
For a positive integer $n$, we let
$[n]=\{1, \ldots,n\}$ and denote by $H_n$ the $n$-th harmonic number, i.e., $H_n:=1+1/2+\dots+1/n$.

We study the expected sample size for uniformly random access queries in DNA storage systems. In DNA-based storage systems, the data is stored as a length-$k$ vector of sequences (called \textit{strands}) of length $\ell$ over the alphabet $\Sigma=\{A,C,G,T\}$. We embed $\Sigma^\ell$ into a finite field $\F_q$ and use
a $k$-dimensional linear block
code $\mC \subseteq \F_q^n$
to encode an information vector
$\smash{\bm{x} = (x_1,\dots,x_k) \in (\Sigma^\ell)^k} \subseteq \F_q^k$ to an encoded vector $\smash{\bm{y} = (y_1,\dots,y_n) \in \F_q^n}$. Note that in order to embed $\Sigma^\ell$ into a finite field $\F_q$, we would need $|\Sigma^{\ell}|=4^{\ell}$ to divide $q$, however, we consider any prime power $q$ in this paper, without any restrictions on $q$.

To retrieve the stored information at a later time, the strands are amplified and then sequenced using DNA sequencing technology. This generates multiple (erroneous) copies for different strands, referred to as \emph{reads}. To simplify our analysis, in this paper, we will assume that this step is accomplished error-free. The output of the reading process is a multiset of these reads, without any specific order\footnote{The reads can be obtained all together or one after the other, depending on the specific technology that is being used.}. As current prices and throughput for DNA sequencing still lag behind other archival storage solutions, reducing the coverage depth required for information recovery is crucial.

In the random access setup, the goal is to retrieve a single information strand $x_i$ for $i \in [k]$. It has been demonstrated in~\cite{bar2023cover} that the expected sample size of a random access query in the DNA storage system can be decreased using an appropriate coding scheme. We illustrate this concept with the following example.

%\ey{we should explain more about the connection to the DNA coverage problem}
%One of the reasons for doing this is to reduce the expected sample size of a random access query in the DNA storage system. We illustrate this with the following example.

\begin{example}\label{ex:introE}
We wish to store an information vector of size $k=2$, namely $(x_1,x_2) \in \F_q^2$. Without coding, the expected number of samples that are needed to recover each of the two information strands is 2 (assuming 
that the samples are chosen uniformly at random).
If $\F_2 \subseteq \F_q$, we can consider the matrix $$G=\begin{pmatrix} 
    1 & 0 & 1 & 0 & 1  \\
    0 & 1 & 0 & 1 & 1\\
\end{pmatrix} \in \F_2^{2 \times 5} $$ 
and store the entries of
$$(x_1,x_2)G=(x_1,x_2,x_1,x_2,x_1+x_2) \in \F_q^5. $$
This time, using uniformly random sampling of the five encoded symbols, it can be shown that the expected number of samples that are needed to recover either of the two information strands is approximately $1.917<2$. Note that ``recovering'' means being able to obtain the original information strand as a linear combination of the sampled symbols. For example, if the last two encoded strands are sampled, the information strand $x_1$ can be recovered as
$x_1=x_2+(x_1+x_2)$.
\end{example}

As demonstrated in Example~\ref{ex:introE}, once the $k$ information strands are encoded using a generator matrix $G \in \F_q^{k \times n}$ it is possible to refer to every read of an encoded strand as reading its corresponding column in the matrix $G$, and recovering the~{$i$-th}~information strand corresponds to recovering the $i$-th basis vector, that is, it should belong to the $\F_q$-span of the already recovered columns of $G$. Motivated by these observations, we are now ready to formally define the problem studied in this paper. We note that in~\cite{bar2023cover}, this problem was referred to as the \emph{singleton coverage depth problem} (Problem 3) and here we refer to it as the \emph{random access coverage depth problem}.

\begin{prob}[{\textbf{The random access coverage depth problem}}] \label{prob:1}
Let $G \in \F_q^{k \times n}$ be a rank-$k$ matrix. Suppose that the columns of~$G$ are drawn uniformly at random, meaning that each column has probability $1/n$ of being drawn and columns can be drawn multiple times. For $i\in [k]$, let $\tau_i(G)$ denote the random variable that counts the minimum number of columns of $G$ that are drawn until the standard basis vector~$e_i$ is in their $\F_q$-span. Compute the expectation $\E[\tau_i(G)]$ and the maximum expectation $$\tmax(G) \triangleq \max_{i\in[k]}\E[\tau_i(G)].$$ Furthermore, let \smash{$T_q(n,k) \triangleq \min_{G \in \F_q^{k \times n}}{\tmax(G)}$} be the smallest possible maximum random access expectation over all rank-$k$ matrices in $\F_q^{k \times n}$ and $T_q(k) \triangleq \liminf_{n\rightarrow\infty}{T_q(n,k)}$ be the best maximum random access for any rank-$k$ matrix over $\F_q$.
% A rigorous, general expression for 
% $\E[\tau_i(G)]$ is given later in Lemma~\ref{lem:fi}. 
%The reader with a mathematics background may simply take 
%Lemma~\ref{lem:fi} as a definition.
\end{prob}

Note that for Problem~\ref{prob:1}, we are only concerned with the multiset of vectors made from the columns of the matrix~$G$, and so the order of these columns is irrelevant.

\begin{remark}
    In the sequel, $G \in \F_q^{k \times n}$ denotes a rank-$k$ matrix and $\mC \subseteq \F_q^n$ the $k$-dimensional code having $G$ as its generator matrix, i.e., $\mC$ is the $\F_q$-span of the rows of $G$. We do not assume that $G$ is \textbf{systematic} (i.e., that the first $k$ columns of $G$ form the identity $k \times k$ matrix), unless otherwise specified.
    Note that, in contrast with previous approaches, we mostly focus on generator matrices and not on block codes. This is because the parameters we consider in this paper depend on the choice of the generator matrices and not only on the code.  
    %Note moreover that in this paper we focus solely on linear codes: tools and methods do not seem to extend naturally to the non-linear case.
\end{remark}

Studying the values $\tmax(G)$, $T_q(n,k)$, and $T_q(k)$ was initiated in~\cite{bar2023cover}. It was established that for various codes, such as the code generated by the identity matrix, the simple parity code, and MDS codes, %the expected coverage depth 
for any $i\in[k]$ we have that $\E[\tau_i(G)] = k$, when $G$ is the systematic generator matrix for any of these codes. In particular, the result on the identity codes implied that $T_q(k,k)=k$, but finding in general the value of $T_q(n,k)$ is an intriguing question, and first steps towards solving this value were carried in~\cite{bar2023cover}. Several constructions were presented that achieve maximum expectation strictly lower than~$k$. More specifically, it was show that $T_q(k=2)\leq 0.91\cdot 2$, $T_q(k=3)\leq 0.89\cdot 3$ and for arbitrary $k$ which is a multiple of $4$, it holds that $T_q(n=2k,k) \leq 0.95k$. Furthermore, two lower bounds from~\cite{bar2023cover} established that for any $n,k,q$ it holds $T_q(n,k) \geq n - \frac{n(n-k)}{k} (H_n-H_{n-k})$ and $T_q(k) \geq \frac{k+1}{2}$.

%for any $i\in[k]$ and any generator matrix $G \in \F_q^{k \times n}$, $\tmax(G)\ge n - \frac{n(n-k)}{k} (H_n-H_{n-k})$.

Despite these valuable contributions, the fundamental limits of the random access coverage depth problem remain unclear. 
Specifically, existing results lack a comprehensive understanding of the properties that render a generator matrix optimal for this purpose, as well as how to calculate the value $\tmax(G)$ in general. In this context, our goal is to contribute to the ongoing research and identify generator matrices that minimize $\tmax(G)$ and help in determining the values of $T_q(n,k)$ and~$T_q(k)$.

\section{General Formulas for Expectation}\label{sec:general}

In this section, given any matrix $G\in \F_q^{k \times n}$, we give two general formulas for the expected number of reads until one can recover the $i$-th information strand, namely, for $\E[\tau_i(G)]$. This will allow us to better understand which properties of 
$G$ play a role in the solution of Problem~\ref{prob:1}. 
As we will see in Section~\ref{sec:rec_bal}, the choice between the two formulas depends on the class of codes under investigation. While the first formula may be more intuitive for certain codes, the second formula appears to be more suitable for others. %\ey{is this sentence still relevant?} \ag{yes because for the hamming code we don't know how to compute the alphas, but we know how to compute the betas}
%We will need the following definition.

\begin{definition}
We call $S \subseteq [n]$ a \textbf{recovery set} for the $i$-th information strand if $e_i$ is in the span of the columns of $G$ indexed by $S$. We denote by $\mR(i)$ the set of minimal (with respect to inclusion) recovery sets for the $i$-th information strand. We say that ``we recovered the $i$-th strand'' if we drew columns of $G$ whose indices form a recovery set for~$i$.
\end{definition}

By definition, the columns of $G$ are in one-to-one correspondence with the encoded strands. Denote the $j$-th column of~$G$ by $g_j$ for $j \in [n]$ and for~$i\in [k]$ and $0 \le s \le n$, let 
\begin{align*}
\alpha_i(s) := {|\{S \subseteq [n] : |S| = s, \, e_i \in \langle g_j : j \in S\rangle \}|}.
\end{align*}
The first formula we establish for $\E[\tau_i(G)]$ uses the values~$\alpha_i(s)$ we just introduced.
% \ey{Should we state that this property holds not only for the random access case but also for any information we seek to recover? Alberto, I take this property mostly from Matteo's result so we could also state it in his work to give him the credit. Anyway, it is a useful property that can be use in other cases and takes the problem from the probability world to the combinatorics one.}
% \ag{I added a remark after the lemma}
% \ey{Another question (which I believe we discussed already): can we get any bound from this formula? Even for small values of $k$.}

\begin{lemma} \label{lem:fi}
For $G\in \F_q^{k \times n}$ and for all $i\in[k]$ we have
\begin{align*}
    \E[\tau_i(G)] =  n H_n - \sum_{s=1}^{n-1} \displaystyle \frac{\alpha_i(s)}{\binom{n-1}{s}}.
\end{align*}
\end{lemma}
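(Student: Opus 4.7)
The plan is to decompose $\tau_i(G)$ along the order in which \emph{distinct} columns first appear among the i.i.d.\ uniform draws from $[n]$. Let $\pi$ be the random permutation of $[n]$ recording the order of first appearances. A classical coupon-collector fact is that $\pi$ is uniformly distributed over the symmetric group $S_n$, and that, conditional on $\pi$, the time elapsed between seeing the $(j-1)$-th distinct element and the $j$-th distinct element is geometric with success probability $(n-j+1)/n$; these conditional gaps are moreover mutually independent, so each has conditional mean $n/(n-j+1)$.

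Define $M := \min\{j : \{\pi(1),\ldots,\pi(j)\} \text{ is a recovery set for } i\}$, which is well-defined since $G$ has rank $k$ and hence $[n]$ itself is a recovery set. Then $\tau_i(G)$ is the sum of the first $M$ inter-arrival gap times; conditioning on $\pi$ and swapping the order of summation yields
\begin{align*}
\E[\tau_i(G)] = \sum_{j=1}^{n}\frac{n}{n-j+1}\,\Pr[M \ge j].
\end{align*}
The event $\{M \ge j\}$ says precisely that $\{\pi(1),\ldots,\pi(j-1)\}$ is \emph{not} a recovery set for $i$, and because $\pi$ is uniform on $S_n$ this prefix set is a uniformly random $(j-1)$-subset of $[n]$. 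Hence $\Pr[M \ge j] = \bigl(\binom{n}{j-1} - \alpha_i(j-1)\bigr)/\binom{n}{j-1}$.

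Substituting $s = j-1$ and applying the identity $(n-s)\binom{n}{s} = n\binom{n-1}{s}$ gives
\begin{align*}
\E[\tau_i(G)] = \sum_{s=0}^{n-1}\frac{n}{n-s} \;-\; \sum_{s=1}^{n-1}\frac{\alpha_i(s)}{\binom{n-1}{s}},
\end{align*}
where the second sum starts at $s=1$ because $\alpha_i(0)=0$ (the empty set cannot span $e_i \ne 0$). The first sum equals $nH_n$ via the change of index $r=n-s$, yielding the claimed formula. The main obstacle is the clean formalization of the two classical facts about the first-appearance permutation $\pi$: that it is uniformly distributed on $S_n$, and that conditional on it the inter-arrival gaps are independent geometric random variables with the stated means. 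Once these are granted, by a standard argument that each draw producing a genuinely new index is uniform over the remaining indices, the rest of the proof is routine bookkeeping.
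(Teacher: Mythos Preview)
Your proof is correct and takes a genuinely different route from the paper's. The paper writes $\E[\tau_i(G)]=\sum_{r\ge 1}\Pr[\tau_i(G)\ge r]$, conditions each term on the number $\eta_{r-1}$ of distinct columns seen in the first $r-1$ draws, computes $\Pr[\eta_{r-1}=s]$ explicitly by inclusion--exclusion, and then collapses the resulting double sum via the identity $\sum_{j=0}^{s}\binom{s}{j}(-1)^j\frac{n}{n-s+j}=1/\binom{n-1}{s}$. You instead exploit the structural decomposition of the coupon-collector process: the first-appearance permutation $\pi$ is uniform on $S_n$ and independent of the inter-arrival gaps $W_1,\ldots,W_n$, each $W_j$ being geometric with mean $n/(n-j+1)$. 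Since the stopping index $M$ is a function of $\pi$ alone, it is independent of the $W_j$'s, and the identity $\E[\tau_i(G)]=\sum_{j=1}^n \E[W_j]\,\Pr[M\ge j]$ follows immediately. Your approach is more conceptual and avoids both the inclusion--exclusion computation of $\Pr[\eta_{r-1}=s]$ and the alternating-sum identity that the paper has to prove by induction; the paper's argument, on the other hand, is more self-contained in that it does not appeal to the independence of $\pi$ from the gap times (a standard but not entirely trivial fact that you correctly flag as the main point requiring justification).
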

\begin{proof}
By definition,
\begin{align} \label{eq:simp1}
    \E[\tau_i(G)] = \sum_{r=1}^\infty \Pr[\tau_i(G) \ge r].
\end{align}
For $r \ge 1$, let $\eta_{r}$ be the random variable that denotes the number of distinct encoded strands that were sampled in the first $r$ draws, where we set $\eta_0=0$. We have
\begin{equation} \label{eq:simp2}
\Pr[\tau_i(G) \ge r] = 
\sum_{s=0}^{n-1}\Pr[\tau_i(G) \ge r \mid \eta_{r-1} = s] \Pr[\eta_{r-1} = s],
\end{equation}
where the sum runs only up to $n-1$ since $\Pr[\tau_i(G) \ge r \mid \eta_{r-1} = n] = 0$ for all $r \ge 1$.
In order to compute the probability $\Pr[\tau_i(G) \ge r \mid \eta_{r-1} = s]$ we note that the number of subsets of $s$ strands that recover the $i$-th strand is the same as the number of subsets $S\subseteq [n]$ with $|S| = s$ that have the property that the columns of $G$ indexed by $S$ contain $e_i$ in their span, which is exactly~$\alpha_i(s)$.
Therefore, the number of sets of cardinality $s$ that do not recover the $i$-th information strand is $\binom{n}{s}-\alpha_i(s)$.
%\begin{align*}
%    \binom{n}{s}-\alpha_i(s).
%\end{align*}
% from which we get
% \begin{align*}
%     \Pr(\tau_i(G) \ge r : \mE_{r-1} = s) = 1-\frac{\widetilde{\alpha}_i(s)}{\binom{n}{s}}.
% \end{align*} 
Combining this with~\eqref{eq:simp1} and~\eqref{eq:simp2} gives
\begin{align*}
    \E[\tau_i(G)] = \sum_{r=1}^\infty \sum_{s=0}^{n-1}\left(1-\frac{\alpha_i(s)}{\binom{n}{s}}\right)  \Pr[\eta_{r-1} = s].
\end{align*}
Using the Inclusion-Exclusion Principle, we further obtain
\begin{align*}
    \Pr[\eta_{r-1} = s] =\binom{n}{s}\displaystyle\sum_{j=0}^{s}\binom{s}{j}(-1)^j\left(\frac{s-j}{n}\right)^{r-1}.
\end{align*}
This gives 
\allowdisplaybreaks
\begin{align*}
    % &= \sum_{r=1}^\infty \sum_{s=0}^{n-1}\left(1-\frac{\widetilde{\alpha}_i(s)}{\binom{n}{s}}\right) \binom{n}{s}\displaystyle\sum_{j=0}^{s}\binom{s}{j}(-1)^j\left(\frac{s-j}{n}\right)^{r-1} \\
    \E[\tau_i(G)]&= \sum_{s=0}^{n-1}\left(\binom{n}{s}-{\alpha_i(s)}\right) \displaystyle\sum_{j=0}^{s}\binom{s}{j}(-1)^j\sum_{r=0}^\infty\left(\frac{s-j}{n}\right)^{r} \\
    &= \sum_{s=0}^{n-1}\left(\binom{n}{s}-{\alpha_i(s)}\right) \displaystyle\sum_{j=0}^{s}\binom{s}{j}(-1)^j\frac{n}{n-s+j} \\ 
    &= \sum_{s=0}^{n-1}\left(\binom{n}{s}-{\alpha_i(s)}\right) \displaystyle \frac{1}{\binom{n-1}{s}} = n  H_n - \sum_{s=0}^{n-1} \displaystyle \frac{\alpha_i(s)}{\binom{n-1}{s}},
\end{align*}
where in the second-to-last equality we used the identity
\begin{align} \label{eq:dani}
    \sum_{j=0}^{s}\binom{s}{j}(-1)^j\frac{n}{n-s+j} = \frac{1}{\binom{n-1}{s}} \quad \mbox{for $0 \le s\le n-1$},
\end{align}
which can be shown by induction. Finally, since $\alpha_i(0)=0$ for all $i \in [n]$ we obtain the statement in the lemma.
\end{proof}

\begin{remark}
If, instead of wanting to recover a single information strand, we want to recover a subset of information strands, Lemma~\ref{lem:fi} can easily be adjusted to this case. More precisely, say we want to recover all the information strands indexed by some set $I = \{i_1, \dots, i_{\ell}\} \subseteq [k]$. We then define  
\begin{align*}
    \alpha_I(s)=|\{S \subseteq [n] : |S| = s, \langle e_i : i \in I\rangle \subseteq \langle g_j : j \in S\rangle \}|.
\end{align*}
Following reasoning analogous to the one used in the proof of Lemma~\ref{lem:fi}, we can conclude that the expected number of encoded strands that need to be drawn in order to recover the information strands indexed by $I$ is
\begin{align*}
   nH_n-\sum_{s=1}^{n-1} \frac{\alpha_I(s)}{\binom{n-1}{s}}. 
\end{align*}
Note that the generalization of Lemma~\ref{lem:fi} discussed in this remark was originally observed and brought to our attention by M. Bertuzzo in~\cite{matteo}.
\end{remark}

We illustrate 
how Lemma~\ref{lem:fi} can be used to compute $\E[\tau_1(G)]$ and $\E[\tau_2(G)]$ for the matrix $G$ in Example~\ref{ex:introE}.

\begin{example}
Let $G$ be as in Example~\ref{ex:introE}.
We have
\begin{align*}
\alpha_1(1) = 2, \; \alpha_1(2) = 9, \; \alpha_1(3) = \binom{5}{3}, \; \alpha_1(4) = \binom{5}{4}.
\end{align*}
By Lemma~\ref{lem:fi} we obtain 
\begin{align*}
    \E[\tau_1(G)] &= 5 H_5 - \sum_{s=1}^{4} \displaystyle \frac{\alpha_1(s)}{\binom{4}{s}} 
    %&= 5 \cdot \left(1 + \frac{1}{2}+\frac{1}{3}+\frac{1}{4}+\frac{1}{5}\right)- \left(\frac{2}{4}+\frac{9}{6}+\frac{5}{2}+5 \right)\\
    = \frac{23}{12} \approx 1.917.
\end{align*}
It is easy to see that for $\tau_2(G)$ we have exactly the same numbers, and thus $\E[\tau_2(G)] = 23/12$ and $\tmax(G) = 23/12.$
\end{example}

Using Lemma~\ref{lem:fi} we can also give an alternative (and shorter) proof of~\cite[Theorem 9]{bar2023cover}.

\begin{corollary} \label{cor:mds}
    Let $G\in \F_q^{k \times n}$ be a systematic generator matrix of an MDS code. We have $\tmax(G)=k$.
\end{corollary}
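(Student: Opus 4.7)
The approach is to apply Lemma~\ref{lem:fi} after computing $\alpha_i(s)$ explicitly, leveraging the strong combinatorial symmetry forced by the MDS hypothesis. Since $G = [I_k \mid A]$ is systematic, the $i$-th column of $G$ equals $e_i$ for every $i \in [k]$; and since $\mC$ is MDS, every set of $k$ columns of $G$ is linearly independent, and therefore so is every set of $s \leq k$ columns.

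First I would split the computation of $\alpha_i(s)$ into two cases. For $1 \leq s \leq k-1$, any $s$ columns of $G$ are linearly independent, so their span is $s$-dimensional and contains $e_i$ if and only if $e_i$ itself occurs among the chosen columns; this forces $i \in S$, giving $\alpha_i(s) = \binom{n-1}{s-1}$. For $k \leq s \leq n-1$, any $s$ columns contain a linearly independent subfamily of size $k$, and hence span all of $\F_q^k$; so $\alpha_i(s) = \binom{n}{s}$. Both expressions are independent of $i$, reflecting the symmetry we expect from an MDS code.

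Next I would substitute into the formula of Lemma~\ref{lem:fi}. Using the two elementary identities $\binom{n-1}{s-1}/\binom{n-1}{s} = s/(n-s)$ and $\binom{n}{s}/\binom{n-1}{s} = n/(n-s)$, the required sum splits into two pieces that can be reindexed by setting $j = n-s$, so that each piece is expressed in terms of harmonic numbers $H_{n-1}$ and $H_{n-k}$. The $H_{n-k}$ contributions from the two ranges cancel, leaving the compact identity $\sum_{s=1}^{n-1} \alpha_i(s)/\binom{n-1}{s} = n H_{n-1} - (k-1)$.

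Plugging back into Lemma~\ref{lem:fi} and using $n H_n - n H_{n-1} = 1$, everything collapses to $\E[\tau_i(G)] = k$, independent of $i \in [k]$. Taking the maximum over $i$ then yields $\tmax(G) = k$, as claimed. There is no real conceptual obstacle here: the only step that requires careful bookkeeping is the harmonic-sum simplification, and this is a routine telescoping once the two cases for $\alpha_i(s)$ are in hand.
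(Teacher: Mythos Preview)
Your proposal is correct and follows essentially the same approach as the paper: compute $\alpha_i(s)$ by splitting into the ranges $s\le k-1$ and $s\ge k$ using the MDS property together with the systematic form of $G$, then substitute into Lemma~\ref{lem:fi} and simplify the resulting harmonic-type sums. The only difference is that you spell out the simplification via the reindexing $j=n-s$ and the cancellation of the $H_{n-k}$ terms, whereas the paper simply writes the two sums $\sum_{s=1}^{k-1}s/(n-s)$ and $\sum_{s=k}^{n-1}n/(n-s)$ and declares that the expression simplifies to $k$.
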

\begin{proof}
\begin{comment}
Let $i \in [k]$. Note that any collection of~$s$ columns of $G$ with $1 \le s\le k-1$ is linearly dependent {\color{red} independent? i would anyways rephrase this part of the proof as below in red}, and so the only way to recover the $i$-th information strand is by drawing it. If $s \ge k$ then any collection of $s$ columns spans $\F_q^k$ and in particular, recovers the $i$-th encoded strand. From this it follows that for all $i \in [n]$ we have
\begin{align*}
   {\alpha}_i(s) = \begin{cases}
      \binom{n-1}{s-1} \quad &\textnormal{if $s \in [k-1]$,} \\
       \binom{n}{s} \quad &\textnormal{if $s \ge k$.}
    \end{cases}
\end{align*}
*********
\end{comment}
Since $G$ is an MDS matrix, every $k$ columns of $G$ are linearly independent. Thus, for any~$i \in [k]$,
\begin{align*}
   {\alpha}_i(s) = \begin{cases}
      \binom{n-1}{s-1} \quad &\textnormal{if $s \in [k-1]$,} \\
       \binom{n}{s} \quad &\textnormal{if $s \ge k$.}
    \end{cases}
\end{align*}
By Lemma~\ref{lem:fi} we then have
\begin{align*}
    \E[\tau_i(G)] &= n H_n - \sum_{s=1}^{k-1}\frac{\binom{n-1}{s-1}}{\binom{n-1}{s}}-\sum_{s=k}^{n-1}\frac{\binom{n}{s}}{\binom{n-1}{s}} \\
    &= n  H_n - \sum_{s=1}^{k-1}\frac{s}{n-s}-\sum_{s=k}^{n-1}\frac{n}{n-s}, 
    \end{align*}
which simplifies to $k$ after straightforward computations.
    % &= n  H_n - \sum_{s=1}^{k-1}\frac{n-(n-s)}{n-s}-\sum_{s=k}^{n-1}\frac{n}{n-s}  \\
    % &= n  H_n - (k-1)-\sum_{s=1}^{n-1}\frac{n}{n-s} \\
    % &= n  H_n - (k-1)-n  H_{n-1}=k, 
\end{proof}

The other formula for the expectation of the random access problem we will use in this paper relies on the {recovery sets} of the information strand and on their intersection structure. 
%\ey{The sets we defined for the $\alpha_i(s)$ are also recovering sets so I think we can already have the definition at the beginning} \ag{I moved it now!}
%
% \begin{definition}
% We call $A \subseteq [n]$ a \textbf{recovery set} for the $i$-th information strand if 
% $e_i$ is in the span of the columns of $G$ indexed by $A$.
% We denote by $\mR(i)$ the set of minimal recovery sets (with respect to inclusion) for the $i$-th information strand. Moreover, we say that we can \textbf{recover} the $i$-th strand if we drew the columns of $G$ whose indices form a recovery set for~$i$.
% \end{definition}
%
%
% \begin{remark} \label{rem:dual}
% It is easy to see that for a code $\mC \subseteq \F_q^n$ of dimension $k$ and $i \in [k]$ with systematic generator matrix $G$, there is a (minimal) recovery set $R \subseteq [n]$ with $|R|\ge2$ for the $i$-th information strand if and only if there is a codeword $x \in \mC^\perp$ with $i \in \supp(x)$ and $R=\supp(x) \setminus \{i\}$.
% \end{remark}
%
% As already mentioned, our second formula for computing $\E[\tau_i(G)]$ uses the recovery sets and their intersection structure.
%
More precisely, for $i \in [k]$ and $\mR(i)=\{R_1,\dots,R_L\}$ we denote by $\beta_i(s,j)$ the number of subsets
$S \subseteq [L]$ of cardinality $s$ such that $\bigcup_{h \in S} R_h$ has cardinality $j$. In symbols, $\beta_i(s,j) = \left| \{ S \subseteq [L] : |S|=s, \,|\bigcup_{h \in S} R_h | = j  \}  \right|$. We then have the following result, which can be obtained analogously to~\cite[Theorem 8, Corollary 3]{bar2023cover}. We include the proof for completeness.

% \ey{check if we are going to use this lemma} \ag{I think we do, for $q$-ary Hamming codes.}
\begin{lemma} \label{lem:bsj}
For $G\in \F_q^{k \times n}$ and for $i \in [k]$ it holds that
\begin{align*}
    \E[\tau_i(G)]=n \Big( \sum_{j=1}^{n} H_j\sum_{s=1}^L (-1)^{s+1} \beta_i(s,j) \Big).
\end{align*}
\end{lemma}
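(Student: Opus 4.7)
The plan is to recognize $\tau_i(G)$ as the minimum of a collection of coupon-collector-type waiting times, one for each minimal recovery set, and then apply an inclusion--exclusion identity for the minimum of real numbers.

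For each $h \in [L]$, let $T_h$ denote the first draw by which every column index in $R_h$ has been sampled. Since the $i$-th information strand is recovered precisely when the multiset of drawn indices contains some recovery set---equivalently, some minimal one---we have $\tau_i(G) = \min_{h \in [L]} T_h$. More generally, for any non-empty $S \subseteq [L]$, writing $U_S = \bigcup_{h \in S} R_h$, the first time at which $U_S$ is entirely collected equals $\max_{h \in S} T_h$; I shall denote this quantity by $T_{U_S}$.

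First, I would invoke the min--max inclusion--exclusion identity
\begin{equation*}
\min_{h \in [L]} a_h \;=\; \sum_{\emptyset \neq S \subseteq [L]} (-1)^{|S|+1} \max_{h \in S} a_h,
\end{equation*}
valid for any reals $a_1,\dots,a_L$ (check by ordering them, or derive from standard union inclusion--exclusion applied to the half-lines $(-\infty,a_h]$). Applying it with $a_h = T_h$ and taking expectations yields $\E[\tau_i(G)] = \sum_{\emptyset \neq S}(-1)^{|S|+1}\E[T_{U_S}]$. Next, I would compute $\E[T_{U_S}]$ by the classical coupon-collector argument: when $j'$ of the $|U_S|$ target indices remain uncollected, the next useful draw arrives after an expected $n/j'$ steps, so $\E[T_{U_S}] = \sum_{j'=1}^{|U_S|} n/j' = n H_{|U_S|}$. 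Finally, I would regroup the sum according to $s = |S|$ and $j = |U_S|$; by definition, the number of $S \subseteq [L]$ with $|S|=s$ and $|U_S|=j$ is exactly $\beta_i(s,j)$, producing the claimed formula.

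The only real delicacy is the min--max identity step. If I wanted to avoid it, I would instead apply the standard inclusion--exclusion directly to the complement of the tail event $\{\tau_i(G) \geq r\} = \bigcap_h \{T_h \geq r\}$, obtaining $\Pr[\tau_i(G) \geq r] = \sum_{\emptyset \neq S}(-1)^{|S|+1}\Pr[T_{U_S} \geq r]$ (after observing that the alternating sum $\sum_{\emptyset \neq S}(-1)^{|S|+1}$ equals $1$), and then sum over $r \geq 1$ using $\E[X] = \sum_r \Pr[X \geq r]$. Both routes converge to the same combinatorial regrouping, and the rest is bookkeeping.
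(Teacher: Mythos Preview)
Your proposal is correct and follows essentially the same approach as the paper: both arguments rest on inclusion--exclusion over the minimal recovery sets $\{R_1,\dots,R_L\}$ together with the coupon-collector identity $\E[T_{U_S}] = nH_{|U_S|}$ for the waiting time to collect a fixed subset $U_S\subseteq[n]$. Your packaging via the pointwise min--max identity (and the equivalent tail-probability version you sketch at the end) is a clean variant of the paper's argument, which carries out the $L=2$ case explicitly via counts of ``bad'' draw sequences and then appeals to~\cite[Theorem~8]{bar2023cover} for the general step.
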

\begin{proof}
We only prove the result in detail for the case where $\mR(i) = \{A, B\}$.
First, note that we can represent any sequence of $r-1$ draws as an $(r-1)$-tuple $\mathbf{d} \in [n]^{r-1}$, %where $d_j$ corresponds to the event of drawing the $j$-th column of $G$. 
where $d_j$ denotes the column number in $G$ that was obtained in the $j$-th draw.
For $\mathbf{d} = (d_1, \dots, d_{r-1}) \in [n]^{r-1}$, let $\varphi := \{d_i : 1 \le i \le r-1\}$. For a set of indices $S \subseteq [n]$, let $\lambda_{S}(r-1)$ be the number of different ways of drawing columns in the first $r-1$ draws such that, for at least one of the indices $j \in S$, the $j$-th column of $G$ was not drawn. We have
\begin{align*}
   \lambda_{A \cup B}(r-1)  &=  |\{\mathbf{d} \in [n]^{r-1} : \varphi(\mathbf{d}) \not\supseteq A \cup B\}|  \\
   &= |\{\mathbf{d} \in [n]^{r-1} : \varphi(\mathbf{d}) \not\supseteq A \} \cup \{\mathbf{d} \in [n]^{r-1} : \varphi(\mathbf{d}) \not\supseteq B\}|.
\end{align*}
Moreover, let $\lambda(r-1)$ be the number of different ways of drawing columns in the first $r-1$ draws such that the $i$-th basis vector is not in their span. Note that since $\mR(i) = \{A, B\}$, we have that $\lambda(r-1)$ is the number of different ways of drawing columns in the first $r-1$ draws such that at least one column indexed by an element in $A$ and at least one column indexed by an element in $B$ were not drawn. Thus, we have
\begin{align*}
   \lambda_{A \cup B}(r-1) 
   &= |\{\mathbf{d} \in [n]^{r-1} : \varphi(\mathbf{d}) \not\supseteq A \}| + |\{\mathbf{d} \in [n]^{r-1} : \varphi(\mathbf{d}) \not\supseteq B\}| \\
   &\qquad - |\{\mathbf{d} \in [n]^{r-1} : \varphi(\mathbf{d}) \not\supseteq A \} \cap \{\mathbf{d} \in [n]^{r-1} : \varphi(\mathbf{d}) \not\supseteq B\}| \\
   &= \lambda_{A}(r-1) + \lambda_{B}(r-1) - \lambda(r-1).
\end{align*}
The conclusion of the proof can then be completed in exactly the same way as in the proof of~\cite[Theorem 8]{bar2023cover}.

For the more general case of having more than just two recovery sets, we have
\begin{align*}
   \E[\tau_i(G)] &= n \Bigg( \sum_{s=1}^L (-1)^{s+1} \sum_{1 \le j_1 < \dots < j_s \le L} H_{|R_{j_1} \cup \dots \cup R_{j_s}|} \Bigg) \\
   &= n \Bigg( \sum_{s=1}^L (-1)^{s+1} \sum_{j=1}^{n} \beta_i(s,j) H_j \Bigg) \\
   &= n \Bigg( \sum_{j=1}^{n} H_j \sum_{s=1}^L (-1)^{s+1} \beta_i(s,j) \Bigg),
\end{align*}
where the first equality can be obtained analogously to the case of only two recovery sets by using the Inclusion-Exclusion Principle.
\end{proof}

Although Lemmas~\ref{lem:fi} and~\ref{lem:bsj} provide general formulas for the expectation $\E[\tau_i(G)]$, they are not necessarily easy to apply. This is because for general codes it is not immediate to derive the values of $\alpha_i(s)$ or $\beta_i(s,j)$, and even if one has their explicit values, it can be cumbersome to evaluate the formulas for $\E[\tau_i(G)]$. In particular, although the formulas give insight about what matters for computing $\E[\tau_i(G)]$, it is still unclear how to use them to find matrices/codes for which the random access expectation value is below $k$. Yet, in the next section we show how to build on these results in order to develop deeper insights into how one can obtain codes for the purpose of expectation below $k$.

\section{The Average Random Access Expectation}\label{sec:average}

In this section we present one of the main results of this paper. We obtain this result by investigating a random variable that, while different, is closely related to $\tau_i(G)$. This new random variable exhibits useful properties that help us better understand which structural characteristics of generator matrices (and codes) are desirable for achieving random access expectation below~$k$.

\begin{notation}
For $i \in [n]$, we denote by $\widetilde{\tau}_i(G)$ the random variable that counts the number of columns of $G$ that need to be drawn until the $i$-th column of $G$ can be recovered, i.e., until the $i$-th column belongs to the $\F_q$-span of drawn columns.
\end{notation}

Note that by definition if the matrix $G$ is systematic, then $\widetilde{\tau}_i(G) = {\tau}_i(G)$ for all~$i \in [k]$. A simple and well-known observation on minimal recovery sets is stated in the next claim.

\begin{claim} \label{claim:dual}
Let $\mC \subseteq \F_q^n$ be a code of dimension $k$
with systematic generator matrix $G$ and let $i \in [k]$. There exists a (minimal) recovery set $R \subseteq [n]$ with $|R|\ge2$ for the $i$-th encoded strand if and only if there is a codeword $x \in \mC^\perp$ with $i \in \supp(x)$ and $R=\supp(x) \setminus \{i\}$.
\end{claim}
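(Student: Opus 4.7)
The plan is to exploit the systematic structure of $G$: since the first $k$ columns of $G$ form the identity, we have $g_i = e_i$ for every $i \in [k]$, so the singleton $\{i\}$ is itself a recovery set for the $i$-th encoded strand. The first step would be to observe that a minimal recovery set $R$ with $|R|\ge 2$ cannot contain $i$, for otherwise the proper subset $\{i\}$ would already recover $e_i$, contradicting minimality. This structural remark is the common starting point for both directions of the biconditional.

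For the forward implication, given such a minimal $R$, I would write $e_i = \sum_{j \in R} c_j g_j$ with coefficients $c_j \in \F_q$ and then define $x \in \F_q^n$ by setting $x_i := -1$, $x_j := c_j$ for $j \in R$, and $x_\ell := 0$ otherwise. A direct computation gives
\begin{align*}
Gx^T \;=\; -\,e_i + \sum_{j \in R} c_j g_j \;=\; 0,
\end{align*}
so $x \in \mC^\perp$. Minimality of $R$ is precisely what forces every $c_j$ to be nonzero: if $c_{j^*} = 0$ for some $j^* \in R$, then $R\setminus\{j^*\}$ would already span $e_i$, contradicting the minimality of $R$. Consequently $\supp(x) = R \cup \{i\}$ and $R = \supp(x)\setminus\{i\}$, as required.

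For the reverse implication, given $x \in \mC^\perp$ with $i \in \supp(x)$ and $R = \supp(x)\setminus\{i\}$, the identity $Gx^T = \sum_{j=1}^n x_j g_j = 0$ combined with $g_i = e_i$ rearranges to $x_i e_i = -\sum_{j \in R} x_j g_j$. Since $x_i \neq 0$, dividing by $x_i$ exhibits $e_i$ as an $\F_q$-linear combination of $\{g_j : j \in R\}$, so $R$ is a recovery set.

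The only delicate point — really the main subtlety rather than an obstacle — lies in how the minimality of $R$ is mirrored by the support of $x$. In the forward direction this is exactly the nonvanishing of every $x_j$ for $j \in R$, which the argument above extracts from the minimality of $R$. In the reverse direction, the same logic applied contrapositively identifies minimality of $R$ with minimality of $\supp(x)$ among nonzero codewords of $\mC^\perp$ with $i$ in their support. Beyond this accounting, the proof uses nothing more than elementary linear algebra and the defining characterization $x \in \mC^\perp \iff Gx^T = 0$.
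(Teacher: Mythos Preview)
Your argument is correct and supplies exactly the standard linear-algebra verification one would expect; the paper itself does not prove this claim at all, introducing it only as ``a simple and well-known observation'' and moving on. So there is no paper proof to compare against, and your write-up fills the gap cleanly: the key steps --- using $g_i=e_i$ from the systematic form, translating $e_i\in\langle g_j:j\in R\rangle$ into $Gx^T=0$, and extracting $\supp(x)=R\cup\{i\}$ from minimality --- are all handled correctly.
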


%If the matrix $G$ is systematic, then $\widetilde{\tau}_i(G) = {\tau}_i(G)$ for all~$i \in [k]$. 
It follows from Claim~\ref{claim:dual} that in contrast to $\tau_i(G)$, $\widetilde{\tau}_i(G)$ only depends on the code
generated by $G$. We prove this claim formally.

\begin{claim}\label{claim_good}
For a code $\mC$ with generator matrices $G,G' \in \F_q^{k \times n}$, we have $\E[\widetilde{\tau}_i(G)]=\E[\widetilde{\tau}_i(G')]$ for all~$i \in [n]$.
\end{claim}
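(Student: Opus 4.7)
The plan is to exploit the well-known fact from linear algebra that two $k \times n$ matrices $G$ and $G'$ of the same rank generate the same row space (equivalently, the same code $\mC$) if and only if there exists an invertible matrix $A \in \F_q^{k \times k}$ such that $G' = AG$. Once one has this, the argument reduces to showing that left-multiplication by an invertible matrix does not alter any linear-dependence relations among the columns.

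More precisely, let $g_1, \dots, g_n$ be the columns of $G$ and let $g'_1, \dots, g'_n$ be the columns of $G'$, so that $g'_j = A g_j$ for every $j \in [n]$. The first step is to verify that for every subset $S \subseteq [n]$ and every index $i \in [n]$,
\begin{align*}
g_i \in \langle g_j : j \in S\rangle \quad \Longleftrightarrow \quad g'_i \in \langle g'_j : j \in S\rangle.
\end{align*}
This is immediate from invertibility of $A$: the forward direction applies $A$ to a linear combination expressing $g_i$, and the reverse direction applies $A^{-1}$.

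The second step is to transfer this equivalence to the random variables. Following the set-up of Problem~\ref{prob:1}, the sample space can be taken to be $[n]^{\mathbb{N}}$, equipped with the product measure in which each coordinate is uniform on $[n]$. For a given sequence $\omega = (\omega_1, \omega_2, \dots) \in [n]^{\mathbb{N}}$, the random variables $\widetilde{\tau}_i(G)(\omega)$ and $\widetilde{\tau}_i(G')(\omega)$ are defined as the smallest $r$ such that the $i$-th column is in the span of the columns indexed by $\{\omega_1, \dots, \omega_r\}$. By Step 1 applied to $S = \{\omega_1, \dots, \omega_r\}$, these two minima coincide. Hence $\widetilde{\tau}_i(G) = \widetilde{\tau}_i(G')$ as functions on the sample space, and in particular their expectations are equal.

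There is no real obstacle here: the statement is essentially a structural observation rather than a computation. The only point requiring care is to make sure the probability model is specified consistently for both matrices (same sample space, same uniform draw of column indices), which is clear from the formulation of Problem~\ref{prob:1} because the draws depend only on the column index, not on the actual column vector. This also explains \emph{why} this claim holds for $\widetilde{\tau}_i$ but fails in general for $\tau_i(G)$: the target vector $e_i$ is not preserved under the change-of-basis matrix $A$, whereas the $i$-th column of $G$ itself transforms consistently with the rest of the columns.
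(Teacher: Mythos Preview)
Your proof is correct and takes a genuinely different route from the paper's. The paper first invokes the analogue of Lemma~\ref{lem:fi} to express $\E[\widetilde{\tau}_i(G)]$ in terms of the counts $\widetilde{\alpha}_i(s)$, and then shows that each $\widetilde{\alpha}_i(s)$ depends only on $\mC$ by rewriting it via supports of codewords in $\mC^\perp$ (using Claim~\ref{claim:dual}). Your argument bypasses both of these ingredients: by working directly with the change-of-basis matrix $A$ with $G'=AG$, you establish the stronger fact that $\widetilde{\tau}_i(G) = \widetilde{\tau}_i(G')$ pointwise on the sample space, not merely in expectation. Your approach is more elementary and self-contained; the paper's approach has the side benefit of producing an explicit dual-code description of $\widetilde{\alpha}_i(s)$, which is reused later (for instance in the proof of Lemma~\ref{lem:auto}).
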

\begin{proof}
Denote the $\ell$-th column of $G$ by $g_{\ell}$. Similarly to the proof of Lemma~\ref{lem:fi} we can show that
\begin{align} \label{eq:aaa}
    \E[\widetilde{\tau}_i(G)] =  n H_n - \sum_{s=1}^{n-1} \displaystyle \frac{\widetilde{\alpha}_i(s)}{\binom{n-1}{s}}
\end{align}
where $\widetilde{\alpha}_i(s) := {|\{S \subseteq [n] : |S| = s, \, g_i \in \langle g_{\ell} : \ell \in S\rangle \}|}.$
We have
    \begin{align*}
    \widetilde{\alpha}_i(s)&={|\{S \subseteq [n] : |S| = s, \, g_i \in \langle g_{\ell} : \ell \in S\rangle \}|} \\
    &= {|\{S \subseteq [n]\setminus \{ i \} : |S| = s, \exists x \in \mC^\perp : i \in \supp(x) \subseteq S \cup \{i\}\}|} \\&+ {|\{S \subseteq [n] : |S| = s, \, i \in S\}|}  \\
    &= {|\{S \subseteq [n]\setminus \{ i \} : |S| = s, \exists x \in \mC^\perp : i \in \supp(x) \subseteq S \cup \{i\}\}|} + \binom{n-1}{s-1},
\end{align*}
which solely depends on the code $\mC$.
Therefore, since by~\eqref{eq:aaa} we have that $\E[\widetilde{\tau}_i(G)]$ only depends on $\widetilde{\alpha}_i(s)$, we obtain the statement of the claim.% \ey{why does it hold by Lemma 1?} \ag{ok now?}
\end{proof}

% Justified by Claim~\ref{claim_good}, from now on, if $\mC$ is a code generated by $G$, we write~$\widetilde{\tau}_i(\mC)$ instead of  $\widetilde{\tau}_i(G)$.
% \begin{remark}
% until one can recover the $i$-th encoded strand. Moreover, we let $\kappa_i(G)$ be the random variable that denotes the number of draws until we sample a new encoded strand, having already recovered $i-1$ of them. 
% For a code $\mC \subseteq \F_q^n$, if from the already recovered $i-1$ strands one cannot recover another encoded strand, we have
% \begin{align} \label{eq:kappai}
%     \E(\kappa_i(G)) = \frac{n}{n-i+1} \quad \textnormal{for all $1 \le i \le n$.}
% \end{align}
% Moreover, if $\mC^\perp$ is the dual of $\mC$ and $d^\perp:= d(\mC^\perp)$, then we have
% \begin{align*}
%     \widetilde{\tau}_{i_{\ell}}(\mC) = \sum_{j=1}^{\ell} \kappa_j(\mC) \quad \textnormal{for some $\{i_1,\dots,i_{d^\perp-1}\}\subseteq [n]$}
% \end{align*}
% and for $d^\perp \le j \le n$ we have
% \begin{align} \label{eq:tauis}
%     \widetilde{\tau}_{i_j}(\mC) = \begin{cases}
%         \widetilde{\tau}_{i_{j-1}}(\mC) \quad &\parbox[t]{.6\textwidth}{if from the first $j-1$ \\ collected strands we can \\ recover another strand,} \\
%         \widetilde{\tau}_{i_{j-1}}(\mC) + \kappa_j(\mC) \quad &\textnormal{otherwise.}
%     \end{cases}
% \end{align}
% \end{remark}
The following result shows that, \emph{on average}, for any rank~$k$ matrix $G \in \F_q^{k \times n}$ the expectation of $\widetilde{\tau}_i(G)$ is $k$.

\begin{theorem} \label{thm:sumra}
Let $\mC \subseteq \F_q^n$ be a code of dimension $k$ with generator matrix $G$. We have 
$\sum_{i=1}^n\E\left[ \widetilde{\tau}_i(G)\right] = kn.$
\end{theorem}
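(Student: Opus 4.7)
The plan is to avoid any direct manipulation of the quantities $\widetilde{\alpha}_i(s)$ and instead decompose the total waiting time by ``rank phases'' of the drawn subspace. Let $\sigma_1,\sigma_2,\dots$ denote the i.i.d.\ uniform draws from $[n]$, set $V_t := \langle g_{\sigma_1},\dots,g_{\sigma_t}\rangle$, and $R_t := |\{i \in [n] : g_i \in V_t\}|$. The first step is the elementary identity $\widetilde{\tau}_i(G) = \sum_{t \ge 0} \mathbf{1}[g_i \notin V_t]$, which, after interchanging the two sums, yields
$$\sum_{i=1}^n \widetilde{\tau}_i(G) \,=\, \sum_{t=0}^\infty (n - R_t).$$
So it is enough to prove that $\E\bigl[\sum_{t \ge 0}(n - R_t)\bigr] = kn$.

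Next I would split the sample path into the $k$ rank-increase phases of the sequence $\dim V_t$. Setting $T_j := \min\{t : \dim V_t = j\}$ for $0 \le j \le k$ and $d_j := T_j - T_{j-1}$, during the $j$-th phase the drawn span is constant, equal to some (random) $(j-1)$-dimensional subspace $W_{j-1}$, and $R_t = c_{j-1} := |\{i : g_i \in W_{j-1}\}|$; once the rank reaches $k$, $R_t = n$ and the summand vanishes. This produces
$$\sum_{t \ge 0}(n-R_t) \,=\, \sum_{j=1}^k d_j\,(n-c_{j-1}),$$
reducing the theorem to showing that $\E[d_j(n-c_{j-1})] = n$ for every $j \in [k]$.

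The last step is a geometric-waiting-time computation. By the strong Markov property of the i.i.d.\ sequence at the stopping time $T_{j-1}$, the draws $\sigma_{T_{j-1}+1},\sigma_{T_{j-1}+2},\dots$ are i.i.d.\ uniform on $[n]$ and independent of $W_{j-1}$; each of them falls outside $W_{j-1}$ with probability $(n-c_{j-1})/n$, which is strictly positive because $W_{j-1}\subsetneq \F_q^k$ (the columns of $G$ span $\F_q^k$ by hypothesis). Hence, conditionally on $W_{j-1}$, the variable $d_j$ is geometric with mean $n/(n-c_{j-1})$, so $\E[d_j(n-c_{j-1}) \mid W_{j-1}] = n$, and summing the $k$ identical phase contributions gives $\sum_{i=1}^n \E[\widetilde{\tau}_i(G)] = kn$. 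The only delicate point I anticipate is the conditioning/strong-Markov step, namely that the success parameter of $d_j$ is really $(n-c_{j-1})/n$ regardless of the history up to time $T_{j-1}$; this is the place I would write most carefully in a polished version of the proof.
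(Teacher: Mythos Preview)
Your proof is correct and is essentially the same argument as the paper's: both decompose the total waiting time into the $k$ rank-increase phases and use that, conditional on the span reached so far, the waiting time for the next rank increase is geometric with mean $n/(n-c_{j-1})$, so each phase contributes exactly $n$. The only cosmetic difference is that you reach the identity $\sum_i \widetilde{\tau}_i(G)=\sum_{j=1}^k d_j(n-c_{j-1})$ via the indicator decomposition $\widetilde{\tau}_i=\sum_{t\ge 0}\mathbf{1}[g_i\notin V_t]$, whereas the paper writes it as $\sum_i \delta_i(G)T_i(G)$ and then reverses the order of summation; the resulting phase-by-phase computation is identical.
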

To provide intuition for the proof of Theorem~\ref{thm:sumra}, we start with an example. We introduce some new random variables that we will use in the proof: For a generator matrix $ G \in \mathbb{F}_q^{k \times n} $ and $ i \in [k] $, let $ t_i(G) $ be the random variable that represents the number of draws until we read the $ i $-th linearly independent column, after having read $ i-1 $ linearly independent columns. We also define $ \delta_i(G) $ to be the random variable representing the number of columns that are recovered as linear combinations of the first $ i $ linearly independent columns that were read (in addition to what was already recovered with the first $ i-1 $ linearly independent columns). Furthermore, we define the random variable $ T_i(G) = \sum_{\ell=1}^i t_{\ell}(G) $, which represents the number of draws until we have recovered $ i $ linearly independent columns. We observe that
\begin{align} \label{eq:rewrite}
    \sum_{i=1}^n \widetilde{\tau}_i(G) = \sum_{i=1}^k \delta_i(G) T_i(G) = \sum_{i=1}^k \delta_i(G) \sum_{\ell=1}^i t_\ell(G).
\end{align}
% To give an intuition for the proof of Theorem~\ref{thm:sumra}, we start with an example. We introduce some new random variables, which we will use in the proof: For a generator matrix $G \in \F_q^{k \times n}$ and $i \in [k]$, we let $t_i(G)$ be the random variable that governs the number of draws until we read the $i$-th linearly independent column, after having read $i-1$ linearly independent columns. We also let $\delta_i(G)$ be the random variable that governs the number of columns that are recovered as linear combinations of the first $i$ linearly independent columns that were read (in addition to what was already recovered with the first $i-1$ linearly independent columns). We further define the random variable $T_i(G) = \sum_{\ell=1}^i t_{\ell}(G)$, which governs the number of draws until we recovered $i$ linearly independent columns. We observe that 
% \begin{align} \label{eq:rewrite}
%     \sum_{i=1}^n \widetilde{\tau}_i(G) =\sum_{i=1}^k\delta_i(G)T_i(G) =  \sum_{i=1}^k\delta_i(G)\sum_{\ell=1}^it_\ell(G).
% \end{align}

\begin{example} \label{ex:prof}
Let $\mC$ be the $q$-ary simplex code of dimension~$4$ over~$\F_2$ and let $G\in \F_2^{4 \times 15}$ be its generator matrix. 
%For $i \in [4]$, we let $t_i(G)$ be the random variable that governs the number of draws until we read a new column, i.e., a column that is not a linear combination of the already recovered columns, and we let $\delta_i(G)$ be the random variable that governs the number of columns that are recovered together with the $i$-th column.
Note that since the columns of $G$ are all the non-zero vectors in $\F_2^{4}$, every time we read a new (linearly independent) column, we recover not only this column, but also the sum of this column with any of the previously recovered columns. Therefore, $\delta_i(G)$ is deterministic in this example for all $i \in [4]$ and we have
\begin{align} \label{eq:ham1}
\delta_1(G) = 1, \, \delta_2(G) = 2,  \, \delta_3(G) = 4, \, \delta_4(G) = 8.
\end{align}
Moreover, since $t_i(G)$ is a geometric random variable with success probability ${\left(15-\delta\right)}/15$, where $\delta = \sum_{\ell=1}^{i-1} \delta_{\ell}(G) $, we have
\begin{align} \label{eq:ham2}
&\E[t_1(G)] = {15}/{15}, \; \E[t_2(G)] = {15}/{14}, \nonumber \\
&\E[t_3(G)] = {15}/{12}, \; \E[t_4(G)] = {15}/{8}.
\end{align}
Combining~\eqref{eq:rewrite} with~\eqref{eq:ham1} and~\eqref{eq:ham2}, we obtain
\begin{align*}
    \sum_{i=1}^{15} \E\left[ \widetilde{\tau}_i(G) \right] &= \sum_{i=1}^4\delta_i(G)\sum_{\ell=1}^i \E\left[t_\ell(G) \right]= 4\cdot 15 = 60.
\end{align*}
\end{example}

We can now prove Theorem~\ref{thm:sumra} using the random variables used before Example~\ref{ex:prof} following analogous reasoning.

\begin{proof}[Proof of Theorem~\ref{thm:sumra}]
% In this proof, for $i \in [k]$, we let $t_i(G)$ and $\delta_i(G)$ be the random variables defined in Example~\ref{ex:prof}. Since the dimension of the code is $k$, $t_i(G)$ is only defined for $i \in [k]$ and we furthermore define the random variable $T_i(G) = \sum_{\ell=1}^it_\ell(G)$ for $i\in[k]$. We observe that 
% $$\sum_{i=1}^n \widetilde{\tau}_i(G) =\sum_{i=1}^k\delta_i(G)T_i(G) =  \sum_{i=1}^k\delta_i(G)\sum_{\ell=1}^it_\ell(G).$$ 
For $\ell \in [k]$, we let $\Delta_{\ell}:=\sum_{i=1}^{\ell-1} \delta_i(G)$. Notice that if $\Delta_{\ell}=\delta$, then $t_i(G)$ is a geometric random variable with success probability ${\left(n-\delta\right)}/n$  and so
\begin{align} \label{eq:condprob}
    \E\Big[t_i(G) \, \big| \, \Delta_{\ell}=\delta \Big]= \frac{n}{n-\delta}\; \textnormal{ for all $i \in [k]$}.
\end{align}
Rewriting~\eqref{eq:rewrite} we have
\begin{align*}
    \sum_{i=1}^n \widetilde{\tau}_i(G) =  \sum_{i=1}^k\delta_i(G)\sum_{\ell=1}^it_\ell(G)=\sum_{\ell=1}^k t_{\ell}(G)\sum_{i=\ell}^k \delta_i(G) = \sum_{\ell=1}^k t_{\ell}(G)\big( n-\Delta_{\ell}\big),
\end{align*}
where we used that $\sum_{i=1}^k \delta_i(G) = n$ and so $\sum_{i=\ell}^k \delta_i(G) = n-\Delta_{\ell}$.
Summing over all possible outcomes of $\Delta_{\ell}$ for $\ell \in [k]$, by the law of total expectation, we have 
\begin{align*}
\E\left[\sum_{i=1}^n \widetilde{\tau}_i(G)\right] &=\sum_{\ell=1}^k\sum_{\delta=0}^{n-1}\E\Big[ t_{\ell}(G)\big( n-\Delta_{\ell}\big) \, \big| \, \Delta_{\ell}=\delta\Big]\cdot\Pr\Big[\Delta_{\ell}=\delta\Big] \\
&= \sum_{\ell=1}^k\sum_{\delta=0}^{n-1}\frac{n(n-\delta)}{n-\delta}\Pr\Big[\Delta_{\ell}=\delta\Big] \\
&= kn \sum_{\delta=0}^{n-1}\Pr\Big[\Delta_{\ell}=\delta\Big] = kn,
\end{align*}
where the second-to-last equality follows from~\eqref{eq:condprob}.
\end{proof}

The result of Theorem~\ref{thm:sumra} inspires the following definition.

\begin{definition} \label{def:recbal}
Let $\mC \subseteq \F_q^n$ be a code with generator matrix~$G$. 
We call $\mC$ a \textbf{recovery balanced} code if ${\E[\widetilde{\tau}_1(G)] = \dots = \E[\widetilde{\tau}_n(G)].}$
\end{definition}

Note that, by Claim~\ref{claim_good}, whether or not a code is recovery balanced does \textit{not} depend on the choice of the generator matrix $G$, and thus it is indeed a code property.

The following is an immediate consequence of Theorem~\ref{thm:sumra}.
\begin{corollary} \label{cor:alwaysk}
Let $\mC \subseteq \F_q^n$ be a recovery balanced code 
and let $G$ be the systematic generator matrix of $\mC$. We have $\E[{\tau}_i(G)]= k$ for all $i \in [k]$ and $\tmax(G) = k$.
\end{corollary}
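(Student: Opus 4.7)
The plan is to chain together two simple observations: one identifying $\tau_i(G)$ with $\widetilde{\tau}_i(G)$ on the systematic indices, and the other pinning down the common value of $\E[\widetilde{\tau}_i(G)]$ via the total-sum identity.

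First, I would exploit the hypothesis that $G$ is systematic. This means that for each $i \in [k]$ the $i$-th column of $G$ is exactly the standard basis vector $e_i$. Consequently, for $i \in [k]$, the event ``$e_i$ lies in the $\F_q$-span of the columns drawn so far'' coincides with the event ``the $i$-th column $g_i = e_i$ lies in that span''. Hence, as random variables on the same sample space, $\tau_i(G) = \widetilde{\tau}_i(G)$ for every $i \in [k]$, and in particular $\E[\tau_i(G)] = \E[\widetilde{\tau}_i(G)]$ for all $i \in [k]$.

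Next, I would combine Theorem~\ref{thm:sumra} with Definition~\ref{def:recbal}. The theorem gives $\sum_{i=1}^n \E[\widetilde{\tau}_i(G)] = kn$, and the recovery-balanced property asserts that the $n$ summands on the left are all equal. Dividing by $n$ yields $\E[\widetilde{\tau}_i(G)] = k$ for every $i \in [n]$. Restricting to $i \in [k]$ and using the first step, we obtain $\E[\tau_i(G)] = k$ for each $i \in [k]$, from which $\tmax(G) = \max_{i \in [k]} \E[\tau_i(G)] = k$ is immediate.

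There is no real obstacle here beyond noticing the identification $\tau_i(G) = \widetilde{\tau}_i(G)$ for systematic $G$ and $i \in [k]$; the corollary is essentially a direct bookkeeping consequence of Theorem~\ref{thm:sumra} together with the definition of recovery balanced. One could also note in passing that the assumption ``$G$ is systematic'' is used only to make the parameter $\tmax(G)$ meaningful for the information coordinates; the underlying statement is really about the code $\mC$ via $\widetilde{\tau}_i$, which by Claim~\ref{claim_good} does not depend on the choice of generator matrix.
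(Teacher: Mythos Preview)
Your argument is correct and matches the paper's approach exactly: the paper states this corollary as an immediate consequence of Theorem~\ref{thm:sumra}, having already observed just before it that $\widetilde{\tau}_i(G)=\tau_i(G)$ for systematic $G$ and $i\in[k]$. There is nothing to add.
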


\begin{remark} \label{rem:avvg}
Even though intuitively one would think that it is always better to consider a generator matrix in systematic form, instead of a generator matrix in non-systematic form, there does not seem to be an obvious way of proving this. However, through lengthy computations, one can find the average expected number of draws to recover the $i$-th information strand for some $i \in [k]$ (note that this average does not depend on the choice of the index $i$) over all systematic, and unrestricted, generator matrices, respectively. More formally, we computed
\begin{align*}
  \frac{\sum_{G \in \mG}\E[{\tau}_i(G)]}{|\mG|}  \qquad \textnormal{ and } \qquad \frac{\sum_{G \in \mG_{\operatorname{sys}}}\E[{\tau}_i(G)]}{|\mG_{\operatorname{sys}}|}
\end{align*}
where $\mG:=\{G \in \F_q^{k \times n}: \rk(G)=k\}$ and $\mG_{\operatorname{sys}} :=\{G \in \F_q^{k \times n}: \rk(G)=k, \; \textnormal{$G$ is systematic}\}$. The explicit expressions for these average values together with their proofs can be found in Appendix~\ref{sec:apen}, and we include a plot to illustrate the comparison between the two average values. It is evident from the plot in Figure~\ref{fig:averages} that \emph{on average} systematic generator matrices are favorable. However, computing the difference between the averages does not seem to be easy.
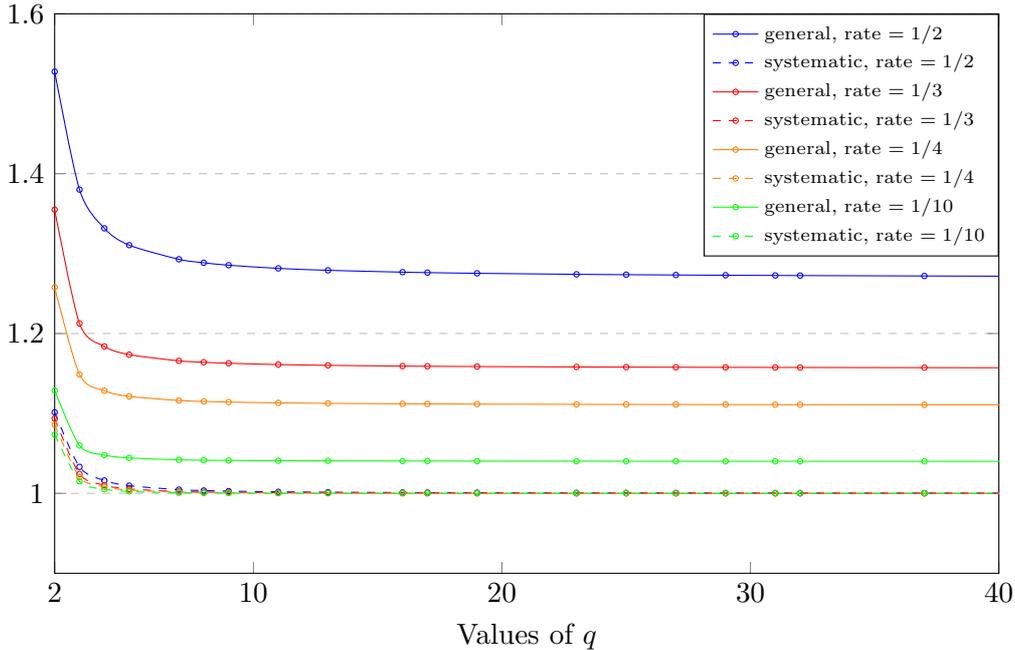
\begin{figure}[ht!]
\centering
\begin{tikzpicture}[scale=1]
\begin{axis}[legend style={at={(1,1)}, legend style={cells={align=left}}, anchor = north east, /tikz/column 2/.style={
                column sep=5pt}},
		legend cell align={left},
		width=14cm,height=9cm,
    xlabel={Values of $q$},
    xmin=2, xmax=40,
    ymin=0.9, ymax=1.6,
    xtick={2,10,20,30,40},
    ytick={1,1.2,1.4,1.6},
    ymajorgrids=true,
    grid style=dashed,
    every axis plot/.append style={},  yticklabel style={/pgf/number format/fixed}
]
\addplot+[color=blue,mark=o,mark size=1pt,smooth]
coordinates {
(2,1.52774151096388093803265104196)
(3,1.38007009178629426812279596476)
(4,1.33160423192185204864749820150)
(5,1.31058544225923601120457710330)
(7,1.29289619223208050622211458495)
(8,1.28854473422007195156331252619)
(9,1.28549448013483940974784937836)
(11,1.28153086102024841244155985503)
(13,1.27908449988711187514361087002)
(16,1.27679109030227367305194891732)
(17,1.27624111849744021376691524668)
(19,1.27534489212982711759445243414)
(23,1.27408608230798740474868586790)
(25,1.27362749515945839394193183678)
(27,1.27324511939029647350989679192)
(29,1.27292146569954804303271708642)
(31,1.27264400456806180839562108823)
(32,1.27251961988891971852811908883)
(37,1.27200747617078996853002570598)
(41,1.27169487718432187227336443851)
(43,1.27156205405279000010773497717)
(47,1.27133269049370239062680236262)
(49,1.27123299414255817084608855403)
(53,1.27105757473989013831294414800)
(59,1.27084152551718568921722974721)
(61,1.27077946354732401055314398375)
(64,1.27069401796107386503228412424)
(67,1.27061660056833598862872960531)
(71,1.27052402659912856157405873104)
(73,1.27048171590535302612666728120)
(79,1.27036817391335966779454314730)
(81,1.27033421682941707839160364039)
(83,1.27030196133908104716380605570)
(89,1.27021421538898080274792809427)
(97,1.27011468055033913962151523492)
    };
\addplot+[color=blue,mark=o,mark size=1pt,smooth,dashed,mark options={solid}]
coordinates {
(2,1.10137361798967633928571428571)
(3,1.03313093032767151499679350059)
(4,1.01608617558044248393603733608)
(5,1.00954482761007104000000000000)
(7,1.00461132268223244630355671950)
(8,1.00351973509469853951259210589)
(9,1.00279640175251856552536612796)
(11,1.00192170162414031345774042442)
(13,1.00142783768651414157834844742)
(16,1.00100485270869416091660722919)
(17,1.00091009118407839319675085260)
(19,1.00076184270657142499266107219)
(23,1.00056766865797436829021746684)
(25,1.00050135867722599055808204044)
(27,1.00044800187911573940363206443)
(29,1.00040426624980612810764820276)
(31,1.00036784860805642000701148816)
(32,1.00035185303768548712136269109)
(37,1.00028821410379702972829053351)
(41,1.00025118495278391964293692093)
(43,1.00023588143630880316135094620)
(47,1.00021007298751578809921794071)
(49,1.00019910302407689835729307093)
(53,1.00018017237679927801578258332)
(59,1.00015751932048089280126174003)
(61,1.00015114935915690081505250722)
(64,1.00014248090878988017931323388)
(67,1.00013472934772201977267199695)
(71,1.00012558922201244536393156993)
(73,1.00012145889695954008109539912)
(79,1.00011052254458283900999002338)
(81,1.00010729383620315102105641453)
(83,1.00010424495928935219359868231)
(89,1.00009604041527781606736842031)
(97,1.00008689318330501225196639914)
};
\addplot+[color=red,mark=o,mark size=1pt,smooth]
coordinates {
(2,1.35494026405756900360901584961)
(3,1.21261107399542487168333517869)
(4,1.18386014769622291825368673824)
(5,1.17359274883088885333030092340)
(7,1.16583081986421192365832197414)
(8,1.16402051821427865038477689990)
(9,1.16276832826853084847781881877)
(11,1.16115567229807590215447899755)
(13,1.16016468370034118978819770108)
(16,1.15923562560924899650506486246)
(17,1.15901251934518610325969951757)
(19,1.15864849319265955159976221989)
(23,1.15813593356320068162920110754)
(25,1.15794875421887240080317326283)
(27,1.15779246867537227408975219278)
(29,1.15766002192856227923943754892)
(31,1.15754635333139050282558806051)
(32,1.15749535731735404679976592767)
(37,1.15728512082652049714138981337)
(41,1.15715657996742233589415180441)
(43,1.15710191103769088271866849396)
(47,1.15700743202310992118362871808)
(49,1.15696633526781370004801159844)
(53,1.15689397913127982304854060702)
(59,1.15680478438522018337030944573)
(61,1.15677914593990311408347549202)
(64,1.15674383531529796240703274925)
(67,1.15671183012671378745891024472)
(71,1.15667354367822306757866753303)
(73,1.15665603935553627459659630141)
(79,1.15660904853152827421051499196)
(81,1.15659498999132476261310241519)
(83,1.15658163380313877219001791421)
(89,1.15654528991109992974996516719)
(97,1.15650404446536464340636138022)
    };
\addplot+[color=red,mark=o,mark size=1pt,smooth,dashed,mark options={solid}]
coordinates {
(2,1.09362495141209695826877247204)
(3,1.02418710213250541566549623007)
(4,1.01019346656679639066624471375)
(5,1.00545866931828670014310685425)
(7,1.00228059773547398215679581243)
(8,1.00164830909592530009662043525)
(9,1.00124993595224054605720141140)
(11,1.00079609066018828863006186600)
(13,1.00055710145125064393815273962)
(16,1.00036552704380773747028622059)
(17,1.00032465360359511941011930463)
(19,1.00026249565755028050921084537)
(23,1.00018489480291359950672841264)
(25,1.00015955373374828761595280220)
(27,1.00013965680798782239339636489)
(29,1.00012370598151730020799619662)
(31,1.00011069050428773011181271522)
(32,1.00010505509816792690234074705)
(37,1.00008316789923042597998376975)
(41,1.00007086215838500870822798361)
(43,1.00006587814993076488170392734)
(47,1.00005761800179444394243368185)
(49,1.00005416515563605551338986717)
(53,1.00004829329046615499719483797)
(59,1.00004142012498793997947905083)
(61,1.00003951932751757245683779522)
(64,1.00003695629391623732035164633)
(67,1.00003468816920176659711115183)
(71,1.00003204369771754639202083652)
(73,1.00003085965625419803995101294)
(79,1.00002775874430256308161744679)
(81,1.00002685304697004010069625140)
(83,1.00002600200201843560714378067)
(89,1.00002373263296015593467840104)
(97,1.00002123956438560007401791921)
};
\addplot+[color=orange,mark=o,mark size=1pt,smooth]
coordinates {
(2,1.25786053707200107352048319110)
(3,1.14875915126772760495832358027)
(4,1.12837387218214142103249479740)
(5,1.12133386610959394501924907432)
(7,1.11619414508855991657019143850)
(8,1.11502976580642042399790090113)
(9,1.11423349275931377706318771707)
(11,1.11321918131507104644024312787)
(13,1.11260180140573503595712485899)
(16,1.11202653538169389226278972830)
(17,1.11188881665203016971961377973)
(19,1.11166440306622158833144713860)
(23,1.11134892096397243694096984725)
(25,1.11123382144882040490009296284)
(27,1.11113775157702518015872477342)
(29,1.11105635363278330095543868273)
(31,1.11098650616416056615422447581)
(32,1.11095517223000874754074767748)
(37,1.11082600366031370434826745202)
(41,1.11074703079862746010145404159)
(43,1.11071344272243577334172417303)
(47,1.11065539376619088305830095881)
(49,1.11063014241022647721843066281)
(53,1.11058568214000142282053042914)
(59,1.11053087088136688088604695448)
(61,1.11051511473622962115866451682)
(64,1.11049341372407801927069085783)
(67,1.11047374328649419255959393667)
(71,1.11045021123494638643481190466)
(73,1.11043945209593413054596473407)
(79,1.11041056748584018998157039986)
(81,1.11040192548124043549828941082)
(83,1.11039371504110128102198097205)
(89,1.11037137246369336295106074411)
(97,1.11034601494306384708027088809)
    };
\addplot+[color=orange,mark=o,mark size=1pt,smooth,dashed,mark options={solid}]
coordinates {
(2,1.08591162314251783693497510375)
(3,1.02040604337228257283364180794)
(4,1.00805655239943871583940097735)
(5,1.00408549621578022607989046902)
(7,1.00156691432108722311039115412)
(8,1.00109440226724472856254991334)
(9,1.00080538576584795949493088185)
(11,1.00048781882172598777403983487)
(13,1.00032777295587572618432224506)
(16,1.00020484828370902013043933917)
(17,1.00017945586065607269837060636)
(19,1.00014155983304843051623979009)
(23,1.00009574187181598970732723069)
(25,1.00008122773567807605591291083)
(27,1.00007002050992674386117053071)
(29,1.00006117147280764139659253780)
(31,1.00005405064088227011041389276)
(32,1.00005099779996093655563509297)
(37,1.00003933601031091123419079789)
(41,1.00003293438072521014735762415)
(43,1.00003037817003431815394507090)
(47,1.00002619331578197083151774882)
(49,1.00002446461698079960125600646)
(53,1.00002155532697967952383785933)
(59,1.00001820341717035385346193796)
(61,1.00001728756367905854114857790)
(64,1.00001606085379130205710053720)
(67,1.00001498355796866004729576082)
(71,1.00001373787391707976069482032)
(73,1.00001318392387833506859230247)
(79,1.00001174496186236540754683133)
(81,1.00001132804815595374641909353)
(83,1.00001093774384639815142721056)
(89,1.00000990412008029299993044218)
(97,1.00000878130069917961627237808)
};
\addplot+[color=green,mark=o,mark size=1pt,smooth, solid]
coordinates {
(2,1.12859684771229293609439871817)
(3,1.05984646488019740220210424029)
(4,1.04792426616760386115335053805)
(5,1.04424703104997234176574817094)
(7,1.04191653781495314800155636983)
(8,1.04145607722432934520694034933)
(9,1.04115956977187431946646454957)
(11,1.04080484455573078891442903362)
(13,1.04060147267048433432962303267)
(16,1.04041982405218523424167378928)
(17,1.04037733841147796767245513770)
(19,1.04030883504935937771660754502)
(23,1.04021386549002070655066195824)
(25,1.04017954481432250569013001482)
(27,1.04015101111245838258268030506)
(29,1.04012690642005942052602591084)
(31,1.04010626874068836468104637827)
(32,1.04009702327870788367472965204)
(37,1.04005898322024743985370622409)
(41,1.04003577491162842529761946343)
(43,1.04002591350266872018302634913)
(47,1.04000888168635273599285775874)
(49,1.04000147678416810760878105450)
(53,1.03998844395188023938342139149)
(59,1.03997238443112715896500189143)
(61,1.03996776923356795428720739703)
(64,1.03996141351644839772973891154)
(67,1.03995565324238233104663354075)
(71,1.03994876294254750208502645815)
(73,1.03994561287332504221336823669)
(79,1.03993715670242909949994901772)
(81,1.03993462686567922626263836753)
(83,1.03993222342691739271876395007)
(89,1.03992568336457667758084530496)
(97,1.03991826115932137647966735498)
    };
\addplot+[color=green,mark=o,mark size=1pt,smooth,dashed,mark options={solid}]
coordinates {
(2,1.07342343932673194334841638894)
(3,1.01516021711211302414573205729)
(4,1.00525393748116944593101811348)
(5,1.00236873145205420290580901139)
(7,1.00074357236400875583402147869)
(8,1.00047671382693749511017685632)
(9,1.00032457749445186839295559178)
(11,1.00017160382000206132477293481)
(13,1.00010275421715739803057683997)
(16,1.00005561165469991632656140052)
(17,1.00004672037499553529529416465)
(19,1.00003414387952702131265282912)
(23,1.00002029305561699406637668117)
(25,1.00001628942684767379553944933)
(27,1.00001335135107346420864651436)
(29,1.00001113748353081397576001801)
(31,1.00000943128721219006626967135)
(32,1.00000872209229569206783291569)
(37,1.00000614984380781375282187222)
(41,1.00000484151584853125797708011)
(43,1.00000434242205987732956274444)
(47,1.00000355697643903384172042941)
(49,1.00000324480562206939182040168)
(53,1.00000273699182201674354411295)
(59,1.00000218151667005423354824129)
(61,1.00000203570826233211699349589)
(64,1.00000184469792663611208044775)
(67,1.00000168117882679821455068961)
(71,1.00000149728053023806801782183)
(73,1.00000141736478978851550229983)
(79,1.00000121541801244480629358253)
(81,1.00000115849467635124893376034)
(83,1.00000110587720861003580326124)
(89,1.00000096978858624832634837725)
(97,1.00000082759777925743863215285)
};
\legend{\tiny{general, rate $=1/2$}, \tiny{systematic, rate $=1/2$},\tiny{general, rate $=1/3$}, \tiny{systematic, rate $=1/3$},\tiny{general, rate $=1/4$}, \tiny{systematic, rate $=1/4$},\tiny{general, rate $=1/10$}, \tiny{systematic, rate $=1/10$}
}
\end{axis}
\end{tikzpicture}
\caption{\label{fig:averages} Normalized random access expectation from the formulas in Appendix~\ref{sec:apen} for $k=4$ and various rates (where the rate is the ratio between the dimension $k=4$ and the length $n$).}
%\vspace{-0.3cm}
\end{figure}
Note that for the average for systematic generator matrices, the ratio between the random access expectation and the dimension of the considered codes goes to 1 as $q$ grows. This can be explained by the fact that for large $q$ MDS-matrices are dense within the set of rank-$k$ matrices, and from Corollary~\ref{cor:mds} we know that MDS-matrices have expectation~$k$. Furthermore, in~\cite{flajolet1992birthday} it was shown that the expected number of draws to retrieve any size-$k$ subset of the $n$ columns is $n(H_n-H_{n-k})$. It is interesting to observe that from experiments it seems that the ratio of the non-systematic case approaches ${n(H_n-H_{n-k})}/{k}$. This indicates that for the non-systematic case, on average, one cannot usually do better than retrieving all the $k$ information strands in order to retrieve only the $i$-th information strand.
\end{remark}

\section{Recovery Balanced Codes}\label{sec:rec_bal}
This section studies more in detail codes that are recovery balanced; see Definition~\ref{def:recbal}. We give three different sufficient conditions for a code to be recovery balanced, one of which works mainly for binary codes, whereas the other two do not have any restrictions on $q$.
Using these conditions we show that various families of codes (such as MDS codes, Hamming codes, simplex codes, binary Reed-Muller codes, binary Golay codes) are recovery balanced. This allows us to compute the value of
$\tmax(G)$ for any of their systematic generator matrices by Corollary~\ref{cor:alwaysk}.

\subsection{Codes with transitive permutation automorphism group}
The condition for a code to be recovery balanced presented in this subsection is connected to its automorphism group; see e.g.~\cite{macwilliams1977theory}. Before presenting this condition, we will recall the necessary definitions.

% Let $\mbox{Iso}(\F_q^n)$ be the set of $\F_q$-linear isometries $f:\F_q^n \to \F_q^n$, i.e., the set of $\F_q$-linear maps $f:\F_q^n \to \F_q^n$ that preserve the Hamming weight. Then every $f \in \mbox{Iso}(\F_q^n)$ is the composition of the component-wise scaling by a $\lambda \in (\F_q \setminus \{0\})^n$ and a permutation $\sigma \in S_n$. 
% More precisely, for $\lambda \in (\F_q \setminus \{0\})^n$ let $f_\lambda$ be defined by $f_\lambda(x)=(\lambda_1x_1, \ldots, \lambda_nx_n)$ for all $x \in \F_q^n$. For $\sigma \in S_n$ let $f_\sigma$ be defined by $f_\sigma(x)=(x_{\sigma(1)}, \ldots, x_{\sigma(n)})$ for all $x \in \F_q^n$. Then each $f \in \mbox{Iso}(\F_q^n)$ is of the form $f=f_\lambda \circ f_\sigma$. Moreover, $\lambda$ and $\sigma$ are unique, 
% from which it follows that 
% $\mbox{Iso}(\F_q^n)$ has cardinality $(q-1)^nn!$.
For $\sigma \in S_n$ (the symmetric group of order $n$) and $x \in \F_q^n$, let $f_\sigma(x)$ be defined by $f_\sigma(x):=(x_{\sigma(1)}, \ldots, x_{\sigma(n)})$ for all $x \in \F_q^n$. For a code $\mC \subseteq \F_q^n$ we let $f_{\sigma}(\mC) = \{f_{\sigma}(x) : x \in \mC\}$.

\begin{definition}
    For a linear code $\mC \subseteq \F_q^n$ the \textbf{permutation automorphism group} of $\mC$ is $\mbox{PAut}(\mC)=\{\sigma \in S_n : f_\sigma(\mC)=\mC\}$. $\mbox{PAut}(\mC)$ is called \textbf{transitive} if for each $i,j \in [n]$ there exists $\sigma \in \mbox{PAut}(\mC)$ with $\sigma(i)=j$.
\end{definition}

% Note that, for codes over the binary field $\F_2$, the automorphism group and the permutation automorphism group are the same. Therefore, when working over $\F_2$, we use the terms ``permutation automorphism group'' and ``automorphism group'' interchangeably. \ey{I think we should remove this paragraph as we do not discuss the automorphism group}

\begin{lemma} \label{lem:auto}
    Let $\mC \subseteq \F_q^n$ be a linear code. Suppose there exists $\sigma \in \mbox{PAut}(\mC)$ and $i,j \in [n]$ with
    $\sigma(i)=j$. Then $\E[\widetilde{\tau}_i(G)]=\E[\widetilde{\tau}_j(G)]$ for any generator matrix $G$ of $\mC$.
\end{lemma}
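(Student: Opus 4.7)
The plan is to reduce the statement to a combinatorial identity about codewords of the dual code $\mC^\perp$, which can then be handled by direct transport along $\sigma$. Because Claim~\ref{claim_good} tells us that $\E[\widetilde{\tau}_i(G)]$ depends only on $\mC$ (through the quantities $\widetilde{\alpha}_i(s)$), it suffices to prove that $\widetilde{\alpha}_i(s) = \widetilde{\alpha}_j(s)$ for every $s$, and for this I intend to use the explicit description derived in the proof of Claim~\ref{claim_good}, namely
\begin{align*}
   \widetilde{\alpha}_i(s) = \left|\{S \subseteq [n]\setminus\{i\} : |S|=s,\; \exists\, x \in \mC^\perp \text{ with } i \in \supp(x) \subseteq S \cup \{i\}\}\right| + \binom{n-1}{s-1}.
\end{align*}

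Next I would record the short lemma that the permutation automorphism group is preserved under duality: if $f_\sigma(\mC) = \mC$, then $f_\sigma(\mC^\perp) = \mC^\perp$. For any $x \in \mC$ and $y \in \mC^\perp$ one computes $\langle x, f_\sigma(y)\rangle = \sum_\ell x_\ell y_{\sigma(\ell)} = \langle f_{\sigma^{-1}}(x), y \rangle = 0$ since $f_{\sigma^{-1}}(x) \in \mC$. Hence $f_\sigma(\mC^\perp) \subseteq \mC^\perp$, and equality follows by dimensions. I also need the elementary identity $\supp(f_{\sigma^{-1}}(x)) = \sigma(\supp(x))$, which is immediate from the definition of $f_{\sigma^{-1}}$.

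With these tools in hand, the bijection between the sets counting $\widetilde{\alpha}_i(s)$ and $\widetilde{\alpha}_j(s)$ is given by $S \mapsto \sigma(S)$. Indeed, $|S|=s$ and $i \notin S$ imply $|\sigma(S)|=s$ and $j = \sigma(i) \notin \sigma(S)$. Moreover, if $x \in \mC^\perp$ satisfies $i \in \supp(x) \subseteq S \cup \{i\}$, then $x' := f_{\sigma^{-1}}(x)$ lies in $\mC^\perp$ by the previous paragraph, and satisfies
\begin{align*}
    j = \sigma(i) \in \sigma(\supp(x)) = \supp(x') \subseteq \sigma(S \cup \{i\}) = \sigma(S) \cup \{j\}.
\end{align*}
The inverse map is $S' \mapsto \sigma^{-1}(S')$, constructed symmetrically using that $\sigma^{-1} \in \mbox{PAut}(\mC)$ sends $j$ to $i$. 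This establishes $\widetilde{\alpha}_i(s) = \widetilde{\alpha}_j(s)$ for every $s \in \{1, \ldots, n-1\}$, and plugging into the formula (\ref{eq:aaa}) from the proof of Claim~\ref{claim_good} yields $\E[\widetilde{\tau}_i(G)] = \E[\widetilde{\tau}_j(G)]$ as desired.

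There is no serious obstacle here; the only mild pitfall is keeping the conventions for $\supp(f_\sigma(x))$ versus $\sigma(\supp(x))$ straight, so that the bijection on subsets is built from $\sigma$ and the corresponding map on $\mC^\perp$ is built from $\sigma^{-1}$ (or vice versa).
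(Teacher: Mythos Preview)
Your proof is correct. Both your argument and the paper's share the same skeleton: reduce to showing $\widetilde{\alpha}_i(s)=\widetilde{\alpha}_j(s)$ for every $s$, exhibit the bijection $S\mapsto\sigma(S)$ between the relevant families of subsets, and conclude via~\eqref{eq:aaa}. The difference is only in which description of $\widetilde{\alpha}_i(s)$ is used. The paper stays on the primal side: it observes that the column-permuted matrix $\sigma(G)=(g_{\sigma(1)},\ldots,g_{\sigma(n)})$ is again a generator of $\mC$, so computing $\widetilde{\alpha}_i(s)$ with $\sigma(G)$ yields directly $|\{S:|S|=s,\,g_j\in\langle g_{\sigma(\ell)}:\ell\in S\rangle\}|$, and then the bijection $S\mapsto\sigma(S)$ identifies this with $\widetilde{\alpha}_j(s)$. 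You instead pass to the dual side via the support description from Claim~\ref{claim_good}, which requires the auxiliary (and standard) fact that $\mbox{PAut}(\mC)=\mbox{PAut}(\mC^\perp)$; the paper in fact invokes that same fact later, in the discussion preceding Proposition~\ref{prop:dualll}. The paper's route is marginally shorter since it avoids the detour through $\mC^\perp$, but your argument has the small advantage of making the role of duality completely explicit.
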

\begin{proof}
Let $G$ be a generator matrix of $\mC$, where we denote by $g_{\ell}$ the $\ell$-th column of $G$ for $\ell \in [n]$. Since $\sigma \in \mbox{PAut}(\mC)$ we have that 
\begin{align*}
    \sigma(G) := (g_{\sigma(1)},\dots, g_{\sigma(n)}) \in \F_q^{k \times n}
\end{align*}
is a generator matrix for $\mC$ as well. 
We have that $\widetilde{\alpha}_i(s)$ does not depend on the choice of the generator matrix $G$, but only on the code $\mC$, for all $s \in [n]$; see the proof of Claim~\ref{claim_good}. 
Therefore by computing $\widetilde{\alpha}_i(s)$ with the two generator matrices we have
\begin{align*}
    \widetilde{\alpha}_i(s) = |\{S \subseteq [n] : |S| = s, \, g_i \in \langle g_{\ell} : {\ell} \in S\rangle\}| =
     |\{S \subseteq [n] : |S| = s, \, g_{\sigma(i)} \in \langle g_{\sigma({\ell})} : {\ell} \in S\rangle\}|
\end{align*}
for all $s$.
Since $\sigma(i)=j$, we have
\begin{align*}
   |\{S \subseteq [n] : |S| = s, \, g_{\sigma(i)} \in \langle g_{\sigma({\ell})} : {\ell} \in S\rangle\}| &= |\{S \subseteq [n] : |S| = s, \, g_j \in \langle g_{\sigma({\ell})} : {\ell} \in S\rangle\}|  \\
   &= |\{S \subseteq [n] : |S| = s, \, g_j \in \langle g_{\ell} : {\ell} \in S\rangle\}|   
\end{align*}
for all $s$, where the latter equality follows from the fact that $\sigma$ is bijective and thus $\{g_1,\dots,g_n\} = \{g_{\sigma(1)},\dots,g_{\sigma(n)}\}$ as multisets. In more detail, we use that the number of subsets of size $s$ of $\{g_1,\dots,g_n\}$ that have $g_j$ in their span is the same as the number of subsets of size $s$ of the same set $\{g_{\sigma(1)},\dots,g_{\sigma(n)}\}$ that have $g_j$ in their span. From a straightforward modification of Lemma~\ref{lem:fi}, stated in~\eqref{eq:aaa}, it then immediately follows that $\E[\widetilde{\tau}_i(\mC)]=\E[\widetilde{\tau}_j(\mC)]$.
\end{proof}

The next corollary follows directly from Lemma~\ref{lem:auto}.
\begin{corollary} \label{cor:paut}
    If $\mbox{PAut}(\mC)$ is transitive, then $\mC$ is recovery balanced.
\end{corollary}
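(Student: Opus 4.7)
The plan is to derive Corollary~\ref{cor:paut} as an almost immediate consequence of Lemma~\ref{lem:auto}. By the definition of transitivity of $\mbox{PAut}(\mC)$, for any pair of indices $i,j \in [n]$ there exists some permutation $\sigma \in \mbox{PAut}(\mC)$ with $\sigma(i)=j$. Fixing any generator matrix $G$ of $\mC$, Lemma~\ref{lem:auto} then directly yields $\E[\widetilde{\tau}_i(G)] = \E[\widetilde{\tau}_j(G)]$.

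Since $i,j$ were arbitrary, this common value is the same across all indices in $[n]$, which is exactly the condition
\[
\E[\widetilde{\tau}_1(G)] = \E[\widetilde{\tau}_2(G)] = \cdots = \E[\widetilde{\tau}_n(G)]
\]
required by Definition~\ref{def:recbal}. Hence $\mC$ is recovery balanced. Note that by Claim~\ref{claim_good}, the expectations $\E[\widetilde{\tau}_i(G)]$ depend only on the code $\mC$ and not on the particular choice of generator matrix, so it is legitimate to speak of $\mC$ being recovery balanced without reference to a specific $G$.

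There is essentially no obstacle here: the content of the statement has already been absorbed into Lemma~\ref{lem:auto}, and Corollary~\ref{cor:paut} is the natural ``for all pairs'' strengthening obtained by quantifying over the transitive action. The only thing to make sure of is that the proof indeed invokes Lemma~\ref{lem:auto} once per pair $(i,j)$ (or, equivalently, uses it to conclude that all $\E[\widetilde{\tau}_i(G)]$ coincide), and that the conclusion is phrased in terms of Definition~\ref{def:recbal}. A one-line proof should suffice.
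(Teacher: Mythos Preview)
Your proposal is correct and follows exactly the same approach as the paper: invoke transitivity to obtain, for every pair $i,j$, a permutation $\sigma \in \mbox{PAut}(\mC)$ with $\sigma(i)=j$, and then apply Lemma~\ref{lem:auto}. The paper's proof is essentially a one-sentence version of what you wrote.
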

\begin{proof}
    Since for any $i,j \in [n]$ there exists $\sigma \in \mbox{PAut}(\mC)$ with the property that $\sigma(i)=j$, the statement follows immediately from Lemma~\ref{lem:auto}.
\end{proof}

Note that the reverse statement of Corollary~\ref{cor:paut} is not true in general. For example, the permutation automorphism group of non-binary MDS codes is not necessarily transitive, although we know that MDS codes are recovery balanced; see Proposition~\ref{prop:fimds}. We illustrate this with the following example. %\ey{is there a reference for this?} \ag{ok with the example?} .

% \begin{example}
% Consider the MDS code $\mathcal{C}$ over $\mathbb{F}_3$ with the generator matrix 
% \begin{align*}
% G = \begin{pmatrix}
% 1 & 0 & 1 & 1 \\
% 0 & 1 & 1 & 2
% \end{pmatrix} = (g_1, g_2, g_3, g_4) \in \mathbb{F}_3^{2 \times 4},
% \end{align*}
% where $g_{\ell}$ denotes the $\ell$-th column of $G$.
% Note that there does not exist a permutation $\sigma \in \mathrm{PAut}(\mathcal{C})$ for which $\sigma(1) = 2$. The only way to obtain a generator matrix $\sigma(G) = (g_{\sigma(1)}, g_{\sigma(2)}, g_{\sigma(3)}, g_{\sigma(4)})$ of $\mathcal{C}$ with the property that $\sigma(1) = 2$ is by permuting the two rows of $G$. However, this results in a matrix that does not maintain the same set of columns as $G$.
% \alb{i am confused here: we don't need the same set of cols but the same code (rowspace)}
% \end{example}

\begin{example}
Consider the MDS code $\mathcal{C}$ over $\mathbb{F}_3$ with the generator matrix 
\begin{align*}
G = \begin{pmatrix}
1 & 0 & 1 & 1 \\
0 & 1 & 1 & 2
\end{pmatrix} = (g_1, g_2, g_3, g_4) \in \mathbb{F}_3^{2 \times 4},
\end{align*}
where $g_{\ell}$ denotes the $\ell$-th column of $G$.
There is no $\sigma \in \mathrm{PAut}(\mathcal{C})$ with $\sigma(1) = 2$. Indeed, the only generator matrices of $\mC$ with the second column of $G$ as their first column are those of the form
$$\begin{pmatrix}
0 & \beta & \beta & 2\beta \\
1 & 0 & 1 & 1
\end{pmatrix}$$
with $\beta \in \F_3\setminus \{0\}$. None of these matrices have the same set of columns as $G$.
\end{example}

Corollary~\ref{cor:paut} solves the problem of computing $\tmax(G)$ in the case where~$G$ is the systematic generator matrix of a code with transitive permutation automorphism group. Some examples of such codes include binary Reed-Muller codes (see~\cite[Chapter 13, Section 9]{macwilliams1977theory}), binary Golay codes (see~\cite[Chapter 20, Section 1]{macwilliams1977theory}), and also binary Hamming and simplex codes. From this we obtain the following.

\begin{corollary}
    Binary simplex codes, binary Hamming codes, binary Reed-Muller codes and binary Golay codes are all recovery balanced. 
\end{corollary}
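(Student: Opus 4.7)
The plan is to apply Corollary~\ref{cor:paut} to each code family separately, reducing everything to the task of exhibiting a transitive subgroup of the permutation automorphism group. Since the corollary already packages the work done in Lemma~\ref{lem:auto} together with Theorem~\ref{thm:sumra}, no further combinatorial analysis of recovery sets or computation of $\widetilde{\alpha}_i(s)$ will be needed; the proof becomes a survey of classical facts from coding theory.

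For binary simplex codes, I would use the standard description whose generator matrix has as columns all nonzero vectors of $\F_2^k$. Since $\mathrm{GL}_k(\F_2)$ acts on $\F_2^k\setminus\{0\}$ and permutes this set of columns transitively, each such invertible linear transformation induces an element of $\mathrm{PAut}(\mC)$, and the action on coordinates is transitive. For binary Hamming codes, I would invoke the fact that a permutation $\sigma\in S_n$ fixes a code $\mC$ if and only if it fixes $\mC^\perp$, so $\mathrm{PAut}(\mC)=\mathrm{PAut}(\mC^\perp)$. Applying this to a binary Hamming code, whose dual is a binary simplex code, transfers transitivity from simplex to Hamming codes.

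For binary Reed-Muller codes $\mathrm{RM}(r,m)$, the coordinates are naturally indexed by $\F_2^m$, and the affine group $\mathrm{AGL}_m(\F_2)$ acts on these coordinates and preserves the code (see~\cite[Chapter 13, Section 9]{macwilliams1977theory}). Since the affine group acts transitively on $\F_2^m$, transitivity of $\mathrm{PAut}(\mathrm{RM}(r,m))$ follows. For the binary Golay codes, I would cite the classical fact that the permutation automorphism group of the extended $[24,12,8]$ Golay code is the Mathieu group $M_{24}$, which is $5$-transitive, and that of the $[23,12,7]$ Golay code is $M_{23}$, which is $4$-transitive (see~\cite[Chapter 20, Section 1]{macwilliams1977theory}). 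In particular both groups are transitive.

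In each of the four cases above, the hypothesis of Corollary~\ref{cor:paut} is satisfied, so the code is recovery balanced. The only possible obstacle is finding a uniform reference for the transitivity of $\mathrm{PAut}$ in the Reed-Muller and Golay cases, but both facts are standard and available in~\cite{macwilliams1977theory}. The argument for Hamming codes via duality is the cleanest, and I would make sure to record explicitly the invariance $\mathrm{PAut}(\mC)=\mathrm{PAut}(\mC^\perp)$, since it will also be useful in later discussions in the paper.
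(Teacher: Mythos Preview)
Your proposal is correct and follows essentially the same approach as the paper: both derive the corollary from Corollary~\ref{cor:paut} by observing that each of the four families has a transitive permutation automorphism group, citing~\cite[Chapter 13, Section 9]{macwilliams1977theory} for Reed--Muller and~\cite[Chapter 20, Section 1]{macwilliams1977theory} for Golay. Your write-up simply makes the transitivity arguments for the simplex and Hamming cases more explicit (via the $\mathrm{GL}_k(\F_2)$ action and duality, respectively) than the paper, which merely asserts them.
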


\begin{remark}
In classical coding theory, codes with certain symmetries simplify the analysis and implementation of encoding and decoding algorithms, often leading to more efficient and predictable performance. From this section it is evident that the property of being recovery balanced is a notion of symmetry in a code, generalizing for example the property of having transitive permutation automorphism group. We want to highlight that the random access problem is an instance for when codes with too much symmetry are \emph{bad}.
\end{remark}

\subsection{Two sufficient conditions for codes to be recovery balanced}
In this subsection we give two more sufficient conditions for a code to be recovery balanced, which work over any finite field $\F_q$, and which we apply in the sequel. 
% \begin{remark} \label{rem:albe}
% Let $\mC \subseteq \F_q^n$ be a code with generator matrix $G$ and let $i \in [n]$. 
% \begin{itemize}
%     \item[(i)] Analogously to Lemma~\ref{lem:fi}, one can show that by setting
% \begin{align*}
% \widetilde{\alpha}_i(s) := {|\{S \subseteq [n] : |S| = s, \, g_i \in \langle g_j : j \in S\rangle \}|}
% \end{align*} for all~$i,s \in [n]$, we have
% \begin{align*}
% \E[\widetilde{\tau}_i(G)] = nH_n - \sum_{s=1}^{n-1} \displaystyle \frac{\widetilde{\alpha}_i(s)}{\binom{n-1}{s}}.
% \end{align*}
%     \item[(ii)] Analogously to Lemma~\ref{lem:bsj}, we have that if $\widetilde{\mR}(i)=\{R_1,\dots,R_L\}$ are the recovery sets for the $i$-th encoded strand, then
% \begin{align*}
%     \E[\widetilde{\tau}_i(G)]=n \Big( \sum_{j=1}^{n} H_j\sum_{s=1}^L (-1)^{s+1} \widetilde{\beta}_i(s,j) \Big),
% \end{align*}
% where $\widetilde{\beta}_i(s,j) = \left| \{ S \subseteq [L] : |S|=s, \,|\bigcup_{h \in S} R_h | = j  \}  \right|$.
% \end{itemize}
% \end{remark}
The following lemma follows immediately from the definition of recovery
balance codes and 
the natural analogues of Lemmas~\ref{lem:fi} and~\ref{lem:bsj}, and it provides two code properties that imply recovery balance.

\begin{lemma} \label{lem:recba}
Let $\mC \subseteq \F_q^n$ be a code with generator matrix~$G$.
\begin{itemize}
    \item[(i)] Let $\widetilde{\alpha}_i(s)={|\{S \subseteq [n] : |S| = s, \, g_i \in \langle g_j : j \in S\rangle \}|}$. If the sequence $(\widetilde{\alpha}_i(s) : s \in [n])$ is the same for all $i \in [n]$, then $\mC$ is recovery balanced. 
    \item[(ii)] Let $\widetilde{\mR}(i)=\{R_1,\dots,R_L\}$ be the recovery sets for the $i$-th encoded strand
    and let $\widetilde{\beta}_i(s,j) = \left| \{ S \subseteq [L] : |S|=s, \,|\bigcup_{h \in S} R_h | = j  \}  \right|$.
    If the sequence $(\widetilde{\beta}_i(s,j) : s \in [L], j \in [n])$ is the same for all $i \in [n]$, then $\mC$ is recovery balanced. 
\end{itemize}
\end{lemma}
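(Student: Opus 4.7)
The plan is to observe that both parts follow at once from analogues of Lemma~\ref{lem:fi} and Lemma~\ref{lem:bsj} in which the target of recovery is the $i$-th column $g_i$ instead of the $i$-th standard basis vector $e_i$. Once those two analogues are in hand, the hypothesis of each part directly forces the corresponding formula for $\E[\widetilde{\tau}_i(G)]$ to be independent of $i$, which is exactly Definition~\ref{def:recbal}.

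For part (i), I would point out that the analogue
\begin{equation*}
\E[\widetilde{\tau}_i(G)] \;=\; nH_n - \sum_{s=1}^{n-1} \frac{\widetilde{\alpha}_i(s)}{\binom{n-1}{s}}
\end{equation*}
has already appeared as equation~\eqref{eq:aaa} inside the proof of Claim~\ref{claim_good}. Its derivation is line-for-line identical to the proof of Lemma~\ref{lem:fi}: the number of $s$-subsets of $[n]$ whose columns fail to put $g_i$ in their span is $\binom{n}{s}-\widetilde{\alpha}_i(s)$, and after conditioning on the number of distinct columns drawn in the first $r-1$ steps and applying identity~\eqref{eq:dani}, the same simplification yields the displayed expression. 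Under the hypothesis of (i), the right-hand side does not depend on $i$, so $\mC$ is recovery balanced.

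For part (ii), I would first note that Claim~\ref{claim:dual} has a direct analogue for the sets $\widetilde{\mR}(i)$, so the minimal recovery sets for $g_i$ are well-defined combinatorial objects attached to $\mC$. Using them in place of $\mR(i)$, the inclusion-exclusion argument in the proof of Lemma~\ref{lem:bsj} transfers verbatim: the event $\{\widetilde{\tau}_i(G)\ge r\}$ is the event that the first $r-1$ draws miss at least one index of each $R_h$, and inclusion-exclusion over the family $\{R_1,\dots,R_L\}$ gives
\begin{equation*}
\E[\widetilde{\tau}_i(G)] \;=\; n \Bigl(\sum_{j=1}^n H_j \sum_{s=1}^L (-1)^{s+1}\widetilde{\beta}_i(s,j)\Bigr).
\end{equation*}
Under the hypothesis of (ii), the inner double sum does not depend on $i$, and recovery balance follows.

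There is no real obstacle: both analogues are transparent because the only change from the earlier lemmas is replacing ``$e_i$ in the span'' by ``$g_i$ in the span'', and all of the counting and inclusion-exclusion steps depend solely on the sizes of subsets of $[n]$ that do or do not span the target vector. The only thing to verify carefully when writing the proof is that the derivation of~\eqref{eq:aaa} given inside Claim~\ref{claim_good} can be promoted to a standalone lemma, and that the inclusion-exclusion step in Lemma~\ref{lem:bsj} depends only on the abstract family of minimal recovery sets and not on the fact that the recovered vector is a basis vector; both checks are immediate.
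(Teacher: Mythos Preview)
Your proposal is correct and follows precisely the route the paper indicates: the paper does not write out a proof but simply remarks that the lemma ``follows immediately from the definition of recovery balanced codes and the natural analogues of Lemmas~\ref{lem:fi} and~\ref{lem:bsj}.'' You have supplied exactly those analogues (including the reference to~\eqref{eq:aaa} already derived in Claim~\ref{claim_good}) and drawn the immediate conclusion, so your argument matches the paper's intended proof with more detail filled in.
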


We show how one can apply Lemma~\ref{lem:recba} to three classes of codes. 
%For Propositions~\ref{prop:betahamm} and~\ref{prop:simpfi}, due to space limitations, we omit the proof of the given formulas.
% \begin{definition} \label{def:mds}
% A code $\mC\le \F_q^n$ of dimension $k$ is \textbf{MDS} (\textbf{Maximum Distance Separable}) if for a generator matrix~$G$ of $\mC$, any collection of $k-1$ columns of $G$ is linearly independent.
% \end{definition}
We start with a result whose proof is analogous 
to that of Corollary~\ref{cor:mds}. 

\begin{proposition} \label{prop:fimds}
Let $\mC \subseteq \F_q^n$ be an MDS code of dimension~$k$. For all $i,s \in [n]$ we have
\begin{align*}
    \widetilde{\alpha}_i(s) = \begin{cases}
        \binom{n-1}{s-1} \quad &\textnormal{if $s \in [k-1]$,} \\
        \binom{n}{s} \quad &\textnormal{if $s \ge k$,}
    \end{cases}
\end{align*}
In particular, $\mC$ is recovery balanced and $\tmax(G)=k$ if $G$ is systematic. %\ey{is it iff? and also for Hamming and simplex?}. \ag{we don't know.. I assume so but I cannot prove it.}
\end{proposition}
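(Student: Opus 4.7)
The plan is to mimic the argument used in the proof of Corollary~\ref{cor:mds}, but now phrased in terms of the columns of $G$ rather than the standard basis vectors. The key structural fact I will invoke is the defining property of an MDS code: every $k$ columns of any generator matrix $G$ of $\mC$ are linearly independent, equivalently, every set of at most $k$ columns is linearly independent.

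First I would fix a generator matrix $G$ of $\mC$ with columns $g_1,\dots,g_n$ and split the computation of $\widetilde{\alpha}_i(s)$ into two cases, exactly as in Corollary~\ref{cor:mds}. For $s \in [k-1]$, I claim that $g_i \in \langle g_j : j \in S\rangle$ with $|S|=s$ if and only if $i \in S$. One direction is trivial. For the other, if $i \notin S$ and $g_i$ were in the span of the $g_j$ with $j \in S$, then the $s+1 \leq k$ columns indexed by $S \cup \{i\}$ would be linearly dependent, contradicting the MDS property. Counting the sets $S$ of size $s$ containing $i$ then yields $\widetilde{\alpha}_i(s) = \binom{n-1}{s-1}$. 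For $s \geq k$, I would argue that any $s$ columns of $G$ contain a linearly independent subset of size $k$ (again by the MDS property applied to a sub-selection of $k$ columns), hence they span $\F_q^k$ and in particular contain every $g_i$. Therefore $\widetilde{\alpha}_i(s) = \binom{n}{s}$ in this range.

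Since the sequence $(\widetilde{\alpha}_i(s) : s \in [n])$ does not depend on $i$, Lemma~\ref{lem:recba}(i) immediately gives that $\mC$ is recovery balanced. To conclude $\tmax(G) = k$ for a systematic $G$, I would simply invoke Corollary~\ref{cor:alwaysk}, which applies to any recovery balanced code with a systematic generator matrix. Alternatively, one could deduce it directly from Corollary~\ref{cor:mds}, noting that in the systematic case $\widetilde{\tau}_i(G) = \tau_i(G)$ for $i \in [k]$.

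I do not anticipate a genuine obstacle here: the argument is essentially a re-indexing of the MDS calculation already carried out in Corollary~\ref{cor:mds}. The only small care needed is the case distinction at $s = k$, and making explicit that ``any $s \geq k$ columns span $\F_q^k$'' really does follow from the fact that every $k$ columns are linearly independent. Everything else is bookkeeping.
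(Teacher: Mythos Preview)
Your proposal is correct and follows exactly the approach the paper indicates, namely the analogue of the argument for Corollary~\ref{cor:mds} with $\alpha_i(s)$ replaced by $\widetilde{\alpha}_i(s)$. The case split, the use of the MDS property, and the final appeals to Lemma~\ref{lem:recba}(i) and Corollary~\ref{cor:alwaysk} are precisely what is intended.
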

% \begin{proof}
% The result follows from the fact that any collection of~$s$ columns of $G$ with $1 \le s\le n$ is linearly dependent, and so the only way to recover the $i$-th strand is by drawing it. If $s \ge k$ then any collection of $s$ columns spans $\F_q^k$ and in particular, recovers the $i$-th encoded strand.
% \end{proof}

% \begin{definition} \label{def:ham}
% Let $k \ge 2$ and $n=(q^k-1)/(q-1)$. A code $\mC \subseteq \F_q^{n}$ whose parity-check matrix is an $k \times n$ matrix whose columns are all non-zero vectors in $\F_q^k$ which their first non-zero entry is 1 is called \textbf{Hamming code} of redundancy $k$.
% \end{definition}

We now turn to $q$-ary Hamming and simplex codes. We show that they are recovery balanced, which allows us to compute the value
of $\tmax(G)$ for any of their systematic generator matrices.

\begin{proposition}\label{prop:betahamm}
Let $\mC \subseteq \F_q^n$ be the $q$-ary Hamming code of redundancy $k$. For $i \in [n]$, $s \in [q^{k-1}+1]$ and $j \in [(q^k-1)/(q-1)]$ we have
\begin{align*}
    \widetilde{\beta}_i(s,j)= 
    \begin{cases}
        \gamma(s,v) \quad &\textnormal{if $j=(q^k-q^v)/(q-1)-1$,} \\
        \gamma(s-1,v) \quad &\textnormal{if $j=(q^k-q^v)/(q-1)$,}
    \end{cases}
\end{align*} 
where $\gamma(s,v)$ is equal to
\begin{align*}
    \qbin{k-1}{v}{q}\sum_{u=v}^{k-1} q^{u}\qbin{k-v-1}{u-v}{q}
    \binom{q^{k-u-1}}{s}(-1)^{u-v}q^{\binom{u-v}{2}}.
\end{align*}
In particular, $\mC$ is recovery balanced and $\tmax(G)=k$ if $G$ is a systematic generator matrix of~$\mC$.
\end{proposition}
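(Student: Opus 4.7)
The plan is to apply Lemma~\ref{lem:recba}(ii): I will compute $\widetilde\beta_i(s,j)$ for the Hamming code and show it does not depend on $i$, so that the code is recovery balanced; the value $\tmax(G)=k$ then follows from Corollary~\ref{cor:alwaysk}. Fix a parity-check matrix $H\in\F_q^{k\times n}$ whose columns $h_1,\dots,h_n$ are representatives of all $1$-dimensional subspaces of $\F_q^k$. Since $\mC^\perp$ is the $q$-ary simplex code of dimension $k$, every nonzero codeword has weight $q^{k-1}$, so by (the analogue of) Claim~\ref{claim:dual} every minimal recovery set for the $i$-th column of $G$ either equals the singleton $\{i\}$ or has cardinality $q^{k-1}-1$. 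Parametrize the latter by the affine hyperplane $A=\{v\in\F_q^k : v\cdot h_i=1\}$: the map sending $v\in A$ to $R_v:=\{j\neq i: v\cdot h_j\neq 0\}$ is a bijection between $A$ and the set of $q^{k-1}$ non-trivial minimal recovery sets, so in total $L=q^{k-1}+1$.

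Given $v_1,\dots,v_s\in A$, their supports in $[n]$ union to $\{j: h_j\notin W^\perp\}$ where $W=\langle v_1,\dots,v_s\rangle$; removing the common element $i$ gives $|R_{v_1}\cup\cdots\cup R_{v_s}|=(q^k-q^{k-u})/(q-1)-1$ with $u=\dim W$. Setting $v=k-u$, an $s$-subset of non-trivial recovery sets thus contributes to $\widetilde\beta_i(s,(q^k-q^v)/(q-1)-1)$; an $s$-subset containing $\{i\}$ (equivalently $s-1$ non-trivial sets) re-introduces $i$ and contributes to $\widetilde\beta_i(s,(q^k-q^v)/(q-1))$. This explains the two cases in the statement, so it remains to show that the number of $s$-subsets of $A$ whose linear span has dimension $u=k-v$ equals $\gamma(s,v)$.

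To count these subsets, let $g(s,u)$ be the number of pairs $(U,S)$ with $U$ a $u$-dimensional subspace of $\F_q^k$ not contained in $h_i^\perp$ and $S\subseteq U\cap A$ of size $s$. Since any such $U$ meets $A$ in an affine subspace of size $q^{u-1}$, and since the number of $u$-subspaces not contained in $h_i^\perp$ equals $\qbin{k}{u}{q}-\qbin{k-1}{u}{q}=q^{k-u}\qbin{k-1}{u-1}{q}$, one obtains $g(s,u)=q^{k-u}\qbin{k-1}{u-1}{q}\binom{q^{u-1}}{s}$. On the other hand, if $f(s,u)$ denotes the desired count, then $g(s,u)=\sum_{u'\le u}f(s,u')\qbin{k-u'}{u-u'}{q}$ by classifying by the linear span. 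A $q$-analogue Möbius inversion in the subspace lattice then gives
\begin{equation*}
f(s,u)=\sum_{u'\le u}(-1)^{u-u'}q^{\binom{u-u'}{2}}\qbin{k-u'}{u-u'}{q}\,g(s,u').
\end{equation*}
Substituting $g$, setting $v=k-u$, reindexing by $w=u-u'$ (equivalently using $u''=u'-1$), and applying the standard identity $\qbin{k-1}{v}{q}\qbin{k-1-v}{w}{q}=\qbin{k-1}{v+w}{q}\qbin{v+w}{v}{q}$ rewrites this sum in the exact form $\qbin{k-1}{v}{q}\sum_{u=v}^{k-1}q^u\qbin{k-v-1}{u-v}{q}\binom{q^{k-u-1}}{s}(-1)^{u-v}q^{\binom{u-v}{2}}$, matching $\gamma(s,v)$.

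Because the resulting formula for $\widetilde\beta_i(s,j)$ depends only on $s,j,k,q$ and not on $i$, Lemma~\ref{lem:recba}(ii) shows $\mC$ is recovery balanced, and Corollary~\ref{cor:alwaysk} gives $\tmax(G)=k$ for systematic $G$. The main technical hurdle will be the $q$-Möbius inversion and the algebraic rearrangement of the resulting alternating sum into the precise form of $\gamma(s,v)$; the bookkeeping separating the two cases $j=(q^k-q^v)/(q-1)-1$ and $j=(q^k-q^v)/(q-1)$ (i.e., whether or not the singleton $\{i\}$ is among the $s$ chosen recovery sets) is the other place where a small error could easily occur.
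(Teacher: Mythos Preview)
Your proposal is correct and follows essentially the same approach as the paper: the paper parametrizes the non-trivial minimal recovery sets by hyperplanes $\mH$ of $\F_q^k$ not containing $h_i$ and performs M\"obius inversion over their intersection $V=\bigcap_\ell \mH_\ell$, whereas you parametrize them by points of the affine set $A=\{v:v\cdot h_i=1\}$ and M\"obius-invert over their linear span $W=V^\perp$, which is precisely the dual picture. The core counting via M\"obius inversion in the subspace lattice, the handling of the singleton $\{i\}$ as a separate case, and the $q$-binomial ``subset-of-subset'' identity used to reach the closed form $\gamma(s,v)$ are the same in both arguments.
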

\begin{proof}
Let $H$ be a parity-check matrix of $\mC$. A set $S \subseteq [n]$ is a recovery set (of size $\ge 2$) for the $i$-th encoded strand if and only if there exists $x \in \mC^\perp$ with $i \in \supp(x) \subseteq S \cup \{i\}$; see Claim~\ref{claim:dual}. Moreover, since $H$ has as columns all the non-zero vectors (up to multiples) of $\F_q^k$, $S \subseteq [n]$ is the support of some $x \in \mC^\perp$ if and only if the columns of $H$ indexed by~$S^c$ form a hyperplane of $\F_q^k$.

    In the remainder of the proof, we denote by $\rho(V)$ the set of 1-dimensional subspaces of a space $V \le \F_q^k$ and we let $h_i$ be the $i$-th column of $H$. Note that if $\dim(V)=v$ then $|\rho(V)|=(q^v-1)/(q-1)$. In order to give an explicit formula for~$\widetilde{\beta}_i(s,j)$, we count the number of sets of~$s$ (distinct) hyperplanes $\{\mH_1,\dots,\mH_s\}$ with the properties that $h_i \notin \mH_\ell$ for all $\ell \in [s]$ and $|\bigcup_{\ell=1}^s \rho(\mH_\ell)^c|=j$. Using simple set theory, this is equivalent to asking that $|\bigcap_{\ell=1}^s \rho(\mH_\ell)|=\rho(\F_q^k)-j=(q^k-1)/(q-1)-j$. Note that we have $\bigcap_{\ell=1}^s \rho(\mH_\ell)=\rho\left(\bigcap_{\ell=1}^s \mH_\ell\right) = (q^v-1)/(q-1)$ for some integer $v \in [k]$. 

In the rest of the proof we use
the Möbius Inversion formula for the lattice of subspaces (see e.g.~\cite[Propositions 3.7.1 and Example~3.10.2]{stanley2011enumerative}).
Let $\mP$ be the collection of hyperplanes of~$\F_q^k$.
For a subspace $V \le \F_q^k$ define
\begin{align*}
    f(V) := \Bigl|\Bigl\{\mH =\{\mH_1, \ldots, \mH_s\} \subseteq \mP : |\mH|=s,  h_i \notin H_\ell \textnormal{ for all $\ell \in [s]$}, \, \bigcap_{\ell=1}^s \mH_\ell = V \Bigr\}\Bigr|
\end{align*}
and 
\begin{align*}
    g(V) := 
    \Bigl|\Bigl\{&\mH= \{\mH_1,\dots,\mH_s\} \subseteq \mP : |\mH|=s, h_i \notin \mH_\ell
    \textnormal{ for all $\ell \in [s]$}, \, \bigcap_{\ell=1}^s \mH_\ell \ge V\Bigr\}\Bigr|
    \\
    =\Bigl|\Bigl\{&\{\mH_1,\dots,\mH_s\} \subseteq \mP : |\mH|=s, h_i \notin \mH_\ell \mbox{ and } V \le \mH_\ell \textnormal{ for all $\ell \in [s]$}\Bigr\}\Bigr|.
\end{align*}

We continue by giving an explicit expression for $g(V)$ in terms of $v=\dim(V)$. If $h_i \notin V$ we have that the number of hyperplanes $\mH_{\ell}$ with~$h_i \notin \mH_{\ell}$ and $V \le \mH_{\ell}$ is 
\begin{align*}
    \qbin{k-v}{k-1-v}{q}-\qbin{k-v-1}{k-1-v-1}{q}.
\end{align*}
Using the well-known identity for $q$-ary binomial coefficients
\begin{align} \label{idee}
   \qbin{a}{b}{q}-\qbin{a-1}{b-1}{q} =  q^b\qbin{a-1}{b}{q} \; \textnormal{for all $a\ge b \ge 1$,}
\end{align} 
we then have 
\begin{align*}
    g(V)&=\sum_{U \ge V} f(U) = \begin{cases}
        0 \quad &\textnormal{if $h_i \in V$,} \\
        \binom{q^{k-v-1}}{s}\quad &\textnormal{if $h_i \notin V$.}
    \end{cases}
\end{align*}
Therefore, by the Möbius Inversion formula, we have
\begin{align*}
    f(V) =& \sum_{U \ge V} g(U)(-1)^{\dim(U)-v}q^{\binom{\dim(U)-v}{2}} \\
    =& \sum_{\substack{U \ge V \\ h_i \notin U}} g(U)(-1)^{\dim(U)-v}q^{\binom{\dim(U)-v}{2}} \\
    =&\sum_{u=v}^{k-1} \left(\qbin{k-v}{u-v}{q} -\qbin{k-v-1}{u-v-1}{q} \right)\binomi{q^{k-u-1}}{s}(-1)^{u-v}q^{\binom{u-v}{2}}. 
\end{align*}
This, together with
\eqref{idee} again,
after summing over all $V\le \F_q^k$ with $\dim(V)=v$ and with $h_i \notin V$ (there are $\qbin{k}{v}{q}-\qbin{k-1}{v-1}{q}=q^v\qbin{k-1}{v}{q}$ such spaces), shows that
$\gamma(s,v)$ counts the number of $s$-sets of codewords in $\mC^\perp$ with the property that the union of their support has cardinality $(q^k-q^v)/(q-1)$ and which all contain~$i$ in their support. 

To conclude the proof, note that the only recovery sets we did not consider so far are
those of size 1.
For each $i$, distinguishing between the case where this recovery set is one of the $s$ sets considered in the computation of $\widetilde{\beta}_i(s,j)$ and the case where this is not the case, gives the formula in the proposition.
\end{proof}

% \begin{definition} \label{def:sim}
% Let $k \ge 2$ and $n=(q^k-1)/(q-1)$. The dual code $\mC \subseteq \F_q^{n}$ of the Hamming code of redundancy $k$ is called the \textbf{Simplex code} of dimension~$k$.
% \end{definition}

\begin{proposition}\label{prop:simpfi}
Let $\mC \subseteq \F_q^n$ be the $q$-ary simplex code of dimension $k$. For all $i,s \in [n]$ we have
\begin{align*}
    \widetilde{\alpha}_i(s) = \sum_{d=1}^{s}\qbin{k-1}{d-1}{q} \sum_{r=1}^d \qbin{d}{r}{q} \binom{\frac{q^r-1}{q-1}}{s}(-1)^{d-r}q^{\binom{d-r}{2}}.
\end{align*}
In particular, $\mC$ is recovery balanced and $\tmax(G)=k$ if $G$ is a systematic generator matrix of~$\mC$.
\end{proposition}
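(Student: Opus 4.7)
The plan is to compute $\widetilde{\alpha}_i(s)$ directly, by exploiting the fact that the columns of a generator matrix $G$ of the $q$-ary simplex code are (up to scalar) in one-to-one correspondence with the set of 1-dimensional subspaces of $\F_q^k$. Under this identification, asking that $g_i \in \langle g_j : j \in S\rangle$ is the same as asking that the line associated with $g_i$ lies in the subspace $V = \langle g_j : j \in S\rangle$. I would stratify the count by the dimension $d$ of $V$: the number of $d$-dimensional subspaces of $\F_q^k$ containing a prescribed line is $\qbin{k-1}{d-1}{q}$, which will produce the outer sum (with index $d$) and the factor $\qbin{k-1}{d-1}{q}$ appearing in the statement.

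For each fixed $d$-dimensional subspace $V$ containing the line spanned by $g_i$, I then need to count the $s$-subsets of $\rho(V)$ (the set of 1-dimensional subspaces of $V$) whose span is exactly $V$. Setting
\begin{align*}
f(V) &:= |\{S \subseteq \rho(V) : |S|=s, \, \langle S \rangle = V\}|, \\
g(V) &:= |\{S \subseteq \rho(V) : |S|=s, \, \langle S \rangle \le V\}| = \binom{(q^{\dim V}-1)/(q-1)}{s},
\end{align*}
the Möbius Inversion formula on the subspace lattice, as used in the proof of Proposition~\ref{prop:betahamm} (compare \cite[Example~3.10.2]{stanley2011enumerative}), yields
\begin{align*}
f(V) = \sum_{U \le V} (-1)^{d-\dim U} q^{\binom{d-\dim U}{2}} g(U) = \sum_{r=0}^{d} \qbin{d}{r}{q} \binom{(q^r-1)/(q-1)}{s} (-1)^{d-r} q^{\binom{d-r}{2}},
\end{align*}
where the second equality follows by grouping subspaces $U \le V$ according to their dimension $r$. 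Since $\binom{0}{s}=0$ for $s \ge 1$, the $r=0$ term vanishes and the inner sum can be started at $r=1$.

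Multiplying $f(V)$ by the number $\qbin{k-1}{d-1}{q}$ of $d$-dimensional subspaces containing the line through $g_i$ and summing over $d$ gives the stated formula, with the upper bound $d \le s$ justified by the fact that $f(V) = 0$ whenever $d > s$. Crucially, the expression for $\widetilde{\alpha}_i(s)$ obtained in this way does not depend on $i$, so Lemma~\ref{lem:recba}(i) implies that $\mC$ is recovery balanced, and Corollary~\ref{cor:alwaysk} then yields $\tmax(G) = k$ when $G$ is systematic. I do not anticipate a serious obstacle here; the only point that deserves some care is verifying the range of the outer sum and the $r=0$ term, which are handled by elementary observations about the binomial coefficients, and re-indexing the Möbius sum cleanly so that it matches the formula in the statement.
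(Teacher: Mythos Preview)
Your proposal is correct and follows essentially the same approach as the paper: stratify $\widetilde{\alpha}_i(s)$ by the dimension $d$ of the span of the chosen columns, use M\"obius inversion on the subspace lattice to count the $s$-subsets of projective points spanning a fixed $d$-dimensional subspace, and multiply by the number $\qbin{k-1}{d-1}{q}$ of such subspaces containing the line through $g_i$. The paper's proof is virtually identical, and your handling of the range $d \le s$, the vanishing of the $r=0$ term, and the appeal to Lemma~\ref{lem:recba}(i) and Corollary~\ref{cor:alwaysk} at the end match what is needed.
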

\begin{proof}
A generator matrix of $\mC$ has as columns all the non-zero vectors (up to multiples) in $\F_q^k$. For $S \subseteq [n]$ we denote by $G_S$ the set of columns of $G$ indexed by~$S$, i.e., $G_S = \{ g_j : j \in S\} $.
We have
\begin{align*}
     |\{S \subseteq [n] \colon |S|=s, \, g_i \in \langle G_S \rangle \}| = 
     \sum_{d=1}^s |\{S \subseteq [n] \colon |S|=s, \, g_i \in G_S, \, \dim \langle G_S \rangle  = d \}|.
\end{align*}
In order to compute $|\{S \subseteq [n] \colon |S|=s, \, g_i \in \langle G_S \rangle , \, \dim \langle G_S \rangle  = d \}|$ for $d \in [s]$ we use the Möbius Inversion formula in the lattice of subspaces in $\F_q^k$. For a subspace $V \le \F_q^k$ we let
\begin{align*}
    f(V) &= |\{S \subseteq [n] \colon |S|=s, \langle G_S \rangle  =V\}|, \\
    g(U) &= \sum_{U \le V} f(U) = |\{S \subseteq [n] \colon |S|=s, \langle G_S \rangle  \le V\}| =\binom{\frac{q^{\dim(V)}-1}{q-1}}{s}.
\end{align*}
Using the Möbius Inversion formula we obtain
\begin{align*}
 f(V) &= \sum_{U \le V} g(U)(-1)^{\dim(V)-\dim(U)}q^{\binom{\dim(V)-\dim(U)}{2}} \\
 &= \sum_{r=1}^{\dim(V)}\sum_{\substack{U \le V, \\ \dim(U)=r}} \binom{\frac{q^r-1}{q-1}}{s}(-1)^{\dim(V)-r}q^{\binom{\dim(V)-r}{2}} \\
  &=\sum_{r=1}^{\dim(V)}\qbin{\dim(V)}{r}{q} \binom{\frac{q^{r}-1}{q-1}}{s}(-1)^{\dim(V)-r}q^{\binom{\dim(V)-r}{2}}.  
\end{align*}
In particular, we have
\begin{align*}
    |\{S \subseteq [n] \colon |S|=s, \, g_i \in \langle G_S \rangle , \, \dim \langle G_S \rangle  = d \}|
    &= \sum_{\substack{V \le \F_q^k, \\ \dim(V)=d, \\ g_i \in V}} f(V) \\
    &= \qbin{k-1}{d-1}{q} \sum_{r=1}^d \qbin{d}{r}{q} \binom{\frac{q^{r}-1}{q-1}}{s}(-1)^{d-r}q^{\binom{d-r}{2}},
\end{align*}
which concludes the proof.    
\end{proof}

\subsection{New codes from old}

In this subsection, we explore which code operations preserve the property of being recovery balanced.

Note that it is well-known, and easy to see, that for any code $\mC$ we have $\mbox{PAut}(\mC) = \mbox{PAut}(\mC^\perp)$. Hence we have the following result.

\begin{proposition} \label{prop:dualll}
    Let $\mC$ be a code with transitive permutation automorphism group. Then both~$\mC$ and $\mC^\perp$ are recovery balanced.
\end{proposition}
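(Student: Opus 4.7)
The plan is to deduce this proposition in essentially one line from Corollary~\ref{cor:paut} combined with the identity $\mbox{PAut}(\mC) = \mbox{PAut}(\mC^\perp)$ that is mentioned just before the statement. First, since $\mbox{PAut}(\mC)$ is transitive by hypothesis, Corollary~\ref{cor:paut} applied directly to $\mC$ gives that $\mC$ is recovery balanced. Second, using $\mbox{PAut}(\mC) = \mbox{PAut}(\mC^\perp)$, the permutation automorphism group of $\mC^\perp$ is also transitive, so a second application of Corollary~\ref{cor:paut}, this time to $\mC^\perp$, yields that $\mC^\perp$ is recovery balanced as well.

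The only step that deserves a brief justification is the equality $\mbox{PAut}(\mC) = \mbox{PAut}(\mC^\perp)$. If this is considered sufficiently standard (it is cited as ``well-known, and easy to see'' in the paragraph preceding the proposition), then nothing else is required. Otherwise, a one-sentence argument would note that coordinate permutations preserve the standard inner product on $\F_q^n$: for any $\sigma \in S_n$, $x \in \F_q^n$, $y \in \F_q^n$, one has $\langle f_\sigma(x), f_\sigma(y) \rangle = \langle x, y \rangle$, so if $f_\sigma(\mC) = \mC$ and $y \in \mC^\perp$, then for every $x \in \mC$ we have $\langle f_\sigma(y), x \rangle = \langle y, f_{\sigma^{-1}}(x)\rangle = 0$ since $f_{\sigma^{-1}}(x) \in \mC$, showing $f_\sigma(y) \in \mC^\perp$. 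The reverse inclusion is symmetric, using $(\mC^\perp)^\perp = \mC$.

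There is no real obstacle in this proof: the proposition is an immediate consequence of the invariance of the permutation automorphism group under duality together with the sufficient condition for recovery balance established in Corollary~\ref{cor:paut}. The work of proving that transitivity of $\mbox{PAut}$ forces recovery balance has already been carried out in Lemma~\ref{lem:auto} and Corollary~\ref{cor:paut}, so the present statement simply packages those results with the duality identity.
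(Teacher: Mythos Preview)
Your proposal is correct and follows exactly the paper's approach: the paper simply notes that $\mbox{PAut}(\mC) = \mbox{PAut}(\mC^\perp)$ is well-known and lets the proposition follow from Corollary~\ref{cor:paut}. Your additional one-line justification of that identity is a nice bonus but not required.
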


% As already mentioned, it is well-known that for binary codes, the permutation automorphism group and the automorphism group are the same. Thus, in the binary case, a code with transitive automorphism group is recovery balanced, as well as its dual. \ey{remove the last two sentences if we don't discuss the automorphism group.} 
While Proposition~\ref{prop:dualll} only covers codes with transitive permutation automorphism group, we strongly believe that the property of being recovery balanced is closed under duality. This claim is supported for example by the fact that MDS codes, Hamming codes, and simplex codes are all recovery balanced over any finite field $\F_q$. Motivated by this, we propose the following conjecture.
\begin{conjecture} \label{conjecture}
A code $\mC$ is recovery balanced if and only if its  dual code $\mC^\perp$ is recovery balanced.
\end{conjecture}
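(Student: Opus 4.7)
The plan is to give a direct matroidal proof based on the duality of the rank function. Let $M = M(G)$ be the column matroid of a generator matrix of $\mC$, so that the column matroid of any parity-check matrix of $\mC$ is its dual $M^*$. Using the proof of Claim~\ref{claim_good} and equation~\eqref{eq:aaa}, I would first rewrite the expectation as $\E[\widetilde{\tau}_i(G)] = n - f_i^M$, where
\[
f_i^M := \sum_{s=1}^{n-1} \frac{P_i^M(s)}{\binom{n-1}{s}}, \qquad P_i^M(s) := \bigl|\{ S \subseteq [n]\setminus\{i\} : |S|=s, \, i \in \mathrm{cl}_M(S) \}\bigr|.
\]
Recovery balance of $\mC$ is then equivalent to the $i$-independence of the scalar $f_i^M$, and recovery balance of $\mC^\perp$ to that of $f_i^{M^*}$; the conjecture becomes a purely combinatorial statement about a matroid and its dual.

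The heart of the plan is to establish, by applying the dual rank formula $r_{M^*}(T) = |T| - r_M([n]) + r_M([n]\setminus T)$ to both $T = S$ and $T = S \cup \{i\}$, the complementation identity
\[
P_i^{M^*}(s) \;=\; \binom{n-1}{s} - P_i^M(n-1-s) \qquad \text{for all } i \in [n], \; s \in \{0, \ldots, n-1\}.
\]
After substituting this into the definition of $f_i^{M^*}$ and re-indexing the resulting sum, a short calculation gives
\[
f_i^M + f_i^{M^*} \;=\; (n-1) + P_i^M(n-1) - P_i^M(0).
\]
Since $P_i^M(0) = 1$ exactly when $i$ is a loop of $M$ and $P_i^M(n-1) = 1$ exactly when $i$ is not a coloop, the right-hand side equals $n$ whenever $M$ has neither loops nor coloops, and in that case the $i$-independence of $f_i^M$ is equivalent to that of $f_i^{M^*}$.

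To conclude in full generality, I would argue that recovery balance itself rules out loops and coloops outside trivial regimes. A loop $i$ forces $\E[\widetilde{\tau}_i(G)] = 0$, so recovery balance would require every coordinate to be a loop, contradicting $\mathrm{rank}(G) = k \geq 2$; a coloop $i$ forces $\E[\widetilde{\tau}_i(G)] = n$, so recovery balance would require every coordinate to be a coloop, which is possible only when $n = k$, in which case $\mC^\perp = \{0\}$ and both sides of the conjecture hold trivially. The same dichotomy, applied symmetrically to $M^*$, handles the reverse direction.

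I do not anticipate any deeper obstacle: the proof reduces to the one-line rank-duality identity above, and the only care required is the bookkeeping of the degenerate boundary cases (loops, coloops, and $n = k$). At the outset one also invokes Claim~\ref{claim_good} to guarantee that $f_i^M$ is an invariant of the code $\mC$, and not of the particular choice of generator matrix $G$.
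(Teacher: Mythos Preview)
The paper does not prove this statement: it is explicitly posed as an open conjecture (and reiterated as such in Section~\ref{sec:conclusion}), so there is no ``paper's own proof'' to compare against. Your proposal is therefore an attempt to resolve an open problem, and as far as I can see it succeeds.

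Your complementation identity
\[
P_i^{M^*}(s) \;=\; \binom{n-1}{s} - P_i^{M}(n-1-s)
\]
is correct: for $S \subseteq [n]\setminus\{i\}$, the dual rank formula gives $r_{M^*}(S\cup\{i\}) - r_{M^*}(S) = 1 + r_M(T) - r_M(T\cup\{i\})$ with $T = [n]\setminus\{i\}\setminus S$, so $i \in \mathrm{cl}_{M^*}(S)$ iff $i \notin \mathrm{cl}_M(T)$, and the bijection $S \leftrightarrow T$ gives the count. The telescoping computation $f_i^M + f_i^{M^*} = (n-1) + P_i^M(n-1) - P_i^M(0)$ then follows exactly as you describe, and in the absence of loops and coloops this equals $n$ for every $i$. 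Combined with your rewriting $\E[\widetilde{\tau}_i(G)] = n - f_i^M$ (which I checked against equation~\eqref{eq:aaa} via $\widetilde{\alpha}_i(s) = \binom{n-1}{s-1} + P_i^M(s)$ and $\sum_{s=1}^{n-1}\binom{n-1}{s-1}/\binom{n-1}{s} = nH_n - n$), this actually yields the stronger coordinate-wise identity
\[
\E[\widetilde{\tau}_i(G)] + \E[\widetilde{\tau}_i(H)] \;=\; n \qquad \text{for every } i \in [n],
\]
whenever $\mC$ has neither zero columns nor weight-one codewords, where $H$ generates $\mC^\perp$. Recovery balance on either side then transfers immediately to the other. (The identity is consistent with Theorem~\ref{thm:sumra}: summing over $i$ gives $kn + (n-k)n = n^2$.)

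Your treatment of the degenerate cases is also sound. A loop $i$ gives $\widetilde{\tau}_i(G)=0$ deterministically, and since $\widetilde{\tau}_j(G)\geq 1$ whenever $g_j\neq 0$, recovery balance would force every column to be zero, contradicting $k\geq 2$. A coloop $i$ gives $\E[\widetilde{\tau}_i(G)]=n$, and one can check from $f_j^M \geq P_j^M(n-1)/1$ that $\E[\widetilde{\tau}_j(G)] < n$ whenever $j$ is not a coloop; so recovery balance forces every coordinate to be a coloop, hence $k=n$, and the conjecture is vacuous there. The only cosmetic point is that your formula $\E[\widetilde{\tau}_i(G)] = n - f_i^M$ silently assumes $i$ is not a loop (the paper's equation~\eqref{eq:aaa} drops the $s=0$ term for the same reason); you handle loops separately anyway, so this causes no harm.
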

An operation on codes that preserves their recovery balanced property is the Cartesian product. Although this is not difficult to demonstrate, we include the explanation here for completeness. Recall that for an $[n,k]_q$-code $\mC$ and an $[n',k']_q$-code $\mC'$ the Cartesian product of $\mC$ and $\mC'$ is  $\mC \times \mC':=\{c \circ c' : c \in \mC, c' \in \mC'\}$, where for $c=(c_1,\dots,c_n)\in \F_q^n$ and $c'=(c'_1,\dots,c'_n) \in \F_q^{n'}$, $c \circ c' := (c_1,\dots,c_n,c'_1,\dots,c'_n) \in \F_q^{n + n'}$. If $G$ is a generator matrix of $\mC$, and $G'$ is a generator matrix of $\mC'$, then $\mC \times \mC'$ has a generator matrix of the form
\begin{align*}
   G_{\mC\times\mC'} :=  \begin{pmatrix}
        G & \overline{\textbf{0}} \\
        \overline{\textbf{0}} & G'
    \end{pmatrix} \in \F_q^{(k+k') \times (n+n')},
\end{align*}
where $\overline{\textbf{0}}$ is the 0-submatrix with appropriate size. One can obtain the generator matrix of the Cartesian product of more than two codes analogously, resulting in a block matrix with the respective generator matrices on the diagonal. 

The proof of the following result uses very similar reasoning to the proof of~\cite[Proposition~11]{bar2023cover}. However, for the completeness of the results and the paper, we include it here. 

\begin{proposition} \label{prop:cart}
    For integers $1\le j\le t$, let $\mC_j$ be a recovery balanced $[n_j,k_j]_q$-code. For any $R\in(0,1]$, if for any $1\le j\le t$ we have that $k_j/n_j=R$, then the code $\mC_1 \times \mC_2 \times \cdots \times \mC_t$ is recovery balanced.
\end{proposition}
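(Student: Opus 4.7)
The plan is to reduce the problem to the behavior of each factor $\mC_j$ via the block-diagonal structure of $G_{\mC_1 \times \cdots \times \mC_t}$. Let $n = n_1 + \cdots + n_t$ and $k = k_1 + \cdots + k_t$. Fix a column index $i \in [n]$ of the Cartesian product and let $j$ be the block containing $i$, with $i_j$ denoting the position of $i$ within that block. The first step is to observe that, because the columns of block $j'$ have zero entries outside the row-block $j'$, the only way a linear combination of drawn columns can equal the $i$-th column (supported in the row-block $j$) is if the sub-combination from block $j$ alone recovers the $i_j$-th column of $G_j$. Hence recovery of column $i$ of $G_{\mC_1 \times \cdots \times \mC_t}$ depends only on which columns of block $j$ have been drawn.

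Next, I would model the i.i.d.\ uniform draws from the $n$ columns of $G_{\mC_1 \times \cdots \times \mC_t}$ in two stages: first choose a block (block $j'$ with probability $n_{j'}/n$), then choose a column within that block uniformly. Restricting to the block-$j$ draws yields exactly the uniform draw process underlying $\widetilde{\tau}_{i_j}(G_j)$. Combining this with the observation from the previous paragraph, if $M$ block-$j$ draws are required to recover the target (i.e., $\widetilde{\tau}_{i_j}(G_j) = M$), then the total number of draws in the Cartesian product process is the waiting time for $M$ successes in i.i.d.\ Bernoulli trials of success probability $n_j/n$, with expectation $Mn/n_j$. Taking total expectation gives
\begin{align*}
    \E[\widetilde{\tau}_i(G_{\mC_1 \times \cdots \times \mC_t})] \;=\; \frac{n}{n_j}\,\E[\widetilde{\tau}_{i_j}(G_j)].
\end{align*}

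Finally, since $\mC_j$ is recovery balanced, Theorem~\ref{thm:sumra} combined with Definition~\ref{def:recbal} forces $\E[\widetilde{\tau}_{i_j}(G_j)] = k_j$ for every $i_j \in [n_j]$. Substituting and using the equal-rate hypothesis $k_j/n_j = R$ yields
\begin{align*}
    \E[\widetilde{\tau}_i(G_{\mC_1 \times \cdots \times \mC_t})] \;=\; \frac{n\,k_j}{n_j} \;=\; R\,n,
\end{align*}
which is independent of both $j$ and $i$. Thus the Cartesian product is recovery balanced, as desired. The main (mild) obstacle is justifying the reduction in the first paragraph cleanly; this is where the equal-rate assumption becomes essential, because otherwise the individual expectations $n k_j/n_j$ would differ across blocks and balance would fail.
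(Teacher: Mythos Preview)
Your proof is correct and follows essentially the same approach as the paper: both observe that recovery of a column in block $j$ depends only on the block-$j$ draws, both deduce the key relation $\E[\widetilde{\tau}_i(G_\times)] = \frac{n}{n_j}\,\E[\widetilde{\tau}_{i_j}(G_j)]$, and both finish via recovery balance and the equal-rate hypothesis. The only cosmetic difference is that you obtain the $n/n_j$ factor via the negative-binomial/Wald argument, whereas the paper carries out the equivalent computation explicitly through the tail-sum formula $\sum_r \Pr[\widetilde{\tau}_i \ge r]$, conditioning on the number of block-$j$ draws in the first $r-1$ steps, and summing the resulting series with the identity $\sum_{r \ge z}\binom{r}{z}x^{r-z}(1-x)^z = \frac{1}{1-x}$.
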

\begin{proof}
Denote by $G_j$ the generator matrix of $\mC_j$ and by $G_\times\triangleq G_{\mC_1 \times \mC_2 \times \cdots \times \mC_t}$ the generator matrix of $\mC_1 \times \mC_2 \times \cdots \times \mC_t$. For any $r \ge 1$ draws, let us denote by~$\varepsilon_j^r$  the random variable that governs the number of (not necessarily distinct) columns drawn from the  $n_j$ columns of $G_\times$ that corresponds to $G_j$ in the first $r$ draws. Additionally, let $n\triangleq n_1+n_2+\ldots+n_t$. For any $1\le j\le n$ and any $n_{j-1}< i\le n_j$, where $n_{0}\triangleq 0$, we have that
\begin{align*}
\E [\widetilde{\tau}_i (G_{\times})] & = \sum_{r=1}^\infty \Pr[ \widetilde{\tau}_i(G_{\mC\times\mC'}) \ge r] \\ 
& 
= \sum_{r=1}^\infty \sum_{z=0}^{\infty} \Pr [\varepsilon^{r-1}_j = z]\cdot \Pr\left[\widetilde{\tau}_i(G_{\times}) \ge r \mid \varepsilon^{r-1}_j = z\right]
\\
& \stackrel{(a)}{=} \sum_{r=1}^\infty \sum_{z=0}^{r-1} \Pr[\varepsilon^{r-1}_j = z]\cdot \Pr\left[\widetilde{\tau}_i(G_{\times}) \ge r \mid \varepsilon^{r-1}_j = z\right] \\ 
& = \sum_{r=1}^\infty \sum_{z=0}^{r-1} \binom{r-1}{z} \left(\frac{n_j}{n}\right)^z \left(\frac{n-n_j}{n}\right)^{r-z-1} \Pr\left[\widetilde{\tau}_i(G_{\times}) \ge r \mid \varepsilon_j^{r-1} = z\right]\\
& =   \sum_{r=1}^\infty \sum_{z=0}^{r-1} \binom{r-1}{z} \left(\frac{n_j}{n}\right)^z \left(\frac{n-n_j}{n}\right)^{r-z-1}  \Pr \left[ \widetilde{\tau}_i(G_j) \ge z+1\right] \\
& = \sum_{z=0}^\infty \Pr \left[ \widetilde{\tau}_i (G_j ) \ge z+1 \right] \sum_{r=z+1}^{\infty} \binom{r-1}{z} \left(\frac{n_j}{n}\right)^z \left(\frac{n-n_j}{n}\right)^{r-z-1}  \\ 
& = \sum_{z=0}^\infty \Pr \left[ \widetilde{\tau}_i (G_j ) \ge z+1 \right] \sum_{r=z}^{\infty} \binom{r}{z} \left(\frac{n_j}{n}\right)^z \left(\frac{n-n_j}{n}\right)^{r-z} \\ 
& \stackrel{(b)}{=} \sum_{z=0}^\infty \Pr \left[ \widetilde{\tau}_i (G_j ) \ge z+1 \right]  \cdot\frac{n}{n_j}  \\ 
& = \sum_{z=1}^{\infty} \Pr \left[ \widetilde{\tau}_i (G_j) \ge z \right] \cdot \frac{n}{n_j} = \frac{n}{n_j} \cdot \E \left[ \widetilde{\tau}_i (\mC_j) \right] \\
& \stackrel{(c)}{=} \frac{nk_j}{n_j}=nR,
\end{align*}
where equality (a) follows from the fact that the probability to collect $z>r-1$ columns from~$G_{\times }$, using only $r-1$ draws is zero for any integer $z$, i.e., $\Pr[\varepsilon_j^{r-1}=z]=0$.  
To see that equality (b) holds, recall that $\sum_{r=0}^\infty x^r=\frac{1}{1-x}$, and by taking the derivative of the latter $z$ times we get $$\sum_{r=z}^\infty r\cdot(r-1)\cdots(r-z+1)x^{r-z}=\frac{z!}{(1-x)^{z+1}},$$ which is equivalent to 
\begin{align}\label{eq:binomsums}
    \sum_{r=z}^\infty \binom{r}{z}x^{r-z}(1-x)^{z} = \frac{1}{1-x}.
\end{align}
Hence, by substituting $x=\frac{n-n_j}{n}$, equality (b) follows. Lastly, as $\mC_j$ is recovery balanced with dimension $k_j$, equality (c) holds. 
Thus for any $1\le i\le n$ we obtain 
\begin{align*}
    \E [\widetilde{\tau}_i (G_{\times})] = nR,
\end{align*}
 so the code $\mC_1 \times \mC_2 \times \cdots \times \mC_t$ is recovery balanced.
\end{proof}

Note that if we consider the  $\mC_j$'s to have different rates, the statement of Proposition~\ref{prop:cart} is not true in general. We include an example to illustrate this.
\begin{example}
    Let $\mC_1$ and $\mC_2$ be the codes over $\F_2$ with generator matrices $G_1$ and $G_2$, respectively, defined as follows:
    \begin{align*}
        G_1 := \begin{pmatrix}
            1 & 0 & 1 \\
            0 & 1 & 1
        \end{pmatrix}, \quad 
        G_2 := \begin{pmatrix}
            1 & 0  \\
            0 & 1
        \end{pmatrix}.
    \end{align*}
    Then $\mC_1$ has rate $2/3$, and $C_2$ has rate $1$. Both $\mC_1$ and $\mC_2$ are recovery balanced  as $\mC_1$ is an MDS code, and $\mC_2$ is the identity code. However, one can check that the code $\mC_1 \times \mC_2$ is not recovery balanced. In fact, in the language of Proposition~\ref{prop:cart}, we have
    \begin{align*}
        \E[\Tilde{\tau}_1(G_{\times})] = \E[\Tilde{\tau}_2(G_{\times})] = \E[\Tilde{\tau}_3(G_{\times})] = \frac{10}{3}, \quad \E[\Tilde{\tau}_4(G_{\times})] = \E[\Tilde{\tau}_5(G_{\times})] = \E[\Tilde{\tau}_6(G_{\times})] = 5.
    \end{align*}
\end{example}
%More generally, we can prove the following result.

% \begin{corollary}
%        For $\ell \ge 2$ and for $i \in [\ell]$ let $\mC_{i}$ be an $[n_i,k_i]_q$-code that is recovery balanced. Then the code $\mC_1\times \dots \times \mC_{\ell}$ is recovery balanced.
% \end{corollary}
%\begin{proof}
   % We prove the result by induction on $\ell$.
    %For $\ell=2$, the result is given and proved in Proposition~\ref{prop:cart}. 

    %Suppose the result holds for $\ell-1$. Then we know that $\mC_1\times \dots \times \mC_{\ell-1}$ is recovery balanced. Moreover, the code $\mC_1\times \dots \times \mC_{\ell-1}$ has length $(\ell-1)n$ and $\dim(\mC_1\times \dots \times \mC_{\ell-1}) = (\ell-1)k$. Therefore, by Proposition~\ref{prop:cart}, since $\mC_1\times \dots \times \mC_{\ell-1}$ and $\mC_{\ell}$ have the same rate and since they are both recovery balanced, their Cartesian product is recovery balanced as well.
%\end{proof}

% \db{I don't this we need the corollary now (I already removed the proof) -- we can also keep it and say that this will result with a code with the same rate but it is not that interesting. also we copied and pasted large part of the proof from the CYB paper (with a few edits) since the derivations are very similar to something we have there. However, we should either modify it or keep the main ideas and refer to CYB for details. It's important to note that while the proof is similar, the claims are different, so I'm not sure which approach makes the most sense in this context.}

It is natural to ask whether the sum and intersection of recovery balanced codes result in a recovery balanced code. This is not true in general, and we provide an example (Example~\ref{ex:summi}) where the sum does not preserve the property of being recovery balanced. Under the assumption that Conjecture~\ref{conjecture} holds, this would imply that, in general, the intersection of codes does not preserve the property of being recovery balanced either. Specifically, suppose we have codes $\mC$ and $\mC'$ that are recovery balanced, but where $\mC + \mC'$ is not. Then also $(\mC + \mC')^\perp = \mC^\perp \cap \mC'^\perp$ is not recovery balanced. However, assuming that Conjecture~\ref{conjecture} holds, both $\mC^\perp$ and $\mC'^\perp$ would be recovery balanced.

\begin{example} \label{ex:summi}
Let $q=13$, $n=9$ and $k=4$. Define the generator matrix 
\begin{align*}
    G= \begin{pmatrix}
        1 & 1 & 1 & 1 & 1 & 1 & 1 & 1 & 1 \\
        1 & 3 & 9 & 2 & 6 & 5 & 4 & 12 & 10 \\ 
        1 & 1 & 1 & 8 & 8 & 8 & 12 & 12 & 12 \\
        1 & 3 & 9 & 3 & 9 & 1 & 9 & 1 & 3
    \end{pmatrix} \in \F_{13}^{4 \times 9}.
\end{align*}
From a computer algebra program, we checked that the code $\mC$ generated by this matrix is recovery balanced. Note that this code was taken from~\cite[Example 1]{tamo2014family} and it is an optimal locally recoverable code.
Consider also the $4$-dimensional Reed-Solomon code $\mC'$ with evaluation vector $\alpha=(0,1,2,3,4,5,6,7,8)$ over $\F_{13}$. Since MDS codes are recovery balanced, this code is recovery balanced.
However, the sum of the two codes $\mC+\mC'$ has as generator matrix 
\begin{align*}
G_{\mC+\mC'}:= \begin{pmatrix}
 1 & 0 & 0 & 0 & 0 & 0 & 0 & 12 & 6 \\
 0 & 1 & 0 & 0 & 0 & 0 & 0 & 7 & 9 \\
 0 & 0 & 1 & 0 & 0 & 0 & 0 & 1 & 1 \\
 0 & 0 & 0 & 1 & 0 & 0 & 0 & 10 & 3 \\
 0 & 0 & 0 & 0 & 1 & 0 & 0 & 1 & 3 \\
 0 & 0 & 0 & 0 & 0 & 1 & 0 & 12 & 2 \\
 0 & 0 & 0 & 0 & 0 & 0 & 1 & 10 & 3
\end{pmatrix} \in \F_{13}^{7 \times 9}.
\end{align*}
The code $\mC+\mC'$ is not recovery balanced, which again can be checked with a computer algebra program. This shows that in general, the property of being recovery balanced is not preserved under the operation of summing.
\end{example}

\section{Breaking the Balance of Codes}\label{sec:break}

In this section we focus on code constructions for achieving random access expectation below~$k$. 
Inspired by~\cite[Construction 2]{bar2023cover} and also by Theorem~\ref{thm:sumra} and Lemma~\ref{lem:fi}, we show how we can ``perturb''
the recovery-balancedness of codes to obtain a better performance for the random access problem.
We do this by taking the generator matrix of a recovery balanced code and appending identity matrices to it. While we are not able to identify the optimal number of identity matrices one should append, we give closed formulas for the random access expectation and include plots for two different families of recovery balanced codes. 
We start with the following notation.

% give a construction of a code which performs well in terms of the random access problem. More precisely, inspired by Theorem~\ref{thm:sumra} and Lemma~\ref{lem:fi}, we showed how we can perturb
% the recovery balance of MDS codes. We do this by concatenating identity matrices to the generator matrix, which increases, decreases the probability to draw an information, encoded strand, respectively. 

\begin{notation} \label{not:extmds}
Let $G= (I_k \mid R) \in \F_q^{k \times n}$ be a systematic generator matrix of a code. For $x \ge 1$, we let $G^x = (I_k \mid I_k \mid \dots \mid I_k \mid R) \in \F_q^{k \times N}$ (where $N=xk+n-k$) be the matrix obtained by appending additional $x-1$ identity matrices to $G$. 
\end{notation}

% Since this obviously ``destroys'' the balance of $G$, we are interested in what the expectation of the random access problem is in this case. We have the following result.
By concatenating identity matrices to the generator matrix of a recovery balanced code, it is possible to ``break'' the balance and improve its performance in terms of the random access expectation. Note that doing this with any recovery balanced code will improve\footnote{In the special case where $G=I_k$, the random access expectation of $G^x$ for all $x\ge 1$ is the same and equals to~$k$.} the random access expectation; however, deriving explicit formulas for their expectation is generally difficult. Because of this, we provide closed formulas for the expectation of these perturbed codes only for the generator matrices of MDS codes and simplex codes.
%Intuitively it makes sense that concatenating identity matrices to the generator matrix of a recovery balanced code ``breaks'' the balance in our favor. 

\subsection{MDS codes}
The following result gives an explicit formula for the random access expectation of ``perturbed'' MDS codes. 

\begin{theorem} \label{prop:extmds}
Let $G= (I_k \mid R) \in \F_q^{k \times n}$ be a systematic generator matrix of an MDS code and let $N=xk+n-k$. We have
\begin{align*}
\tmax(G^x) = 1+\sum_{s=1}^{N-1}\frac{\binom{N-x}{s}}{\binom{N-1}{s}} -\sum_{s=k}^{N-1}\sum_{a=0}^{k-1}\frac{\binom{k-1}{a}}{\binom{N-1}{s}} \sum_{m=0}^{s-k}\binom{n-k}{s-a-m}\sum_{t=0}^a(-1)^t\binom{a}{t}\binom{(a-t)x}{m+a}.
\end{align*}
\end{theorem}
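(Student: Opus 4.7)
The plan is to compute $\E[\tau_i(G^x)]$ via Lemma~\ref{lem:fi} and to observe that the resulting expression does not depend on $i\in[k]$, so that it coincides with $\tmax(G^x)$. Symmetry across $i$ holds because each basis vector $e_i$ appears with the same multiplicity $x$ among the identity blocks and the MDS property of $G$ is invariant under permutations of the information coordinates; hence it suffices to evaluate $\alpha_i(s)=|\{S\subseteq[N]:|S|=s,\ e_i\in\langle g_j:j\in S\rangle\}|$ for a single $i$.

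I would compute $\alpha_i(s)$ by partitioning $s$-subsets $S$ according to whether $S$ intersects the $x$ column-indices corresponding to $e_i$. If it does, then $e_i$ is trivially in the span; there are $\binom{N}{s}-\binom{N-x}{s}$ such subsets. Otherwise, $S$ is described by a signature $(J,T)$, where $J\subseteq[k]\setminus\{i\}$ records which basis vectors appear among the identity columns of $S$ (each represented at least once) and $T$ is the set of $R$-columns in $S$. Writing $a=|J|$ and $r=|T|$, the key observation is that the matrix $(I_{k-a}\mid R')$ obtained by deleting from $G$ the rows indexed by $J$ (and the zero identity columns that result) is again MDS: any $k-a$ of its columns, together with the $a$ vectors $\{e_j:j\in J\}$, form $k$ columns of $G$ which are linearly independent, hence span $\F_q^k$, hence project to a spanning set of $\F_q^{k-a}$. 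Consequently $e_i\in\langle\{e_j:j\in J\}\cup T\rangle$ if and only if the projection of $T$ contains $e_i$'s projection in its span, which by MDS happens exactly when $r\ge k-a$.

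For a fixed signature, the number of $s$-subsets with that signature is $\binom{k-1}{a}\binom{n-k}{r}$ times the number of ways of choosing $s-r$ elements from the $ax$ copies of the basis vectors of $J$ such that each of the $a$ vectors is represented at least once; by inclusion-exclusion this equals $\sum_{t=0}^{a}(-1)^t\binom{a}{t}\binom{(a-t)x}{s-r}$. Substituting $m=s-r-a$ and noting that the constraints $r\ge k-a$ and $s-r\ge a$ combine to force $s\ge k$ and $m\in[0,s-k]$ (with $\binom{n-k}{s-a-m}$ automatically vanishing outside the admissible range of $r$) turns the second-case count into exactly the triple sum appearing in the theorem. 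Finally, plugging into Lemma~\ref{lem:fi} and using the identity $\sum_{s=1}^{N-1}\binom{N}{s}/\binom{N-1}{s}=\sum_{s=1}^{N-1}N/(N-s)=NH_{N-1}=NH_N-1$ cancels the leading $NH_N$ from Lemma~\ref{lem:fi} against the first-case contribution and produces the $1+\sum_{s=1}^{N-1}\binom{N-x}{s}/\binom{N-1}{s}$ piece. The main technical obstacle is the ``MDS-after-puncturing'' claim, together with the careful bookkeeping of the summation ranges for $s$, $a$, $r$, $m$; everything else is a routine inclusion-exclusion calculation.
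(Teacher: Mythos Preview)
Your proposal is correct and follows essentially the same route as the paper: apply Lemma~\ref{lem:fi}, split $s$-subsets according to whether a copy of $e_i$ is present, and for the remaining case count by the number $a$ of distinct non-$e_i$ basis vectors together with inclusion--exclusion, then perform the substitution $m=s-r-a$. The only cosmetic difference is that the paper phrases the key recovery criterion as ``at least $k$ distinct columns of $G$ appear in $S$'' (immediate from the MDS property of $G$), whereas you reach the equivalent condition $r\ge k-a$ via the observation that shortening an MDS code is again MDS; both arguments yield the same triple sum and the same final simplification.
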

\begin{proof}
In order to give an expression for $\E[\tau_i(G^x)]$, by (the natural analogue of) Lemma~\ref{lem:fi} it suffices to compute $\widetilde{\alpha}_i(s)$
for the code generated by $G^x$ and for $1\leq i\leq k$. Since $G$ is a systematic generator matrix of an MDS code, the only possible ways to recover $e_i$ is by either sampling one of the columns corresponding to $e_i$ itself, or by sampling at least~$k$ distinct columns of $G^x$. If $1\le s \le k-1$, it is not hard to see that 
\begin{align*}
        \widetilde{\alpha}_i(s) = \binom{N}{s}-\binom{N-x}{s}.
\end{align*}
We then focus on the case
$k\le s \le N-1$.
We can recover $e_i$ from $s$ columns of $G^x$ if~$e_i$ is one of them. The only other way to recover $e_i$ from~$s$ columns of $G^x$ is by sampling at least~$k$ distinct columns. Therefore, we need to find the size of the set
    $\{S \subseteq [N] : |S|=s, \, e_i \notin \{g_j : j \in S\}, \, |\{ g_j : j \in S\}| \ge k\}$,
    which we denote by $\mS$ in the sequel.
    
    Let $\mA=[xk]\setminus\{i,2i,\dots,xi\}$ and $\mB=\{xk+1,\dots,N\}$, so that we have $g_j=e_j$ for all $j \in \mA$ and the columns indexed by $\mB$ correspond to the columns of $R$, which is the redundancy part of $G=(I_k \mid R)$ as in Notation~\ref{not:extmds}.
    We have that $|\mS|$ is
    \begin{align} \label{eq:abab}
         \sum_{a=0}^{k-1} & |\{(A,B) \in 2^{\mA} \times 2^{\mB}  : |\{ g_j : j \in A\}|=a, |B| = s-|A|\}| \nonumber \\
        &=\sum_{a=0}^{k-1}\sum_{m=0}^{s-k}\sum_{\substack{A \subseteq \mA \\ |A|=a+m \\ |\{ g_j : j \in A\}|=a}}|\{B \subseteq \mB :   |B| = s-|A|\}| \nonumber \\
        &= \sum_{a=0}^{k-1}\sum_{m=0}^{s-k}\sum_{\substack{A \subseteq \mA \\ |A|=a+m \\ |\{ g_j : j \in A\}|=a}} \binom{N-xk}{s-a-m}.
    \end{align}
    For any $a \in \{0, \dots, k-1\}$ and $m \in \{0,\dots,s-k\}$, we are left with computing the size of the set
    \begin{align*}
        |\{A \subseteq \mA : |\{ g_j : j \in A\}|=a, |A|=a+m\}|.
    \end{align*}
    Let $C \subseteq \{g_j : j \in \mA\}$ with $|C|=a$. Using the Inclusion-Exclusion Principle, one can verify that we have
    \begin{align*}
        &|\{A \subseteq \mA : \{ g_j : j \in A\}=C, |A|=a+m\}| \\
        &= \sum_{t=0}^a(-1)^t\binom{a}{t}\binom{(a-t)x}{m+a}.
    \end{align*}
    Summing over all $C \subseteq \{g_j : j \in \mA\}$ with $|C|=a$ together with~\eqref{eq:abab} gives that $|\mS|$ equals
    \begin{align*}
      \sum_{a=0}^{k-1}\binom{k-1}{a} \sum_{m=0}^{s-k}\binom{N-xk}{s-a-m}\sum_{t=0}^a(-1)^t\binom{a}{t}\binom{(a-t)x}{m+a}.
    \end{align*}
By combining all of this
with Lemma~\ref{lem:fi} we finally obtain
\begin{align*}
\E[\tau_i(G^x)] = NH_N-\sum_{s=1}^{N-1}\frac{\binom{N}{s}-\binom{N-x}{s}}{\binom{N-1}{s}}-\sum_{s=k}^{N-1}\frac{|\mS|}{\binom{N-1}{s}},
\end{align*}
which does not depend on the coordinate $i$ and therefore, after simplifying, gives the statement of the proposition.
\end{proof}

Unfortunately, the formula in Theorem~\ref{prop:extmds} does not appear to be easy to evaluate explicitly. However, experimental results indicate that already for $x=2$ (and any MDS code) the random access expectation is strictly smaller than $k$. In Figure~\ref{fig:compi} we give an example of how the  expectation (normalized by~$k$) of the code obtained by concatenating $x-1$ identity matrices changes, depending on the value of $x$.

%\vspace{-0.5cm}
\begin{figure}[ht!]
\centering
\begin{tikzpicture}[scale=1]
\begin{axis}[legend style={at={(0.14,1)}, legend style={cells={align=left}}, anchor = north west, /tikz/column 2/.style={
                column sep=5pt}},
		legend cell align={left},
		width=15cm,height=9cm,
    xlabel={Values of $x$},
    xmin=1, xmax=40,
    ymin=0.87, ymax=1,
    xtick={1,5,10,15,20, 25,30,35,40},
    ytick={0.87,0.88,0.90,0.93,0.95,0.98,1},
    ymajorgrids=true,
    grid style=dashed,
     every axis plot/.append style={thick},  yticklabel style={/pgf/number format/fixed}
]
\addplot+[color=blue,mark=o,mark size=1pt,smooth]
coordinates {
(1,1.00000000000000000000000000000)
(2,0.933233433233433233433233433233)
(3,0.918296100184645076285943159007)
(4,0.915570407965077503373370919473)
(5,0.916881018277688363697990463555)
(6,0.919678919913444491762543362929)
(7,0.922991889036031650632669308391)
(8,0.926407253922990399285300448378)
(9,0.929741836208971959503781503563)
(10,0.932915696478146469112579119548)
(11,0.935898062150490532310023612070)
(12,0.938682249300216197980653338709)
(13,0.941273360572882817344580308240)
(14,0.943682017999798477162687416187)
(15,0.945921100146755077442832547453)
(16,0.948004034856083357225880483119)
(17,0.949943920549340946103916820008)
(18,0.951753096943807086308103845457)
(19,0.953442961423503093318476812609)
(20,0.955023919040230092159924537335)
(21,0.956505403566847504292497656537)
(22,0.957895934334727709184827966149)
(23,0.959203188962970152770601220256)
(24,0.960434080868200700668648691426)
(25,0.961594835505626422833247685096)
(26,0.962691062220706418611148576102)
(27,0.963727820280027804169630595195)
(28,0.964709678612279093949568085939)
(29,0.965640769323929522054826471767)
(30,0.966524835335110017383617326448)
(31,0.967365272613745297849437686393)
(32,0.968165167532715544493391826000)
(33,0.968927329873222621445159877027)
(34,0.969654321970053525568957677286)
(35,0.970348484454426583583833029102)
(36,0.971011959005399803605485944777)
(37,0.971646708475880578678899019801)
(38,0.972254534716528813638631796656)
(39,0.972837094381493111255445397752)
(40,0.973395912964437445201120220970)
    };
\addplot+[color=red,mark=o,mark size=1pt,smooth]
coordinates {
(1,1.00000000000000000000000000000)
(2,0.935978709972518022053625768796)
(3,0.913337503793823621525869473103)
(4,0.904382811826148981100916897724)
(5,0.901194053261328322489442718303)
(6,0.900765327460623575674010109245)
(7,0.901754317762166439881667487425)
(8,0.903494429080778780515176506302)
(9,0.905629361625693966766165177914)
(10,0.907958561955486445230272074201)
(11,0.910365094362792450246477180847)
(12,0.912779317185738807421843266335)
(13,0.915159447235973381633512654502)
(14,0.917480635793917135767666786204)
(15,0.919728575468554071111831578745)
(16,0.921895631925200416977749440877)
(17,0.923978438195513070351426188096)
(18,0.925976364624881968247936542518)
(19,0.927890528033247509060100976067)
(20,0.929723141010446487973244160006)
(21,0.931477080216684432806120720388)
(22,0.933155598170507970966314888314)
(23,0.934762130428230063347311526290)
(24,0.936300166945720808401412465891)
(25,0.937773167038028204418446297641)
(26,0.939184504165666916187191365655)
(27,0.940537431220794269769907303635)
(28,0.941835059929944356519789730539)
(29,0.943080349965942236714624710711)
(30,0.944276104704109178055021976970)
(31,0.945424971479706782424904533060)
(32,0.946529444842479186263729266022)
(33,0.947591871750562721886465610526)
(34,0.948614457960095712622707404064)
(35,0.949599275089079050343086892436)
(36,0.950548267991982301645484259854)
(37,0.951463262194219373581439323119)
(38,0.952345971216089472850636467283)
(39,0.953198003673300822495831058615)
(40,0.954020870082259119994107713752)
};
\addplot+[color=orange,mark=o,mark size=1pt,smooth]
coordinates {
(1,1.00000000000000000000000000000)
(2,0.942715426091385170669058137088)
(3,0.917267806323278587146653113669)
(4,0.904714402826546881853693643045)
(5,0.898385265696037554690271675431)
(6,0.895394441295338053545577185151)
(7,0.894320339738261249170450342988)
(8,0.894397283607533468792410702827)
(9,0.895185093988202170626699969043)
(10,0.896417904113350859446327837725)
(11,0.897928873873512564378295750171)
(12,0.899610140495966703926511507925)
(13,0.901390348669499080410583563683)
(14,0.903221476751031282958706948798)
(15,0.905070821692452348549606491589)
(16,0.906915965876868717931910464126)
(17,0.908741529051696767352666570069)
(18,0.910537021965000853810341547645)
(19,0.912295398383381926551631116940)
(20,0.914012060431759312810364319972)
(21,0.915684164451165011245552666709)
(22,0.917310129860674395530651356331)
(23,0.918889287480601310703774135310)
(24,0.920421625121213419905201364852)
(25,0.921907601932591657969202438114)
(26,0.923348011957761913301885774453)
(27,0.924743883277911970953047789683)
(28,0.926096403153671507499739917226)
(29,0.927406862316818032822004760972)
(30,0.928676613475978604358266600429)
(31,0.929907040441532393651375149917)
(32,0.931099535228437652482662894727)
(33,0.932255481180503293816358261254)
(34,0.933376240656201561896264308614)
(35,0.934463146179431776997350694927)
(36,0.935517494226654773940471334704)
(37,0.936540541021029070935017432258)
(38,0.937533499853291589622492555856)
(39,0.938497539561456212745734920994)
(40,0.939433783886524670981602760450)
};
\addplot+[color=green,mark=o,mark size=1pt,smooth]
coordinates {
(1,1.00000000000000000000000000000)
(2,0.949012207183121725850361532520)
(3,0.922899592829523499861954922850)
(4,0.908342788798196767836236526939)
(5,0.899915930437112407269766280028)
(6,0.895033537302291740856553601328)
(7,0.892326401234084889255104550285)
(8,0.891012336710355496247790724197)
(9,0.890618320669405758104464620157)
(10,0.890845345480105301426765216826)
(11,0.891497613471532199564837204031)
(12,0.892443188892723587774196649507)
(13,0.893591058124136638360835881248)
(14,0.894877213044396853452338813224)
(15,0.896255919648227795891166041466)
(16,0.897694079568526905127348633430)
(17,0.899167496488367059758108494490)
(18,0.900658348909613484124392787127)
(19,0.902153445864547986305702547159)
(20,0.903643001956158583638621659850)
(21,0.905119763652222604100772817220)
(22,0.906578377364226449606320117189)
(23,0.908014926628450283622352273686)
(24,0.909426589280899063045896809242)
(25,0.910811380912616540956168334373)
(26,0.912167961119785376948780843474)
(27,0.913495485966298598527888441606)
(28,0.914793494804217572557040059475)
(29,0.916061822879279608178989784646)
(30,0.917300533455338929598310145805)
(31,0.918509864832016098838507491711)
(32,0.919690188809132661646684724840)
(33,0.920841978007994844644536663563)
(34,0.921965780087583659375649877264)
(35,0.923062197358299970350564089844)
(36,0.924131870642515103451555995589)
(37,0.925175466491802357447040895418)
(38,0.926193667068158073856437429601)
(39,0.927187162147131048695428770266)
(40,0.928156642816427097465467971673)
};
\addplot+[color=purple,mark=o,mark size=1pt,smooth]
coordinates {
(1,1.00000000000000000000000000000)
(2,0.968267128362631839012355088718)
(3,0.946156521310231839091105895699)
(4,0.930241518018339728718174089934)
(5,0.918512762613550168372792294563)
(6,0.909721546279242171407680760493)
(7,0.903054907980618425307257394927)
(8,0.897963137082316800143856666904)
(9,0.894062695330802522490835966382)
(10,0.891078859033196965563488896200)
(11,0.888810408054843564732996210303)
(12,0.887107111491452649912513193247)
(13,0.885854931889842394986227632966)
(14,0.884966042807175785374860205753)
(15,0.884371936766199005305199097666)
(16,0.884018569011852544513017856608)
(17,0.883862873260899556155283587513)
(18,0.883870221077829136671399859436)
(19,0.884012542207037635114773656981)
(20,0.884266915546756384325442543390)
(21,0.884614500279364904171154427354)
(22,0.885039716202959260544202063241)
(23,0.885529608901134288829021009720)
(24,0.886073353573065463215395837884)
(25,0.886661863971030438743444223658)
(26,0.887287481780053438259848755261)
(27,0.887943728111549313659446189469)
(28,0.888625103355443894761705695088)
(29,0.889326924971101622847393713997)
(30,0.890045195256018362690654709322)
(31,0.890776492960670521497065129540)
(32,0.891517883991335485299317817431)
(33,0.892266847482398529901190421907)
(34,0.893021214312895366278609040041)
(35,0.893779115751687035621260075050)
(36,0.894538940387458968099764663907)
(37,0.895299297867233305742085249141)
(38,0.896058988255104884079586348131)
(39,0.896816976049961270744654583809)
(40,0.897572368080922736080897770025)
};
\addplot+[color=teal,mark=o,mark size=1pt,smooth, solid]
coordinates {
(1,1.00000000000000000000000000000)
(2,0.982221498921869966351642259494)
(3,0.967510945774038331465565721857)
(4,0.955215999017497626246411403665)
(5,0.944851528770537957660916669473)
(6,0.936050159494635275571741900302)
(7,0.928529154650707600291353364122)
(8,0.922067747522554695588519393506)
(9,0.916491306990092728301299382459)
(10,0.911660072218042514626869703856)
(11,0.907461002199844910766986472344)
(12,0.903801787643286835951385037845)
(13,0.900606389188033236255529960172)
(14,0.897811669693215638188422966051)
(15,0.895364821948346975222314383917)
(16,0.893221382309219737319925340351)
(17,0.891343681205009234765604562691)
(18,0.889699623065027736693971510158)
(19,0.888261717250959407322716783027)
(20,0.887006302114715717492540970138)
(21,0.885912919002004843135398396425)
(22,0.884963803666225082954425864384)
(23,0.884143470348378848125949512792)
(24,0.883438369539006598127123649872)
(25,0.882836604737501681955501320845)
(26,0.882327696761827865499521163933)
(27,0.881902386620353656293450878347)
(28,0.881552469839505791223945241992)
(29,0.881270656592320641970429841706)
(30,0.881050453100218559943467588101)
(31,0.880886060661723344744512963833)
(32,0.880772289355417153785254509020)
(33,0.880704484013497538609627224357)
(34,0.880678460499497038051661905664)
(35,0.880690450673745979320206504909)
(36,0.880737054711854028231492563168)
(37,0.880815199669324744775420265408)
(38,0.880922103370575049959066337761)
(39,0.881055242851789265776847595357)
(40,0.881212326710976555153456425081)
};
\legend{\tiny{rate $=1/2$}, \tiny{rate $=1/3$}, \tiny{rate $=1/4$}, \tiny{rate $=1/5$}, \tiny{rate $=1/10$}, \tiny{rate $=1/20$}
}
\end{axis}
\end{tikzpicture}
\caption{\label{fig:compi} Normalized random access coverage depth $\tmax(G^x)$ from Proposition~\ref{prop:fimds} for $k=5$ and various rates (where the rate is the ratio between the dimension $k=5$ and the $n$, the length of the MDS code that we started with).}
%\vspace{-0.3cm}
\end{figure}
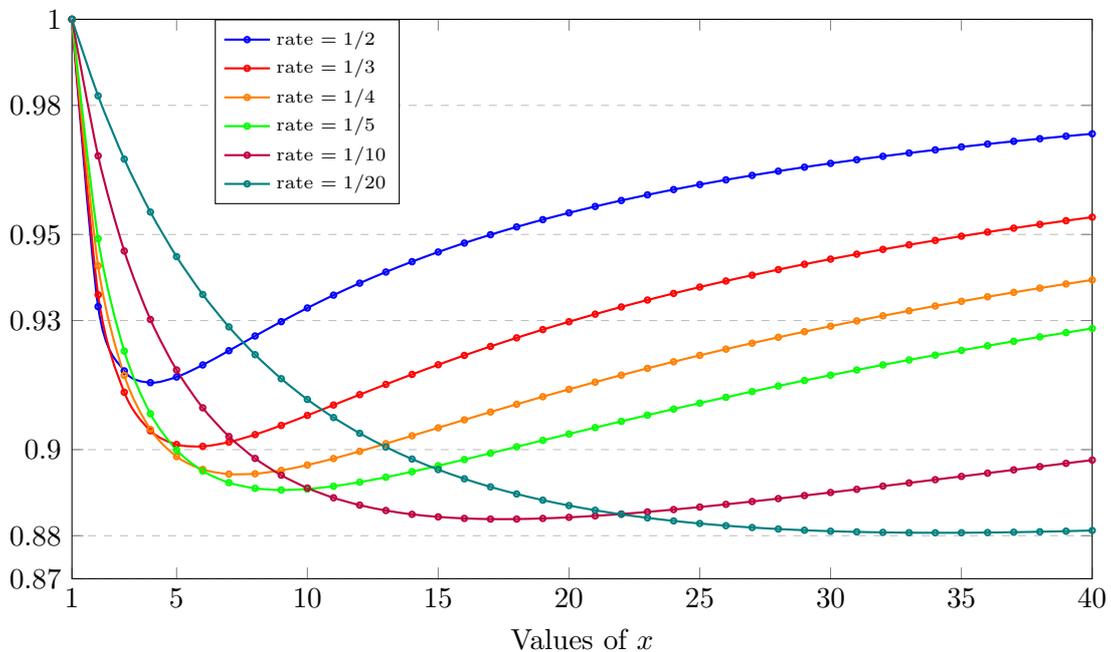

\begin{remark}
It is interesting to observe that for the case of the simple parity code, from experiments we believe that the optimal value of $x$ is 3. Even though the expression in Theorem~\ref{prop:extmds} simplifies substantially for that case, it is still not easy to formally prove that for any $k$ the optimal is attained for $x=3$. Moreover, whenever we consider MDS codes with redundancy larger that 1, there does not seem to be an optimal number for $x$, i.e., the lowest random access expecation is achieved for different values of $x$ (where we fix $n-k$ and consider different values of $n$ and $k$).
\end{remark}

\subsection{Simplex codes} \label{sec:extsim}

Let $G$ be the generator matrix of the $k$-dimensional simplex code in $\F_q^n$, and let $G^x$ be the same matrix with added $x-1$ copies of the identity $k \times k$ matrix. Note that for $x=1$ we have $G^x=G$. Then, \smash{$G^x \in \F_q^{k \times N}$}, with $N=n+(x-1)k$ and $n=(q^k-1)/(q-1)$. Recall that the columns of $G$ are all the projective points of $\F_q^k$.

\begin{theorem} \label{thm:extsim}
Let $G= (I_k \mid R) \in \F_q^{k \times n}$ be a systematic generator matrix of the 
$k$-dimensional simplex code in $\F_q^n$, $n=(q^k-1)/(q-1)$.
Let $N=xk+n-k$. Then, we have
\begin{multline*}
    \hspace{-2ex}T_{\textnormal{max}}(G^x) \hspace{-0.5ex}=\hspace{-0.5ex} NH_N\hspace{-0.25ex} \hspace{-0.25ex} \hspace{-0.25ex}-\hspace{-0.25ex}\sum_{s=1}^{N-1}\sum_{z=0}^k \sum_{\omega=0}^k\hspace{-0.25ex}
\binom{\frac{q^z-1}{q-1}\hspace{-0.25ex}+\hspace{-0.25ex}(x-1)\omega}{s}\hspace{-0.25ex} \left[ \eta(z,\omega) v_i^2(z) + \eta_i(z,\omega) \left( v_i^1(z) - v_i^2(z)\right)\right]\hspace{-0.25ex},
\end{multline*}
where:
\begin{align*}
    \eta(z,\omega) &=\binom{k}{\omega}\sum_{r=\omega}^k (-1)^{r-\omega} \binom{k-\omega}{r-\omega}\qbin{k-r}{z-r}{q}, \\
    \eta_i(z,\omega) &= \hspace{-0.25ex}\binom{k-1}{\omega-1}\sum_{r=\omega}^k \binom{k-\omega}{r-\omega}\qbin{k-r}{z-r}{q}(-1)^{r-\omega}  +\hspace{-0.25ex}\binom{k-1}{\omega}\hspace{-0.25ex}\sum_{r=\omega+1}^k \hspace{-0.25ex}\binom{k-\omega-1}{r-\omega-1}\qbin{k-r}{z-r}{q}(-1)^{r-\omega} \\
&\qquad + 
\binom{k-1}{\omega}\sum_{r=\omega}^{k-1} \binom{k-1-\omega}{r-\omega}\qbin{k-r-1}{z-r-1}{q}(-1)^{r-\omega},    \\
v_i^1(z) &= \sum_{h=z}^k (-1)^{h-z} q^{\binom{h-z}{2}} \qbin{k-z}{h-z}{q}, \quad \quad  v_i^2(z) = \sum_{h=z+1}^k (-1)^{h-z} q^{\binom{h-z}{2}} \qbin{k-z-1}{h-z-1}{q}.
\end{align*}
\end{theorem}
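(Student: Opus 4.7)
The strategy is to apply Lemma~\ref{lem:fi} to $G^x$, which reduces the problem to computing $\alpha_i(s)=|\{S\subseteq[N]:|S|=s,\,e_i\in\langle g_j:j\in S\rangle\}|$ for every $s$ and every $i\in[k]$. Observe first that in $G^x$ each standard basis vector $e_1,\dots,e_k$ appears exactly $x$ times, while each of the remaining $n-k$ projective points of $\F_q^k$ arises from a single column of $R$. By the symmetry of the construction across the $k$ coordinates, $\E[\tau_i(G^x)]$ does not depend on $i\in[k]$, so $\tmax(G^x)=\E[\tau_i(G^x)]$ for any fixed $i$.

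Following the blueprint of Proposition~\ref{prop:simpfi}, I would classify each subspace $U\le\F_q^k$ by the pair $(z,\omega)$ with $z=\dim U$ and $\omega(U):=|\{j\in[k]:e_j\in U\}|$. The number of columns of $G^x$ lying in $U$ is then $c_{z,\omega}=(q^z-1)/(q-1)+(x-1)\omega$: each of the $\omega$ basis vectors in $U$ contributes $x$ copies while the remaining $(q^z-1)/(q-1)-\omega$ projective points in $U$ contribute one column apiece. Hence the number of $s$-subsets of columns entirely contained in $U$ equals $\binom{c_{z,\omega}}{s}$, and Möbius inversion on the subspace lattice of $\F_q^k$ (exactly as in Proposition~\ref{prop:simpfi}) gives
\begin{align*}
|\{S\subseteq[N]:|S|=s,\,\langle g_j:j\in S\rangle=V\}|
= \sum_{U\le V}(-1)^{\dim V-\dim U}q^{\binom{\dim V-\dim U}{2}}\binom{c_{\dim U,\,\omega(U)}}{s}.
\end{align*}

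Summing over all $V$ with $e_i\in V$ and interchanging the order of summation, I would split the result according to whether $e_i\in U$ or $e_i\notin U$. In the first case $e_i\in V$ is automatic and the inner sum over $V\ge U$ of dimension $h$, weighted by $(-1)^{h-z}q^{\binom{h-z}{2}}$, collapses to $v_i^1(z)$ via the $q$-binomial count $\qbin{k-z}{h-z}{q}$. In the second case $V$ must additionally contain $\langle e_i\rangle$, so the analogous weighted sum uses $\qbin{k-z-1}{h-z-1}{q}$ and collapses to $v_i^2(z)$. Collecting terms gives
\begin{align*}
\alpha_i(s) = \sum_{z=0}^k\sum_{\omega=0}^k\binom{c_{z,\omega}}{s}\bigl[\eta(z,\omega)\,v_i^2(z) + \eta_i(z,\omega)\bigl(v_i^1(z)-v_i^2(z)\bigr)\bigr],
\end{align*}
where $\eta_i(z,\omega)$ is the number of $z$-dimensional subspaces $U$ with $e_i\in U$ and $\omega(U)=\omega$, and $\eta(z,\omega)$ is the analogous count without the $e_i$-constraint (so that the coefficient of $v_i^2(z)$ is precisely $\eta-\eta_i$, matching the case $e_i\notin U$).

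The closed-form expressions for $\eta(z,\omega)$ and $\eta_i(z,\omega)$ in the statement follow from a second inclusion–exclusion, now on the Boolean lattice of subsets of $[k]$: the number of $z$-dimensional subspaces containing a prescribed set $\{e_j:j\in T\}$ with $|T|=r$ is $\qbin{k-r}{z-r}{q}$, and Möbius inversion on the Boolean lattice extracts those $U$ for which $\omega(U)$ equals a prescribed value; the prefactor $\binom{k}{\omega}$ or $\binom{k-1}{\omega-1}$ records whether $e_i\in U$ is additionally imposed. Substituting into Lemma~\ref{lem:fi} and dividing each summand by $\binom{N-1}{s}$ then yields the stated formula. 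The main obstacle is keeping the two nested inversions cleanly aligned—one on the subspace lattice (producing $v_i^1$ and $v_i^2$) and one on the Boolean lattice (producing $\eta$ and $\eta_i$)—with the $e_i$-constraint threaded consistently through both; the remaining simplifications are routine $q$-binomial manipulations.
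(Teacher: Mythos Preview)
Your proposal is correct and follows essentially the same approach as the paper: both apply Lemma~\ref{lem:fi}, perform M\"obius inversion on the subspace lattice of $\F_q^k$ (using that the number of columns of $G^x$ inside a subspace $U$ depends only on $(\dim U,\omega(U))$), interchange the order of summation, split according to whether $e_i\in U$, and then handle the counts $\eta(z,\omega)$, $\eta_i(z,\omega)$ via inclusion--exclusion on the Boolean lattice of subsets of $[k]$ (the paper packages these last steps as Lemmas~\ref{lemm1}--\ref{lemm3} in the appendix). The only place where your sketch is thinner than the paper is the derivation of $\eta_i(z,\omega)$: you mention a single prefactor $\binom{k-1}{\omega-1}$, but the three-term expression in the statement arises because in the Boolean M\"obius inversion one must also track, for each superset $R$, whether $i\in R$ (see the proof of Lemma~\ref{lemm2}); this is bookkeeping rather than a new idea.
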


In the remainder of this subsection, we give a proof for Theorem~\ref{thm:extsim}. Note that it suffices to give a formula for 
\begin{align*}
\alpha_i(s) := {|\{S \subseteq [N] : |S| = s, \, e_i \in \langle g_j : j \in S\rangle \}|}
\end{align*}
for all $i \in [k]$ and all $s \in \{1, \ldots, N-1\}$
and then apply Lemma~\ref{lem:fi}.

We split the proof into four parts, the first three of which are 
independent lemmas. The three lemmas and their proofs can be found in Appendix~\ref{sec:apen}.
In the sequel (and also in Appendix~\ref{sec:apen}), for a subspace $V \le \F_q^k$ we let $I(V)=\{t : e_t \in V\}$ and $\omega(V)=|I(V)|$.

\begin{proof}[Proof of Theorem~\ref{thm:extsim}]
We fix $i$ and $s$ throughout the proof.  For a subspace $V \le \F_q^k$, let
 $f(V)=|\{S \subseteq [N] \, : \, |S|=s, \, \langle g_j : j \in S \rangle =V\}|$, where the $g_j$'s are columns of $G^x$.
 Note that 
 \begin{equation} \label{aha}
     \alpha_i(s)=\sum_{\substack{V \le \F_q^k \\ e_i \in V}}f(V).
 \end{equation}
 Define $g(V)=\sum_{U \le V}f(V)$ and note that
 $$g(V) = |\{S \subseteq [N] \, : \, |S|=s, \, g_j \in V \mbox{ for all $j \in S$}\}|.$$
By letting $\omega=\omega(V)$ and $z=\dim(V)$, we then have
 \begin{equation}
     g(V)= \binom{\frac{q^z-1}{q-1}+(x-1)\omega}{s}.
 \end{equation}
 This follows from the fact that $G^x$ has as columns all the projective points of $\F_q^k$, in addition to $x-1$ copies of the identity matrix.
We now have, by Möbius Inversion in the lattice of subspaces of $\F_q^k$,
 $$\alpha_i(s)=\sum_{\substack{V \le \F_q^n \\ e_i \in V}} \sum_{U \le V} \mu(U,V) g(U),$$
 where $\mu$ is the Möbius function of the lattice.
 We exchange the order of summation and split over dimensions and value of $\omega(U)$, obtaining 
 \begin{align*}
     \alpha_i(s) &= \sum_{z=0}^k \sum_{\omega=0}^k \sum_{\substack{U \le \F_q^k \\ \dim(U)=z\\ \omega(U)=\omega}} \binom{\frac{q^z-1}{q-1}+(x-1)\omega}{s} \sum_{h=z}^k (-1)^{h-z} q^{\binom{h-z}{2}} v_i(h,U).
 \end{align*}
 Lastly, we apply Lemmas~\ref{lemm1}, \ref{lemm2}, and \ref{lemm3} from Appendix~\ref{sec:apen} and obtain that
  \begin{align*}
     \alpha_i(s) &= \sum_{z=0}^k \sum_{\omega=0}^k \sum_{\substack{U \le \F_q^k \\ \dim(U)=z\\ \omega(U)=\omega \\ e_i \in U}} \binom{\frac{q^z-1}{q-1}+(x-1)\omega}{s} \sum_{h=z}^k (-1)^{h-z} q^{\binom{h-z}{2}}\qbin{k-z}{h-z}{q} \\
     &+ \sum_{z=0}^k \sum_{\omega=0}^k \sum_{\substack{U \le \F_q^k \\ \dim(U)=z\\ \omega(U)=\omega \\ e_i \notin U}} \binom{\frac{q^z-1}{q-1}+(x-1)\omega}{s} \sum_{h=z+1}^k (-1)^{h-z} q^{\binom{h-z}{2}} \qbin{k-z-1}{h-z-1}{q} \\
     &= \sum_{z=0}^k \sum_{\omega=0}^k \binom{\frac{q^z-1}{q-1}+(x-1)\omega}{s} \eta_i(z,\omega)\sum_{h=z}^k (-1)^{h-z} q^{\binom{h-z}{2}} \qbin{k-z}{h-z}{q} \\
     &+
     \sum_{z=0}^k \sum_{\omega=0}^k \binom{\frac{q^z-1}{q-1}+(x-1)\omega}{s} (\eta(z,\omega) - \eta_i(z,\omega))\sum_{h=z+1}^k (-1)^{h-z} q^{\binom{h-z}{2}}  \qbin{k-z-1}{h-z-1}{q}. 
 \end{align*}
Rearranging the terms one gets the desired expression. 
\end{proof}

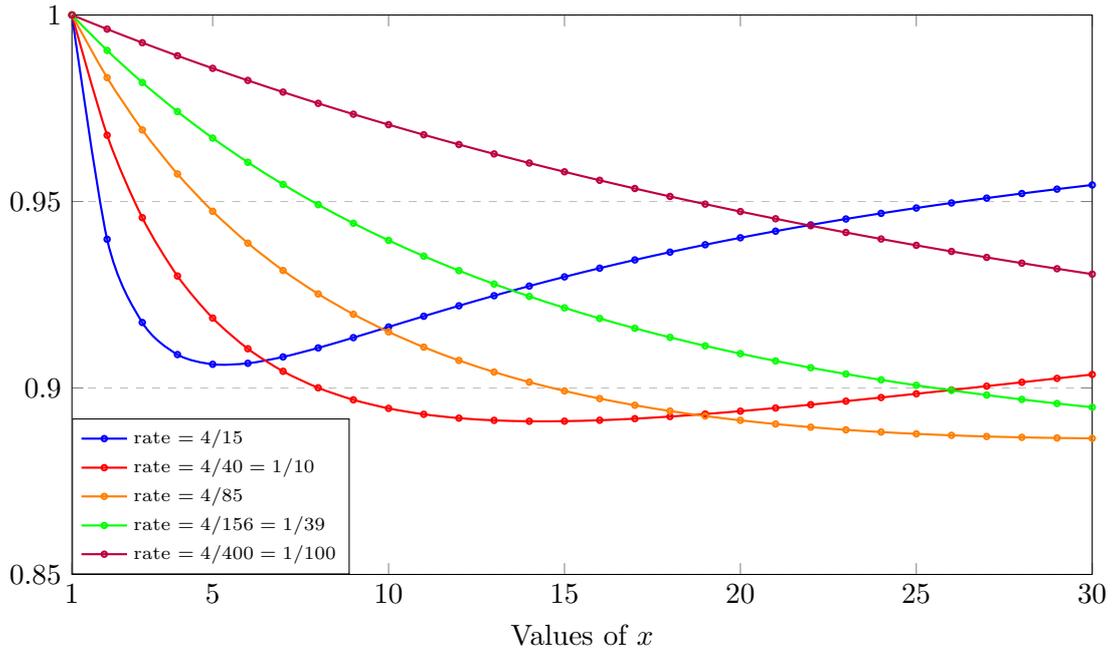
\begin{figure}[h]
\centering
\begin{tikzpicture}[scale=1]
\begin{axis}[legend style={at={(0.00,0.00)}, anchor = south west, /tikz/column 2/.style={
                column sep=5pt}},
		legend cell align={left},
		width=15cm,height=9cm,
    xlabel={Values of $x$},
    xmin=1, xmax=30,
    ymin=0.85, ymax=1,
    xtick={1,5,10,15,20,25,30},
    ytick={0.85,0.90,0.95,1},
    ymajorgrids=true,
    grid style=dashed,
     every axis plot/.append style={thick},  yticklabel style={/pgf/number format/fixed}
]
\addplot+[color=blue,mark=o,mark size=1pt,smooth]
coordinates {
(1,1.00000000000000000000000000000)
(2,0.939872729394788218317630082336)
(3,0.917559523809523809523809523809)
(4,0.908960588644092480409615959488)
(5,0.906353021978021978021978021978)
(6,0.906620697368476052663370769024)
(7,0.908325501253132832080200501253)
(8,0.910744595442871304940270457512)
(9,0.913493915742914598749839024439)
(10,0.916362281011671255573694598085)
(11,0.919231672932330827067669172932)
(12,0.922036328301414691827124170705)
(13,0.924740652492668621700879765396)
(14,0.927326807927157847863098840922)
(15,0.929787515384941855530090824208)
(16,0.932121774040610543062945796027)
(17,0.934332268373015401877880655470)
(18,0.936423775923356971750553076120)
(19,0.938402181861208382544598951642)
(20,0.940273867992024758711837707551)
(21,0.942045336062823955298243811059)
(22,0.943722979206358304032806821654)
(23,0.945312947684271213682978388861)
(24,0.946821074825019054936816102578)
(25,0.948252841312434335690149643638)
(26,0.949613363712292534872708475503)
(27,0.950907398063630778225392624933)
(28,0.952139352552386719528685739566)
(29,0.953313305365390663044086264707)
(30,0.954433025189403206767761559403)
    };
\addplot+[color=red,mark=o,mark size=1pt,smooth]
coordinates {
(1,1.00000000000000000000000000000)
(2,0.967775641632830975538703287337)
(3,0.945658081298154192386554585688)
(4,0.930022918258212375859434682964)
(5,0.918748640265784257664580686513)
(6,0.910518944283601975650713802585)
(7,0.904478253085064230575066488379)
(8,0.900049040014073639272420718208)
(9,0.896829121164238968376506629745)
(10,0.894531024531024531024531024531)
(11,0.892944695106907150006679467421)
(12,0.891913734189836397023192757838)
(13,0.891319795822838419187303568643)
(14,0.891072066334399053953435025417)
(15,0.891100005213040487475280843093)
(16,0.891348232999775649503417198517)
(17,0.891772865261838939990104795142)
(18,0.892338840607487081424602435330)
(19,0.893017944630847856654308267211)
(20,0.893787329192223049710215929800)
(21,0.894628389575342474938917793314)
(22,0.895525903754492114544513022462)
(23,0.896467366034566001480202497415)
(24,0.897442466484991019487091822316)
(25,0.898442680881705271949174388199)
(26,0.899460945224482475386870009754)
(27,0.900491395562893818941469924160)
(28,0.901529158672015814872957730101)
(29,0.902570182625192716418477904283)
(30,0.903611098894616155078436313143)
};
\addplot+[color=orange,mark=o,mark size=1pt,smooth]
coordinates {
(1,1.00000000000000000000000000000)
(2,0.983209883742583812233163349404)
(3,0.969189454897686623137167358900)
(4,0.957384761157243268717922121879)
(5,0.947375380878375897894002618730)
(6,0.938837219004969490153280669748)
(7,0.931516908481376525668529460056)
(8,0.925213853441890502481124074371)
(9,0.919767409314823107926556202418)
(10,0.915047581903932113872891083156)
(11,0.910948179251379771554608096890)
(12,0.907381700619400911747573977834)
(13,0.904275473987219676874849288642)
(14,0.901568703195720408819659810892)
(15,0.899210186269162080468595839574)
(16,0.897156534804725264554806452421)
(17,0.895370771542646542646542646543)
(18,0.893821216289782952040986772688)
(19,0.892480593807224375311390050735)
(20,0.891325314085353659883734900628)
(21,0.890334887624177553367167843957)
(22,0.889491447273820014914079280246)
(23,0.888779354804276229297228598965)
(24,0.888184875313977939800765664352)
(25,0.887695906309248027514281384251)
(26,0.887301751114752682365442981211)
(27,0.886992928440335339109581944580)
(28,0.886761011599734495755039483525)
(29,0.886598492173927212735949380130)
(30,0.886498663926168070321851609277)
};
\addplot+[color=green,mark=o,mark size=1pt,smooth]
coordinates {
(1,1.00000000000000000000000000000)
(2,0.990510951365027326519858557456)
(3,0.981915429684131862891617759613)
(4,0.974108564720504812699130003055)
(5,0.967000615138909395679346288928)
(6,0.960514405585399500206802438040)
(7,0.954583256571997003064709589191)
(8,0.949149302123062282879197130311)
(9,0.944162114550131804953583776712)
(10,0.939577574022639690216855283027)
(11,0.935356934425042436351872016542)
(12,0.931466047506409943047312152043)
(13,0.927874715382373595832978825429)
(14,0.924556147657797023474590104757)
(15,0.921486504254202461322918937555)
(16,0.918644508783514123084309748623)
(17,0.916011120258275418892564136102)
(18,0.913569253254953132136111656743)
(19,0.911303538492158021666230441679)
(20,0.909200117256514000384617287654)
(21,0.907246464287114875189389831603)
(22,0.905431234677667372893461651259)
(23,0.903744131121879382493282359400)
(24,0.902175788449917623875100406552)
(25,0.900717672911068201468567535225)
(26,0.899361994072991515446758413510)
(27,0.898101627549196151495842695113)
(28,0.896930047047815392240364867494)
(29,0.895841264467769731726701856127)
(30,0.894829776961960745947970968114)
};
\addplot+[color=purple,mark=o,mark size=1pt,smooth]
coordinates {
(1,1.00000000000000000000000000000)
(2,0.996219255314008710787248605047)
(3,0.992581919104139518297864676918)
(4,0.989081063855459191499379384557)
(5,0.985710179474116998170199196171)
(6,0.982463143396264807818105424165)
(7,0.979334193152545755352274705147)
(8,0.976317901162265644466726194399)
(9,0.973409151554239835421862285876)
(10,0.970603118831666331420848249895)
(11,0.967895248216512495958750958574)
(12,0.965281237525079594079376644297)
(13,0.962757020440854543695399965976)
(14,0.960318751063671785970993221241)
(15,0.957962789625765223795062632606)
(16,0.955685689275644464324738440805)
(17,0.953484183840016205603350569087)
(18,0.951355176482309350513567718392)
(19,0.949295729183855969309908226372)
(20,0.947303052980521852309488790247)
(21,0.945374498893651573492866263597)
(22,0.943507549499665738183801452848)
(23,0.941699811087586136581863479512)
(24,0.939949006358224318295149538197)
(25,0.938252967622800700207515056458)
(26,0.936609630462409161246659286946)
(27,0.935017027813045656460521023929)
(28,0.933473284443913833043683521338)
(29,0.931976611799437239631669591344)
(30,0.930525303177874376627516565532)
};
% \addplot+[color=teal,mark=o,mark size=1pt,smooth]
% coordinates {
% (1, 1.0)

% };
% \addplot+[color=violet,mark=o,mark size=1pt,smooth]
% coordinates {
% (1, 1.0)

% };
\legend{\tiny{rate $=4/15$}, \tiny{rate $=4/40=1/10$}, \tiny{rate $=4/85$}, \tiny{rate $=4/156=1/39$}, \tiny{rate $=4/400=1/100$}}
\end{axis}
\end{tikzpicture}
\caption{\label{fig:comparison} Normalized random access coverage depth $\tmax(G^x)$ from Theorem~\ref{thm:extsim} for $k=3$ and various rates (where the rate is the ratio between the dimension $k=3$ and $(q^3-1)/(q-1)$ for different values of the underlying field size $q$).}
\end{figure}

While the plots in Figure~\ref{fig:compi} and Figure~\ref{fig:comparison} both show that the random access expectation decreases when identity matrices are appended to the generator matrix of MDS codes and simplex codes, respectively, and although the plots look similar, they are not comparable in general. The formula from Theorem~\ref{prop:extmds} can be applied to codes of any rate; however, assuming the MDS conjecture holds, such codes can only be constructed when $n \leq q+1$. On the other hand, the formula in Theorem~\ref{thm:extsim} can be applied to any dimension and any $q$, but the length will be fully determined by the choice of dimension and $q$. We believe both of these formulas are interesting in their own right for this reason.

\section{Discussion and Future Work} \label{sec:conclusion}
In this paper, we studied the random access coverage depth problem and specifically, we focused on the task of determining the values of $\tmax(G)$, $T(n,k)$, and $T(k)$. Our results give important steps towards understanding what structural properties of generating matrices result in random access expectation that is smaller than $k$. While the results presented in this paper significantly contribute to the study of the random access coverage depth problem, several interesting questions remain open, as listed below.
\begin{itemize}
     \item A natural question inspired by our results is to better understand which codes are recovery balanced. Specifically: \begin{itemize} 
     \item We conjecture that the property of being recovery balanced is closed under duality and that there are only a few other families of recovery balanced codes beyond those presented here. 
     \item While we have presented several sufficient conditions for a code to be recovery balanced, finding a useful condition that is both necessary and sufficient remains unsolved. 
     \item Understanding if, and under which conditions, additional code operations preserve the property of being recovery balanced is also of great interest. 
    \end{itemize}
    \item In Remark~\ref{rem:avvg} we discussed the difference between systematic and non-systematic codes on the random access coverage depth. In particular, experiments and intuitive reasoning suggest that systematic generator matrices outperform non-systematic generator matrices for this problem. However, proving this property remains for future work. 
    \item In Section~\ref{sec:break} we demonstrated that using duplication of the systematic part (i.e., appending copies of the identity matrix to a generating matrix of a code) can disrupt the balance of a code in our favor and reduce the random access coverage depth expectation below $k$. While we have presented closed-form expressions for the expectation using this technique for several codes, a full characterization of performance using this technique remains unsolved. In particular, simulations suggest that there is an optimal number of identity matrices to append to minimize the expectation (e.g., three in the case of a simple parity code). However, proving such behavior and identifying this optimal number are both open problems that should be explored in future work. 
    \item Lastly, there remains an interesting gap between our best codes, in terms of minimizing the expected random access coverage depth, and the lower bounds presented in~\cite{bar2023cover}. Closing this gap is also an important task for future research. 
%An extended version of this paper is anticipated, where we aim to develop a more complete theory on recovery balanced codes, to construct other ``good’’ codes in terms of having small random access expectation, and to study the random access for retrieving a subset of the information strands, and not only a single strand.
\end{itemize}

\section*{Acknowledgement}
The authors thank Itzhak Tamo for helpful discussions, and in particular, for pointing out that codes with a transitive permutation automorphism group are recovery balanced. 
\bibliographystyle{ieeetr}
\bibliography{ourbib}
\clearpage

\begin{appendix}
\section{}\label{sec:apen}
\begin{proof}[Statements and proofs for the formulas used in Remark~\ref{rem:avvg}] We first prove the general average and then the average over all matrices in systematic form. Note that the formulas we compute do not depend on the index $i$, so for both statements we choose and fix an arbitrary $i \in [k]$. Recall that we let $\mG:=\{G \in \F_q^{k \times n}: \rk(G)=k\}$ and $\mG_{\operatorname{sys}} :=\{G \in \F_q^{k \times n}: \rk(G)=k, \; \textnormal{$G$ is systematic}\}$. We use the formula of Lemma~\ref{lem:fi} in both statements, and when we write $X \le Y$ for subspaces $X$ and $Y$, we mean that $X$ is an $\F_q$-subspace of $Y$.

\begin{claim}[General Average] \label{cl:genav}
    We have
    \begin{align*}
        \frac{\sum_{G \in \mG}\E[{\tau}_i(G)]}{|\mG|} = &\sum_{s=0}^{n-1}\frac{\binom{n}{s}}{\prod_{j=0}^{k-1}(q^n-q^j)\binom{n-1}{s}}\sum_{u=0}^s \left(\qbin{k}{u}{q}-\qbin{k-1}{u-1}{q}\right)\phi(u,s) \\
        &\cdot \sum_{v=k-u}^k\sum_{l=u}^k\qbin{k-u}{l-u}{q}(-1)^{k-l}q^{\binom{k-l}{2}} \qbin{l}{v}{q}\phi(v,n-s),
    \end{align*}
    where 
    \begin{align*}
    \phi(\ell,t)= \sum_{d=0}^{\ell}\sum_{\substack{D \le U \\ \dim(D)=d}}(-1)^{{\ell}-d}q^{\binom{{\ell}-d}{2}}q^{dt} = \sum_{d=0}^{\ell}\qbin{{\ell}}{d}{q}(-1)^{{\ell}-d}q^{\binom{{\ell}-d}{2}}q^{dt}.
\end{align*}
\end{claim}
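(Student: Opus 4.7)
The starting point is the identity
\[
\E[\tau_i(G)] = \sum_{s=0}^{n-1}\frac{\binom{n}{s}-\alpha_i(s)}{\binom{n-1}{s}},
\]
which appears within the proof of Lemma~\ref{lem:fi}. Here $\binom{n}{s}-\alpha_i(s)$ counts the $s$-subsets $S \subseteq [n]$ whose corresponding columns of $G$ do \emph{not} contain $e_i$ in their span. Exchanging the order of summation, the computation of the average reduces to evaluating $\sum_{G \in \mG}(\binom{n}{s}-\alpha_i(s))$ for each fixed $s$. Writing this inner sum as $\sum_{|S|=s}|\{G \in \mG : e_i \notin \langle g_j : j \in S\rangle\}|$ and using that column-permutation induces a bijection between the sets indexed by different $S$, it equals $\binom{n}{s}\cdot M_s$, where
\[
M_s := \bigl|\{G \in \mG : e_i \notin \langle g_1,\dots,g_s\rangle\}\bigr|.
\]
The denominator $|\mG| = \prod_{j=0}^{k-1}(q^n-q^j)$ is the classical count of full-rank $k \times n$ matrices over $\F_q$.

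To compute $M_s$, I would parameterize by $U = \langle g_1,\dots,g_s\rangle$ and $V = \langle g_{s+1},\dots,g_n\rangle$, with dimensions $u$ and $v$. A standard Möbius inversion on the subspace lattice of $\F_q^k$ shows that, for any fixed $u$-dimensional subspace $U$, the number of $s$-tuples in $U^s$ spanning exactly $U$ is the quantity $\phi(u,s)$ defined in the claim; analogously the second block contributes a factor $\phi(v,n-s)$. Since the condition ``$e_i \notin U$'' depends only on $U$, and the rank-$k$ condition amounts to $U+V = \F_q^k$ (forcing $v \geq k-u$), one obtains
\[
M_s = \sum_{u=0}^{s}\Bigl(\qbin{k}{u}{q}-\qbin{k-1}{u-1}{q}\Bigr)\phi(u,s)\sum_{v=k-u}^{k}F_u(v)\,\phi(v,n-s),
\]
where $\qbin{k}{u}{q}-\qbin{k-1}{u-1}{q}$ counts $u$-dimensional subspaces of $\F_q^k$ not containing $e_i$, and $F_u(v)$ is the number of $v$-dimensional subspaces $V$ of $\F_q^k$ with $U+V = \F_q^k$ (this depends only on $u$, by the $\mathrm{GL}_k$-symmetry acting transitively on $u$-dimensional subspaces).

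The crucial step is the evaluation of $F_u(v)$. For any $W$ with $U \le W \le \F_q^k$, the number of $v$-dimensional $V \le W$ equals $\qbin{\dim W}{v}{q}$, and the identity $V \le W \iff U+V \le W$ lets us view this as the cumulative sum (over the interval $[U,\F_q^k]$ of the subspace lattice) of the function $W \mapsto |\{V : \dim V = v,\, U+V=W\}|$. Möbius inversion in this interval, using the well-known formula $\mu(W,\F_q^k) = (-1)^{k-\dim W}q^{\binom{k-\dim W}{2}}$ together with the fact that the number of $W$ of dimension $l$ with $U \le W \le \F_q^k$ is $\qbin{k-u}{l-u}{q}$, yields
\[
F_u(v) = \sum_{l=u}^{k}\qbin{k-u}{l-u}{q}(-1)^{k-l}q^{\binom{k-l}{2}}\qbin{l}{v}{q}.
\]
Substituting back into $M_s$ and dividing by $|\mG|\binom{n-1}{s}$ reproduces exactly the formula in the claim. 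The main obstacle is setting up this Möbius inversion cleanly and verifying that the boundary cases ($s=0$, $u=0$, $v=k$) behave correctly under the conventions $\qbin{k-1}{-1}{q}=0$ and $\phi(0,0)=1$; beyond that, the argument is routine bookkeeping.
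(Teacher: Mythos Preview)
Your proposal is correct and follows essentially the same approach as the paper's proof: both reduce (via the identity from Lemma~\ref{lem:fi}) to counting, for each $s$, the rank-$k$ matrices $G$ for which $e_i$ is not in the span of a fixed $s$-subset of columns, then parameterize by the column spaces $U$ and $V$ of the two blocks and evaluate the required counts by Möbius inversion on the subspace lattice of $\F_q^k$. Your use of column-permutation symmetry to reduce to $S=\{1,\dots,s\}$ is a minor cosmetic streamlining of the paper's argument, which instead keeps a general $S$ and sums over all $\binom{n}{s}$ choices, but the substance is identical.
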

\begin{proof}
Denote by $g_j$ the $j$-th column of a matrix $G$.
We have
\begin{align*}
   \sum_{G \in \mS}\E[{\tau}_i(G)] &= \sum_{G \in \mG}\left(nH_n-\sum_{s=1}^{n-1}\frac{|\{S \subseteq [n] : |S|=s, \, e_i \in \langle g_j :j \in S \rangle\}|}{\binom{n-1}{s}}\right) \\
    &= |\mG| nH_n- \sum_{G \in \mG}\sum_{s=1}^{n-1}\frac{|\{S \subseteq [n] : |S|=s, \, e_i \in \langle g_j :j \in S \rangle\}|}{\binom{n-1}{s}}\\
    &= |\mG| nH_n- \sum_{s=1}^{n-1}\sum_{\substack{S \subseteq [n]\\ |S| = s}}\frac{|\{G \in \mG : e_i \in \langle g_j :j \in S \rangle\}|}{\binom{n-1}{s}}\\
    &= |\mG| nH_n- \sum_{s=1}^{n-1}\sum_{\substack{S \subseteq [n]\\ |S| = s}}\frac{|\mG|-|\{G \in \mG : e_i \notin \langle g_j :j \in S \rangle\}|}{\binom{n-1}{s}} \\
    &= |\mG| nH_n- \sum_{s=1}^{n-1}\frac{\binom{n}{s}|\mG|}{\binom{n-1}{s}}+\sum_{s=1}^{n-1}\sum_{\substack{S \subseteq [n]\\ |S| = s}}\frac{|\{G \in \mG : e_i \notin \langle g_j :j \in S \rangle\}|}{\binom{n-1}{s}}  \\
    &= |\mG| nH_n- |\mG| nH_{n-1}+\sum_{s=1}^{n-1}\sum_{\substack{S \subseteq [n]\\ |S| = s}}\frac{|\{G \in \mG : e_i \notin \langle g_j :j \in S \rangle\}|}{\binom{n-1}{s}} \\
    &= \sum_{s=0}^{n-1}\sum_{\substack{S \subseteq [n]\\ |S| = s}}\frac{|\{G \in \mG : e_i \notin \langle g_j :j \in S \rangle\}|}{\binom{n-1}{s}},
\end{align*}
where the second-to-last equality can be obtained similarly to what we did in the proof of Lemma~\ref{lem:fi}.
For a fixed set $S \subseteq [n]$ with $|S|=s$ let us denote by $\mS$ the set $\{G \in \mG : e_i \notin \langle g_j :j \in S \rangle\}$. For a matrix $M \in \F_q^{t \times s}$ we denote by $\textnormal{colsp}(M)$ the column-space of $M$, i.e., $\textnormal{colsp}(M):=\{Mv : v \in \F_q^s\} \le \F_q^t$. 
We have 
\begin{align*}
    |\mS| &= |\{G \in \mG : e_i \notin \langle g_j :j \in S \rangle\}| \\
    &=  \sum_{\substack{M \in \F_q^{k \times s} \\ e_i \notin \textnormal{colsp}(M)}}|\{N \in \F_q^{k \times (n-s)} : \textnormal{colsp}(M)+\textnormal{colsp}(N)=\F_q^k\}| \\
    &= \sum_{u=0}^s \sum_{\substack{U \le \F_q^k\\ \dim(U)=u \\ e_i \notin U}}\sum_{\substack{M \in \F_q^{k \times s} \\ \textnormal{colsp}(M)=U}}|\{N \in \F_q^{k \times (n-s)} : U+\textnormal{colsp}(N)=\F_q^k\}| \\
    &=\sum_{u=0}^s \sum_{\substack{U \le \F_q^k\\ \dim(U)=u \\ e_i \notin U}}\sum_{\substack{M \in \F_q^{k \times s} \\ \textnormal{colsp}(M)=U}}\sum_{\substack{V \le \F_q^k \\ U+V=\F_q^k}}|\{N \in \F_q^{k \times (n-s)} : \textnormal{colsp}(N)=V\}| \\
    &=\sum_{u=0}^s \sum_{\substack{U \le \F_q^k\\ \dim(U)=u \\ e_i \notin U}}\sum_{\substack{M \in \F_q^{k \times s} \\ \textnormal{colsp}(M)=U}}\sum_{v=k-u}^k\sum_{\substack{V \le \F_q^k \\ U+V=\F_q^k \\ \dim(V)=v}}|\{N \in \F_q^{k \times (n-s)} : \textnormal{colsp}(N)=V\}|.
\end{align*}
% Since $|\mG| = \prod_{j=0}^{k-1}(q^n-q^j)$ we have
% \begin{align*}
%     \sum_{G \in \mS}\E[{\tau}_i(G)]&= \sum_{t=0}^{n-1}\lambda(t)\sum_{\substack{A \subseteq [n]\\ |A| = t}}|\{G \in \mS : e_i \notin \langle G^j :j \in A \rangle\}| \\
%     &=  \sum_{t=0}^{n-1}\lambda(t)\sum_{\substack{A \subseteq [n]\\ |A| = t}}\sum_{\substack{M \in \F_q^{k \times t} \\ e_i \notin \textnormal{colsp}(M)}}|\{N \in \F_q^{k \times (n-t)} : \textnormal{colsp}(M)+\textnormal{colsp}(N)=\F_q^k\}| \\
%     &= \sum_{t=0}^{n-1}\lambda(t)\sum_{\substack{A \subseteq [n]\\ |A| = t}}\sum_{u=0}^t \sum_{\substack{U \le \F_q^k\\ \dim(U)=u \\ e_i \notin U}}\sum_{\substack{M \in \F_q^{k \times t} \\ \textnormal{colsp}(M)=U}}|\{N \in \F_q^{k \times (n-t)} : U+\textnormal{colsp}(N)=\F_q^k\}| \\
%     &=\sum_{t=0}^{n-1}\lambda(t)\sum_{\substack{A \subseteq [n]\\ |A| = t}}\sum_{u=0}^t \sum_{\substack{U \le \F_q^k\\ \dim(U)=u \\ e_i \notin U}}\sum_{\substack{M \in \F_q^{k \times t} \\ \textnormal{colsp}(M)=U}}\sum_{\substack{V \le \F_q^k \\ U+V=\F_q^k}}|\{N \in \F_q^{k \times (n-t)} : \textnormal{colsp}(N)=V\}| \\
%     &=\sum_{t=0}^{n-1}\lambda(t)\sum_{\substack{A \subseteq [n]\\ |A| = t}}\sum_{u=0}^t \sum_{\substack{U \le \F_q^k\\ \dim(U)=u \\ e_i \notin U}}\sum_{\substack{M \in \F_q^{k \times t} \\ \textnormal{colsp}(M)=U}}\sum_{v=k-u}^k\sum_{\substack{V \le \F_q^k \\ U+V=\F_q^k \\ \dim(V)=v}}|\{N \in \F_q^{k \times (n-t)} : \textnormal{colsp}(N)=V\}|.
% \end{align*}
To evaluate the above formula we need to count $|\{V \le \F_q^k : \dim(V)=v, U+V=\F_q^k\}|$ for some fixed $U \le \F_q^k$ of dimension $u$ and for $v \in \{k-u,\dots,k\}$. Let
\begin{align*}
    f(W)&=|\{V \le \F_q^k : \dim(V)=v, U+V=W\}|, \\
    g(W)&=\sum_{L \le W}f(L)=|\{V \le \F_q^k : \dim(V)=v, U+V\le W\}| \\
    &=|\{V \le \F_q^k : \dim(V)=v, U\le W, V \le W\}| \\
    &=\begin{cases}
        0 \quad \quad \quad \quad \quad \textnormal{if $U \nleq W$}, \\
        \qbin{\dim(W)}{v}{q} \quad \textnormal{otherwise.}
    \end{cases}
\end{align*}
This implies that 
$$f(W)=\sum_{U \le L \le W} (-1)^{\dim(W)-\dim(L)}q^{\binom{\dim(W)-\dim(L)}{2}} \qbin{\dim(L)}{v}{q}.$$
Therefore, by Möbius Inversion, we have
\begin{align*}
    f(\F_q^k)=|\{V \le \F_q^k : \dim(V)=v, U+V=\F_q^k\}| &= \sum_{l=u}^k\sum_{\substack{U \le L \le \F_q^k\\ \dim(L)=l}}(-1)^{k-l}q^{\binom{k-l}{2}} \qbin{l}{v}{q} \\
    &= \sum_{l=u}^k\qbin{k-u}{l-u}{q}(-1)^{k-l}q^{\binom{k-l}{2}} \qbin{l}{v}{q}.
\end{align*}
For a subspace $U \le \F_q^k$ we let 
\begin{align*}
    f(U)&=|\{M \in \F_q^{k \times (t-a)} :\textnormal{colsp}(M) = U\}|, \\
    g(U)&=\sum_{W \le U} f(W) \\
    &= |\{M \in \F_q^{k \times (t-a)} :\textnormal{colsp}(M) \le U\}|  \\
    &= q^{u(t-a)}.
\end{align*}
Then by Möbius Inversion we get
\begin{align*}
    \phi(u,t-a)=f(U) = \sum_{d=0}^u\sum_{\substack{D \le U \\ \dim(D)=d}}(-1)^{u-d}q^{\binom{u-d}{2}}q^{d(t-a)} = \sum_{d=0}^u\qbin{u}{d}{q}(-1)^{u-d}q^{\binom{u-d}{2}}q^{d(t-a)}.
\end{align*}
Therefore, we also have
\begin{align*}
    &|\{M \in \F_q^{k \times s} : \textnormal{colsp}(M)=U\}| = \sum_{d=0}^u\qbin{u}{d}{q}(-1)^{u-d}q^{\binom{u-d}{2}}q^{ds}, \\
    &|\{N \in \F_q^{k \times (n-s)} : \textnormal{colsp}(N)=V\}| = \sum_{r=0}^v\qbin{v}{r}{q}(-1)^{v-r}q^{\binom{v-r}{2}}q^{r(n-s)}.
\end{align*}
Finally, we have that $\sum_{G \in \mS}\E[{\tau}_i(G)]$ equals
\begin{align*}
    \sum_{G \in \mS}\E[{\tau}_i(G)] &= \sum_{s=0}^{n-1}\sum_{\substack{S \subseteq [n]\\ |S| = s}}|\mS|\\ &=\sum_{s=0}^{n-1}\sum_{\substack{S \subseteq [n]\\ |S| = s}}\sum_{u=0}^s \sum_{\substack{U \le \F_q^k\\ \dim(U)=u \\ e_i \notin U}}\sum_{\substack{M \in \F_q^{k \times s} \\ \textnormal{colsp}(M)=U}}\sum_{v=k-u}^k\sum_{\substack{V \le \F_q^k \\ U+V=\F_q^k \\ \dim(V)=v}}|\{N \in \F_q^{k \times (n-s)} : \textnormal{colsp}(N)=V\}| \\
    &=\sum_{s=0}^{n-1}\frac{\binom{n}{s}}{\binom{n-1}{s}}\sum_{u=0}^s \left(\qbin{k}{u}{q}-\qbin{k-1}{u-1}{q}\right)\sum_{d=0}^u\qbin{u}{d}{q}(-1)^{u-d}q^{\binom{u-d}{2}}q^{ds}\cdot \\
    &\sum_{v=k-u}^k\sum_{l=u}^k\qbin{k-u}{l-u}{q}(-1)^{k-l}q^{\binom{k-l}{2}} \qbin{l}{v}{q}\sum_{r=0}^v\qbin{v}{r}{q}(-1)^{v-r}q^{\binom{v-r}{2}}q^{r(n-s)}
\end{align*}
and dividing by the number of $k \times n$ matrices over $\F_q$, i.e., by $\prod_{j=0}^{k-1}(q^n-q^j)$, gives the average random access expectation over all generator matrices in the statement of the claim. 
\end{proof}

\begin{claim}[Systematic Average]
    We have
    \begin{align*}
        \frac{\sum_{G \in \mG_{\operatorname{sys}}}\E[{\tau}_i(G)]}{|\mG_{\operatorname{sys}}|} &= \sum_{s=0}^{n-1} \frac{1}{q^{k(n-k)}\binom{n-1}{s}} \sum_{a=0}^{k-1} \binom{k-1}{a} 
\binom{n-k}{s-a} q^{k(n-k-s+a)}\\ &\cdot \sum_{u=0}^{k-1} \sum_{v=0}^{k-1}\left(  \qbin{k-a}{v-a}{q}-\qbin{k-a-1}{v-a-1}{q}\right)\sum_{w=a}^v \qbin{v-a}{w-a}{q} (-1)^{v-w} q^{\binom{v-w}{2}}\qbin{w}{u}{q}\phi(u,s-a).
    \end{align*}
    where $ \phi(\ell,t)$ is defined as in Claim~\ref{cl:genav}.
\end{claim}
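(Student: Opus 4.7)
The strategy parallels the proof of Claim \ref{cl:genav}. Applying Lemma \ref{lem:fi} and the same manipulation used there (the $nH_n - nH_{n-1}$ telescoping combined with the trivial $s=0$ contribution), I would first rewrite
$$\sum_{G \in \mG_{\operatorname{sys}}} \E[\tau_i(G)] = \sum_{s=0}^{n-1} \sum_{\substack{S \subseteq [n] \\ |S|=s}} \frac{|\{G \in \mG_{\operatorname{sys}} : e_i \notin \langle g_j : j \in S\rangle\}|}{\binom{n-1}{s}}.$$
For systematic $G = (I_k \mid R)$ the first $k$ columns are $e_1, \ldots, e_k$ independently of $R$, so I would partition each $S$ as $S = A \cup B$ with $A = S \cap [k]$, $B = S \cap \{k+1, \ldots, n\}$, and $a = |A|$. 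When $i \in A$ the ``bad'' event is empty, and otherwise there are $\binom{k-1}{a}$ choices for $A \subseteq [k] \setminus \{i\}$ and $\binom{n-k}{s-a}$ for $B$. Writing $U := \langle e_j : j \in A\rangle$ (an $a$-dimensional coordinate subspace not containing $e_i$), the condition $e_i \notin \langle g_j : j \in S\rangle$ is equivalent to asking that the $s-a$ columns of $R$ indexed by $B$ form a tuple $(x_1, \ldots, x_{s-a}) \in (\F_q^k)^{s-a}$ with $e_i \notin U + \langle x_1, \ldots, x_{s-a}\rangle$, while the remaining $n-k-(s-a)$ columns of $R$ are unconstrained and contribute a factor $q^{k(n-k-s+a)}$.

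I would then count the constrained tuples by stratifying over $V := U + \langle x_1, \ldots, x_{s-a}\rangle$. The number of subspaces $V$ with $U \le V$, $\dim V = v$, and $e_i \notin V$ equals $\qbin{k-a}{v-a}{q} - \qbin{k-a-1}{v-a-1}{q}$, obtained by passing to the quotient $\F_q^k / U$ and subtracting those cosets containing $e_i + U$. For each such $V$, Möbius inversion in the interval $[U, V]$ of the subspace lattice, applied to $g(W) = q^{\dim(W)(s-a)}$ for $U \le W \le V$, gives
$$\bigl|\{(x_1,\ldots,x_{s-a}) \in V^{s-a} : U + \langle x_1,\ldots,x_{s-a}\rangle = V\}\bigr| = \sum_{w=a}^{v} \qbin{v-a}{w-a}{q} (-1)^{v-w} q^{\binom{v-w}{2}} q^{w(s-a)}.$$

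To bring this into the form stated in the claim I would finally expand $q^{w(s-a)} = \sum_{u=0}^{w} \qbin{w}{u}{q} \phi(u, s-a)$, which holds because $\phi(u,t)$ counts surjective linear maps $\F_q^t \twoheadrightarrow \F_q^u$, so stratifying arbitrary linear maps $\F_q^{s-a} \to \F_q^w$ by the dimension of their image yields the identity. Combining the factors $\binom{k-1}{a}$, $\binom{n-k}{s-a}$, $q^{k(n-k-s+a)}$, and the double Möbius sum, summing over $a$, $v$, $w$, $u$, and then dividing by $|\mG_{\operatorname{sys}}| = q^{k(n-k)}$ and $\binom{n-1}{s}$ yields the claimed expression. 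The main bookkeeping obstacle is performing the Möbius inversion on the shifted interval $[U, V]$ rather than on the full subspace lattice of $\F_q^k$, which forces the appearance of the shifted Gaussian binomials $\qbin{v-a}{w-a}{q}$ and $\qbin{k-a}{v-a}{q}$; the vanishing conventions $\qbin{k-a}{v-a}{q} = 0$ for $v < a$ and $\qbin{w}{u}{q} = 0$ for $u > w$ then allow the summation ranges to be extended to those written in the statement without altering the value.
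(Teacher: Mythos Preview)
Your proposal is correct and follows essentially the same approach as the paper: identical initial rewriting via Lemma~\ref{lem:fi}, identical partition $S=A\cup B$ with the free factor $q^{k(n-k-s+a)}$, and the same M\"obius inversion in the subspace lattice to produce the shifted Gaussian binomials. The only cosmetic difference is the order of stratification---the paper first fixes $U'=\textnormal{colsp}(M)$ (contributing $\phi(u,s-a)$) and then $V=U'+\langle e_j:j\in A\rangle$, whereas you fix $V$ first, invert on the interval $[\langle e_j:j\in A\rangle,V]$ to get $q^{w(s-a)}$, and then recover $\qbin{w}{u}{q}\phi(u,s-a)$ via the identity $q^{w(s-a)}=\sum_u\qbin{w}{u}{q}\phi(u,s-a)$; both routes yield the stated formula.
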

\begin{proof}
As in the proof of Claim~\ref{cl:genav} we can write
\begin{align*}
    \sum_{G \in \mG_{\operatorname{sys}}}\E[{\tau}_i(G)] &= \sum_{s=0}^{n}\sum_{\substack{S \subseteq [n]\\ |S| = s}}\frac{|\{G \in \mG_{\operatorname{sys}} : e_i \notin \langle g_j :j \in S \rangle\}|}{\binom{n-1}{s}} \\
    &= \sum_{s=0}^{n}\sum_{\substack{S \subseteq [n]\setminus \{i\}\\ |S| = s}}\frac{|\{G \in \mG_{\operatorname{sys}} : e_i \notin \langle g_j :j \in S \rangle\}|}{\binom{n-1}{s}} \\
    &= \sum_{s=0}^{n}\sum_{a=0}^{k-1} \sum_{\substack{A \subseteq [k] \setminus \{i\} \\ |A|=a}} 
\sum_{\substack{B \subseteq \{k+1, \ldots, n\} \\ |B|=s-a}} q^{k(n-k-s+a)}|\{M \in \F_q^{k \times (s-a)} :e_i \notin \mbox{colsp}(M) + \langle e_j :j \in A \rangle\}| \\
&= \sum_{s=0}^{n}\sum_{a=0}^{k-1} \binom{k-1}{a} 
\binom{n-k}{s-a} q^{k(n-k-s+a)}|\{M \in \F_q^{k \times (s-a)} :e_i \notin \mbox{colsp}(M) + \langle e_j :j \in A \rangle\}|.
\end{align*}
In order to compute $|\{M \in \F_q^{k \times (s-a)} :e_i \notin \mbox{colsp}(M) + \langle e_j :j \in A \rangle\}|$ we write
\begin{align*}
    &|\{M \in \F_q^{k \times (s-a)} :e_i \notin \mbox{colsp}(M) + \langle e_j :j \in A \rangle\}| \\
    &= \sum_{\substack{U \le \F_q^k \\ e_i \notin U}}|\{M \in \F_q^{k \times (s-a)} : \mbox{colsp}(M) + \langle e_j :j \in A \rangle = U\}| \\
    &= \sum_{u=0}^{k-1}\sum_{\substack{U \le \F_q^k \\ \dim(U)=u \\ e_i \notin U+\langle e_j :j \in A \rangle}}|\{M \in \F_q^{k \times (s-a)} : \mbox{colsp}(M)  = U\}| \\
    &= \sum_{u=0}^{k-1}\sum_{\substack{U \le \F_q^k \\ \dim(U)=u \\ e_i \notin U+\langle e_j :j \in A \rangle}}\phi(u,s-a) \\
    &= \sum_{u=0}^{k-1} |\{U \le \F_q^k :\dim(U)=u, e_i \notin U+\langle e_j :j \in A \rangle\}|\phi(u,s-a) 
\end{align*}
where $\phi(u,s-a)= |\{M \in \F_q^{k \times (s-a)} : \mbox{colsp}(M)  = U\}| $ for $U \le \F_q^k$ with $\dim(U) = u$ as in the statement and proof of Claim~\ref{cl:genav}. In order to compute $|\{U \le \F_q^k :\dim(U)=u, e_i \notin U+\langle e_j :j \in A \rangle\}|$ we write
\begin{multline*}
    |\{U \le \F_q^k :\dim(U)=u, e_i \notin U+\langle e_j :j \in A \rangle\}| = \\  \sum_{v=0}^{k-1}\sum_{\substack{V \le \F_q^k \\ e_i \notin V \\ \dim(V) = v}}|\{U \le \F_q^k :\dim(U)=u, U+\langle e_j :j \in A \rangle =V \}|
\end{multline*}
Then we use Möbius Inversion
\begin{align*}
    f(V)&= |\{U \le \F_q^k :\dim(U) = u, U+\langle e_j :j \in A \rangle = V \}|, \\
    g(V)&= \sum_{W \le V} f(W) \\
    &= |\{U \le \F_q^k :\dim(U) = u, U+\langle e_j :j \in A \rangle \le V \}| \\
    &= |\{U \le \F_q^k :\dim(U) = u, U \le V, \langle e_j :j \in A \rangle \le V \}| \\
    &= \begin{cases}
        0 \quad &\textnormal{if $\langle e_j :j \in A \rangle \nleq V$,} \\
        \qbin{v}{u}{q} \quad &\textnormal{otherwise.}
    \end{cases}
\end{align*}
We have
\begin{align*}
    f(V)= \sum_{W \le V} (-1)^{v-w} q^{\binom{v-w}{2}}g(W)=\sum_{\langle e_j :j \in A \rangle \le W \le V} (-1)^{v-w} q^{\binom{v-w}{2}}\qbin{w}{u}{q}.
\end{align*}
Therefore we obtain
\begin{multline*}
    |\{U \le \F_q^k :\dim(U)=u, e_i \notin U+\langle e_j :j \in A \rangle\}| = \\  \sum_{v=0}^{k-1}\sum_{\substack{V \le \F_q^k \\ e_i \notin V \\ \dim(V) = v}}\sum_{\langle e_j :j \in A \rangle \le W \le V} (-1)^{v-w} q^{\binom{v-w}{2}}\qbin{w}{u}{q}.
\end{multline*}
     Moreover we have
\begin{multline*}
    |\{V \le \F_q^k :e_i \notin V, \dim(V) = v, \langle e_j :j \in A \rangle \le V\}| =  \\ |\{V \le \F_q^k : \dim(V) = v, \langle e_j :j \in A \rangle \le V\}| - |\{V \le \F_q^k : \dim(V) = v, \langle e_j :j \in A \cup \{i\}\rangle \le V\}| = \\
    \qbin{k-a}{v-a}{q}-\qbin{k-a-1}{v-a-1}{q}.
\end{multline*}
From all of the above we get:
\begin{multline*}
    |\{M \in \F_q^{k \times (s-a)} :e_i \notin \mbox{colsp}(M) + \langle e_j :j \in A \rangle\}| = \\
    \sum_{u=0}^{k-1} |\{U \le \F_q^k :\dim(U)=u, e_i \notin U+\langle e_j :j \in A \rangle\}|\phi(u,s-a) = \\
    \sum_{u=0}^{k-1} \sum_{v=0}^{k-1}\sum_{\substack{V \le \F_q^k \\ e_i \notin V \\ \dim(V) = v}}\sum_{\langle e_j :j \in A \rangle \le W \le V} (-1)^{v-w} q^{\binom{v-w}{2}}\qbin{w}{u}{q}\phi(u,s-a) = \\
\sum_{u=0}^{k-1} \sum_{v=0}^{k-1}\left(  \qbin{k-a}{v-a}{q}-\qbin{k-a-1}{v-a-1}{q}\right)\sum_{\langle e_j :j \in A \rangle \le W \le V} (-1)^{v-w} q^{\binom{v-w}{2}}\qbin{w}{u}{q}\phi(u,s-a) = \\
\sum_{u=0}^{k-1} \sum_{v=0}^{k-1}\left(  \qbin{k-a}{v-a}{q}-\qbin{k-a-1}{v-a-1}{q}\right)\sum_{w=a}^v \qbin{v-a}{w-a}{q} (-1)^{v-w} q^{\binom{v-w}{2}}\qbin{w}{u}{q}\phi(u,s-a).
\end{multline*}
Our final average is
\begin{multline*}
      \sum_{s=0}^{n-1} \frac{1}{q^{k(n-k)}\binom{n-1}{s}} \sum_{a=0}^{k-1} \binom{k-1}{a} 
\binom{n-k}{s-a} q^{k(n-k-s+a)}\cdot \\\sum_{u=0}^{k-1} \sum_{v=0}^{k-1}\left(  \qbin{k-a}{v-a}{q}-\qbin{k-a-1}{v-a-1}{q}\right)\sum_{w=a}^v \qbin{v-a}{w-a}{q} (-1)^{v-w} q^{\binom{v-w}{2}}\qbin{v}{u}{q}\phi(u,s-a).
\end{multline*}
\end{proof}
\phantom\qedhere 
\end{proof}

The following three lemmas are used in Subsection~\ref{sec:extsim} as tools for proving the statement in Theorem~\ref{thm:extsim}. We use the notation introduced in Subsection~\ref{sec:extsim}.

\begin{lemma} \label{lemm1}
    Let $\eta(z,\omega)$ denote the number of subspaces $V \le \F_q^k$ with $\dim(V)=z$
    and $\omega(V)=\omega$. We have
    $$\eta(z,\omega)=\binom{k}{\omega}\sum_{r=\omega}^k (-1)^{r-\omega} \binom{k-\omega}{r-\omega}\qbin{k-r}{z-r}{q}.$$
\end{lemma}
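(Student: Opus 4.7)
The plan is to use inclusion--exclusion on the subset lattice of $[k]$ indexing which standard basis vectors are contained in $V$. First I would reduce the count by symmetry: for any fixed $I \subseteq [k]$ with $|I| = \omega$, let $f(I)$ denote the number of subspaces $V \le \F_q^k$ with $\dim(V) = z$ and $I(V) = I$. Since any coordinate permutation of $\F_q^k$ sends standard basis vectors to standard basis vectors and preserves dimension, $f(I)$ depends only on $|I|$, giving
$$\eta(z,\omega) = \binom{k}{\omega}\, f(I)$$
for any fixed $I$ with $|I| = \omega$.

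Next I would compute $f(I)$ by Möbius inversion on the boolean lattice. Define
$$g(J) = |\{V \le \F_q^k : \dim(V)=z,\ I(V) \supseteq J\}|,$$
so $g(J) = \sum_{I \supseteq J} f(I)$. The key observation is that $I(V) \supseteq J$ if and only if $V$ contains the subspace $\langle e_j : j \in J\rangle$, in which case $V$ corresponds bijectively to a $(z-|J|)$-dimensional subspace of the quotient $\F_q^k / \langle e_j : j \in J\rangle \cong \F_q^{k-|J|}$. Hence
$$g(J) = \qbin{k-|J|}{z-|J|}{q}.$$
By standard Möbius inversion on the subset lattice,
$$f(J) = \sum_{I \supseteq J} (-1)^{|I|-|J|} g(I) = \sum_{r=|J|}^{k} (-1)^{r-|J|} \binom{k-|J|}{r-|J|} \qbin{k-r}{z-r}{q},$$
where I grouped terms by $r = |I|$ and used that there are $\binom{k-|J|}{r-|J|}$ supersets of $J$ of size $r$.

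Combining with the symmetry reduction and setting $|J|=\omega$ yields the claimed formula. The argument is entirely structural; the only subtle point is the bijection between subspaces of $\F_q^k$ containing $\langle e_j : j \in J\rangle$ and subspaces of the quotient, but this is standard. There is no real obstacle here; the main thing to be careful about is the convention $\qbin{a}{b}{q}=0$ when $b<0$ or $b>a$, which ensures the sum terminates correctly at $r=k$ and starts at $r=\omega$.
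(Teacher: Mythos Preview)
Your proof is correct and follows essentially the same approach as the paper: both define $f(I)$ (the paper's $a(L)$) and $g(J)$ (the paper's $b(L)$), compute $g(J)=\qbin{k-|J|}{z-|J|}{q}$, apply Möbius inversion on the Boolean lattice, and then sum over the $\binom{k}{\omega}$ choices of $I$. Your explicit symmetry justification and the quotient-space description of $g(J)$ are minor elaborations the paper leaves implicit, but the argument is the same.
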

\begin{proof}
For a subset $L \subseteq \{0, \ldots k\}$, let
$a(L)=|\{V \le \F_q^k \mid \dim(V)=z, \, I(V)=L\}|$ and 
$b(L)=\sum_{R \supseteq L} a(R)$.
Note that $$b(L)=|\{V \le \F_q^k \mid \dim(V)=z, \, e_t \in V \mbox{ for all } t \in L\}|=\qbin{k-|L|}{z-|L|}{q}$$
for all $L$. Therefore by Möbius Inversion in the Boolean algebra over the set $\{1, \ldots, k\}$ we have 
$$a(L)=\sum_{R \supseteq L} b(R)(-1)^{|R|-|L|}
=\sum_{r=|L|}^k \binom{k-|L|}{r-|L|}\qbin{k-r}{z-r}{q}(-1)^{r-|L|}$$
for all $L$.
Then, by definition,
$$\eta(z,\omega)= \binom{k}{\omega}\sum_{r=\omega}^k \binom{k-\omega}{r-\omega}\qbin{k-r}{z-r}{q}(-1)^{r-\omega},$$
as the lemma claims.
\end{proof}

\begin{lemma}\label{lemm2}
    Let $\eta_i(z,\omega)$ denote the number of subspaces $V \le \F_q^k$ with $\dim(V)=z$,  $\omega(V)=\omega$, and $e_i \in V$. We have    \begin{align*}\eta_i(z,\omega)&= \binom{k-1}{\omega-1}\sum_{r=\omega}^k \binom{k-\omega}{r-\omega}\qbin{k-r}{z-r}{q}(-1)^{r-\omega} \\
&
+\binom{k-1}{\omega}\sum_{r=\omega+1}^k \binom{k-\omega-1}{r-\omega-1}\qbin{k-r}{z-r}{q}(-1)^{r-\omega} \\
&+ 
\binom{k-1}{\omega}\sum_{r=\omega}^{k-1} \binom{k-1-\omega}{r-\omega}\qbin{k-r-1}{z-r-1}{q}(-1)^{r-\omega}.
\end{align*}
Note that $\eta_i(z,\omega)$ does not depend on $i$.
\end{lemma}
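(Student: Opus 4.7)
The plan is to mirror the Möbius inversion argument from the proof of Lemma~\ref{lemm1}, now restricted to subspaces $V \le \F_q^k$ containing $e_i$. The starting observation is that, by the very definition of $I(V)$, the condition $e_i \in V$ is equivalent to $i \in I(V)$. Consequently, partitioning by $L = I(V)$ yields
\[
\eta_i(z,\omega) \;=\; \sum_{\substack{L \subseteq [k] \\ |L|=\omega,\; i \in L}} \big|\{V \le \F_q^k : \dim V = z,\; I(V)=L\}\big|.
\]

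Next, I would reuse the Möbius inversion in the Boolean lattice of subsets of $[k]$ verbatim from the proof of Lemma~\ref{lemm1}: setting $a(L)=|\{V : \dim V=z,\; I(V)=L\}|$ and $b(L)=\sum_{R \supseteq L} a(R) = \qbin{k-|L|}{z-|L|}{q}$, inversion gives
\[
a(L) \;=\; \sum_{r=|L|}^{k}(-1)^{r-|L|}\binom{k-|L|}{r-|L|}\qbin{k-r}{z-r}{q},
\]
which depends only on $|L|$. Since there are exactly $\binom{k-1}{\omega-1}$ subsets $L \subseteq [k]$ of size $\omega$ containing $i$, summing produces precisely the first of the three summands in the statement. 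The independence of $\eta_i(z,\omega)$ from the chosen index $i$ is then immediate, since $a(L)$ depends only on $|L|$ and the prefactor $\binom{k-1}{\omega-1}$ does not involve $i$ either.

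It remains to verify that the second and third summands in the displayed formula combine to zero, so that the stated expression agrees with the count just obtained. This reduces to a short reindexing: substituting $r \mapsto r+1$ in the third summand rewrites it as the negative of the second, since $\binom{k-1-\omega}{r-\omega}$ matches $\binom{k-\omega-1}{(r+1)-\omega-1}$ and $\qbin{k-r-1}{z-r-1}{q}$ matches $\qbin{k-(r+1)}{z-(r+1)}{q}$, while the sign $(-1)^{r-\omega}$ picks up an extra factor of $-1$ under the shift. Hence these two summands telescope away and the lemma follows.

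The main (mild) obstacle is recognising this cancellation: the three-term shape of the statement initially suggests a genuine three-case decomposition of the subspaces being counted, but in fact the whole content of the lemma is the Möbius inversion inherited from Lemma~\ref{lemm1}, together with the bookkeeping observation that the last two summands sum to zero.
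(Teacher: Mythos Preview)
Your proof is correct. Both your argument and the paper's rely on M\"obius inversion in the Boolean lattice of subsets of $[k]$, but you organise the computation differently.

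The paper redefines $a(L)$ with the extra constraint $e_i\in V$ built in, so that $b(L)=\sum_{R\supseteq L}a(R)$ splits into two cases according to whether $i\in L$; inverting and then summing over all $L$ of size $\omega$ (both those containing $i$ and those not) produces all three summands directly. Your route is more economical: you observe at the outset that $e_i\in V$ is equivalent to $i\in I(V)$, so only sets $L$ with $i\in L$ contribute, and for those the count $a(L)$ is exactly the one already computed in Lemma~\ref{lemm1}. This immediately yields the first summand, and you then verify by a reindexing that the remaining two summands cancel. In effect, the paper \emph{derives} the three-term shape (the last two terms being the M\"obius-inverted expression for the zero quantity $a(L)$ with $i\notin L$), whereas you recognise that shape as ``first term plus a complicated zero'' and confirm the cancellation algebraically. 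Your approach is shorter; the paper's explains the provenance of the stated formula.
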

\begin{proof}
We argue as in the proof of Lemma~\ref{lemm1}, but imposing the extra condition that $e_i \in V$. For $L \subseteq \{0, \ldots, k\}$, let
$a(L)=|\{V \le \F_q^k \mid \dim(V)=z, \, I(V)=L, \, e_i \in V\}|$ and 
$b(L)=\sum_{R \supseteq L} a(R)$.
Then, by definition, $$b(L)=|\{V \le \F_q^k : \dim(V)=z, \, e_t \in V \mbox{ for all } t \in L, \, e_i \in V\}|.$$
Note that
$$b(L) = \begin{cases}
    \qbin{k-|L|}{z-|L|}{q} & \mbox{if $i \in L$,} \\ 
    \qbin{k-|L|-1}{z-|L|-1}{q} & \mbox{if $i \notin L$.}    
\end{cases}$$
Therefore, if $i \in L$ by Möbius Inversion in the Boolean algebra over the set $\{0, \ldots, k\}$ we  have
$$a(L)=\sum_{R \supseteq L} b(R)(-1)^{|R|-|L|}
=\sum_{r=|L|}^k \binom{k-|L|}{r-|L|}\qbin{k-r}{z-r}{q}(-1)^{r-|L|}.$$
If $i \notin L$, again by Möbius Inversion we have
\begin{align*}
a(L) &= \sum_{\substack{R \supseteq L \\ R \ni i}}
b(R) (-1)^{|R|-|L|} + \sum_{\substack{R \supseteq L \\ R \not\ni i}}
b(R) (-1)^{|R|-|L|} \\
&= \sum_{r=|L|+1}^k \binom{k-|L|-1}{r-|L|-1}\qbin{k-r}{z-r}{q}(-1)^{r-|L|}
+
\sum_{r=|L|}^{k-1} \binom{k-1-|L|}{r-|L|}\qbin{k-r-1}{z-r-1}{q}(-1)^{r-|L|}.
\end{align*}
Finally, by definition,
\begin{align*}
\eta_i(z,\omega)&= \binom{k-1}{\omega-1}\sum_{r=\omega}^k \binom{k-\omega}{r-\omega}\qbin{k-r}{z-r}{q}(-1)^{r-\omega} \\
&
+\binom{k-1}{\omega}\sum_{r=\omega+1}^k \binom{k-\omega-1}{r-\omega-1}\qbin{k-r}{z-r}{q}(-1)^{r-\omega} \\
&+ 
\binom{k-1}{\omega}\sum_{r=\omega}^{k-1} \binom{k-1-\omega}{r-\omega}\qbin{k-r-1}{z-r-1}{q}(-1)^{r-\omega},
\end{align*}
as claimed.
\end{proof}

In the proof of Theorem~\ref{thm:extsim} we will also need the following preliminary observation, for which the proof is omitted.
\begin{lemma}\label{lemm3}
For a subspace $U \le \F_q^k$ of dimension $z$, 
let $v_i(h,U)$ be the number of subspaces $V \le \F_q^k$ of dimension $h$ with $e_i \in V$ and $V \supseteq U$.
We have
$$v_i(h,U)= \begin{cases} \displaystyle
     \qbin{k-z}{h-z}{q} & \mbox{if $e_i \in U$,} \\ \displaystyle
      \qbin{k-z-1}{h-z-1}{q} & \mbox{if $e_i \notin U$.}
 \end{cases}$$
\end{lemma}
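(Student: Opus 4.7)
The plan is to prove Lemma~\ref{lemm3} by splitting into the two cases $e_i \in U$ and $e_i \notin U$, and in each case reducing the count to a standard enumeration of subspaces containing a fixed subspace, which is governed by the $q$-binomial coefficient in a quotient space.

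First, I would recall the standard fact: for any subspace $W \le \F_q^k$ of dimension $d$, the number of subspaces $V \le \F_q^k$ of dimension $h$ with $V \supseteq W$ equals the number of $(h-d)$-dimensional subspaces of the quotient $\F_q^k/W \cong \F_q^{k-d}$, which is $\qbin{k-d}{h-d}{q}$. (If $h < d$ the count is zero, consistent with the convention $\qbin{a}{b}{q} = 0$ for $b < 0$.)

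For the first case, suppose $e_i \in U$. Then the condition $e_i \in V$ is automatically implied by $V \supseteq U$, so $v_i(h,U)$ simply counts dimension-$h$ subspaces of $\F_q^k$ containing $U$. Applying the recalled fact with $W = U$ and $d = z$ gives $v_i(h,U) = \qbin{k-z}{h-z}{q}$, as claimed.

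For the second case, suppose $e_i \notin U$. Then $V \supseteq U$ together with $e_i \in V$ is equivalent to $V \supseteq U + \langle e_i \rangle$. Since $e_i \notin U$, the subspace $W := U + \langle e_i \rangle$ has dimension $z+1$. Applying the recalled fact with this $W$ yields $v_i(h,U) = \qbin{k-(z+1)}{h-(z+1)}{q} = \qbin{k-z-1}{h-z-1}{q}$, as claimed. There is no essential obstacle: the only point worth flagging is that one must verify the equivalence ``$V \supseteq U$ and $e_i \in V$'' $\iff$ ``$V \supseteq U + \langle e_i \rangle$'', which is immediate from linearity, and observe that the dimension of $U + \langle e_i \rangle$ jumps by exactly one precisely because $e_i \notin U$.
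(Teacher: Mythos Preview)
Your proof is correct. The paper actually omits the proof of this lemma entirely (stating only that ``the proof is omitted''), so there is nothing to compare against; your argument via the quotient correspondence $V \supseteq W \leftrightarrow V/W \le \F_q^k/W$ is the standard way to establish this count and would serve perfectly well as the missing proof.
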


\end{appendix}
\end{document}